\DeclareMathAlphabet{\pazocal}{OMS}{zplm}{m}{n}
\newcommand{\mb}{\mathbb}
\let\bbordermatrix\bordermatrix
\patchcmd{\bbordermatrix}{8.75}{4.75}{}{}
\patchcmd{\bbordermatrix}{\left(}{\left[}{}{}
\patchcmd{\bbordermatrix}{\right)}{\right]}{}{}
\newcommand{\sr}{\stackrel}
\newcommand{\rar}{\rightarrow}
\newcommand{\tri}{\sr{\triangle}{=}}
\newcommand{\be}{\begin{equation}}
\newcommand{\ee}{\end{equation}}
\newcommand{\bea}{\begin{eqnarray}}
\newcommand{\eea}{\end{eqnarray}}
\newcommand{\bes}{\begin{eqnarray*}}
\newcommand{\ees}{\end{eqnarray*}}
\newcommand{\bce}{\begin{center}}
\newcommand{\ece}{\end{center}}
\newcommand{\beae}{\begin{IEEEeqnarray}{rCl}}
\newcommand{\eeae}{\end{IEEEeqnarray}}
\def\VR{\kern-\arraycolsep\strut\vrule &\kern-\arraycolsep}
\def\vr{\kern-\arraycolsep & \kern-\arraycolsep}
\newcommand{\ben}{\begin{enumerate}}
\newcommand{\een}{\end{enumerate}}
\newcommand{\hso}{\hspace{.1in}}
\newcommand{\hst}{\hspace{.2in}}
\newtheorem{theorem}{Theorem}[section]
\newtheorem{remark}{Remark}[section]
\newtheorem{corollary}{Corollary}[section]
\newtheorem{notation}{Notation}[section]
\newtheorem{definition}{Definition}[section]
\newtheorem{lemma}{Lemma}[section]
\newtheorem{example}{Example}[section]
\newtheorem{proposition}{Proposition}[section]
\newtheorem{conclusion}{Conclusion}[section]
\begin{document}

\baselineskip=16pt


\title{New Formulas of  Feedback Capacity for AGN  Channels with Memory: A Time-Domain Sufficient Statistic Approach}






 \author{
   \IEEEauthorblockN{
     Charalambos D. Charalambous\IEEEauthorrefmark{1} 
     and 
     Christos Kourtellaris\IEEEauthorrefmark{1},
      Stelios Louka\IEEEauthorrefmark{1} 
     \\}
   \IEEEauthorblockA{
     \IEEEauthorrefmark{1}Department of Electrical and Computer Engineering\\University of Cyprus 
     \\
     Email: chadcha@ucy.ac.cy,  kourtellaris.christos@ucy.ac.cy, louka.stelios@ucy.ac.cy}}

\maketitle

\begin{abstract}
In the recent paper  \cite{charalambous-kourtellaris-loykaIEEEITC2019} it is shown, via an application example, that the  Cover and Pombra \cite{cover-pombra1989} ``characterization of the $n-$block or transmission'' feedback capacity  formula,  of additive Gaussian noise (AGN) channels,  is  the subject of much confusion in the literature, with redundant incorrect results.    The main objective of this  paper is to derive new results on the Cover and Pombra  characterization of the $n-$block feedback capacity  formula, which clarify   the main points of confusion and remove any further ambiguity on  result currently available in literature. The first part of  the paper applies  time-domain methods,  to derive for a first time,   equivalent sequential characterizations of the Cover and Pombra  characterization of  feedback capacity of   AGN channels driven by nonstationary and nonergodic Gaussian noise. The optimal channel input processes of the  new equivalent sequential characterizations are  expressed as   functionals of  a {\it sufficient statistic and a Gaussian orthogonal innovations process}. From the new representations  follows that the Cover and Pombra   $n-$block  capacity formula is  expressed as a functional of  two generalized matrix difference Riccati equations (DRE) of filtering theory of Gaussian systems, contrary to results that appeared in the literature. In the second part of the paper   the existence of the asymptotic limit of the $n-$block  feedback capacity formula  is shown to be equivalent to the  convergence properties of  solutions of the two generalized DREs. Further,  necessary and or sufficient conditions are identified for existence of the  asymptotic limits,  for stable and unstable Gaussian noise, when the optimal input distributions are time-invariant, but not necessarily stationary.
 
The paper contains an  in depth analysis, with examples,   of the specific  technical issues, which are overlooked in past literature  
 \cite{kim2010,liu-han2019,gattami2019,ihara2019,li-elia2019}, that  studied    the AGN channel of  \cite{cover-pombra1989}, for stationary noises.

\end{abstract}

%

\section{Introduction, Motivation, Main Results,   Current State  of Knowledge}
\label{sect:problem}

\subsection{The Problem, Motivation, and Main Results}
\label{sect:motivation}
We consider the additive Gaussian noise (AGN) channel defined  by \cite{cover-pombra1989}
\begin{align}
Y_t=X_t+V_t,  \hst t=1, \ldots, n, \hst \frac{1}{n}  {\bf E} \Big\{\sum_{t=1}^{n} (X_t)^2\Big\} \leq \kappa, \hso \kappa \in [0,\infty) \label{g_cp_1} 
\end{align}
where\\
$X^n \tri \{X_1, X_2, \ldots, X_n\}$ is  the sequence of channel input random variables (RVs) $X_t :  \Omega \rar {\mathbb R}$,   \\
$Y^n \tri \{Y_1, Y_2, \ldots, Y_n \}$ is  the sequence of channel output RVs $Y_t :  \Omega \rar {\mathbb R}$,\\
   $V^n\tri \{ V_1, \ldots, V_n\}$ is the  sequence of  jointly Gaussian distributed  RVs $V_t :  \Omega \rar {\mathbb R}$, with  distribution ${\bf P}_{V^n}(dv^n)$, not necessarily stationary or ergodic.  \\
 We wish to  examine the feedback capacity of the AGN channel (\ref{g_cp_1}) for  two  distinct formulations of code definition and noise model, described  below  under Case I) and Case II) formulations, which are subject to confusion in the literature.

{\bf Case I) Formulation.} This formulation respects the following two conditions.  \\
I.1). The  feedback code  does not assume  knowledge of   the initial state of the noise at   the encoder and the decoder (see Definition~\ref{def_code}),  and \\
I.2)  the noise sequence $V^n$ is represented by a partially observable\footnote{Partially observable means that knowledge of $V^{t-1}$ and initial state do not specify the state $S^t, t=1, \ldots, n$.} state space realization, with state  sequence $S^n$ (see Definition~\ref{def_nr_2}). 

For a formulation that respects I.1) and I.2), Cover and Pombra   characterized the  ``$n-$finite   transmission'' feedback capacity   \cite[eqn(10),  eqn(11)]{cover-pombra1989}, using  the information measure\footnote{$C_n^{fb}(\kappa)$ is identified using the converse coding theorem \cite{cover-pombra1989}.},    
\begin{align}
{C}_{n}^{fb}(\kappa) \tri  \sup_{{\bf P}_{X_t|X^{t-1},Y^{t-1}}, t=1, \ldots, n: \: \frac{1}{n}  {\bf E} \big\{\sum_{t=1}^{n} \big(X_t\big)^2\big\} \leq \kappa    }\sum_{t=1}^n H(Y_t|Y^{t-1})-H(V^n) \label{ftfic_is_g_def_in}
\end{align}
provided the supremum exists,   where $H(\cdot)$ denotes differential entropy.\\
Although, not mentioned in \cite{cover-pombra1989},  if the    feedback code assumes knowledge of  the  initial state of the noise or the channel, $S_1=s$,  at the encoder and the decoder (see Definition~\ref{def_rem:cp_1}), it follows directly from \cite[eqn(10), eqn(11)]{cover-pombra1989}, that (\ref{ftfic_is_g_def_in}) is replaced by the information measure    
\begin{align}
{C}_{n}^{fb}(\kappa,s) \tri  \sup_{{\bf P}_{X_t|X^{t-1},Y^{t-1},S}, t=1, \ldots, n: \: \frac{1}{n}  {\bf E} \big\{\sum_{t=1}^{n} \big(X_t\big)^2\Big|S_1=s\big\} \leq \kappa    }\sum_{t=1}^n H(Y_t|Y^{t-1},s)-H(V^n|s). \label{ftfic_is_g_def_in_IS}
\end{align}
Thus, a formulation that respects I.1) and I.2) is the most general.

{\bf  Case II) Formulation.} This formulation  relaxes  conditions I.1) and I.2) to the following two conditions. \\
II.1) The  feedback code  assumes knowledge of  the  initial state of the noise or the channel, $S_1=s$,  at the encoder and the decoder (see Definition~\ref{def_rem:cp_1}), and \\
II.2) the noise sequence $V^n$ is represented  by a fully observable state space realization\footnote{Fully observable means knowledge of $V^{t-1}$ and initial state specify the state $S^t, t=1, \ldots, n$.},  with state sequence $S^n$ such that the noise $V^{t-1}$  (including  the initial state) uniquely defines the noise state sequence $S^t$ and vice-versa for $t=1, \ldots, n$.  

For a formulation that respects II.1) and II.2),  Yang, Kavcic, and Tatikonda  \cite{yang-kavcic-tatikonda2007}, characterized the $n-$finite  transmission feedback capacity  \cite[Section~II, in particular Section~II.C,  I)-III)]{yang-kavcic-tatikonda2007}),  using   the information measure, 
\begin{align}
{C}_{n}^{fb,S}(\kappa,s)\tri  \sup_{{\bf P}_{X_t|S^{t},Y^{t-1},S}, t=1, \ldots, n: \: \frac{1}{n}  {\bf E} \big\{\sum_{t=1}^{n} \big(X_t\big)^2\Big|S_1=s\big\} \leq \kappa    }\sum_{t=1}^n H(Y_t|Y^{t-1},s)-H(V^n|s). \label{ftfic_is_g_is_def_in}
\end{align}

Compared to $C_n^{fb}(\kappa)$ and $C_n^{fb}(\kappa,s)$,  the definition of ${C}_{n}^{fb,S}(\kappa,s)$ is fundamentally different, because the input distributions are different and the information rates are different.
 

{\bf Motivation and Fundamental Differences of Case I) and Case II) Formulations.}\\
At this point we pause  to  disuss two technical issues of Cases I) and II) formulations, which are not clarified in \cite{kim2010,liu-han2019,gattami2019,ihara2019,li-elia2019} and lead to {\it fundamental confusions and incorrect interpretation of the results therein}. \\
Consider the   autoregressive moving average stable noise denoted by  ARMA$(a,c), a \in [-1,1], c\in (-1, 1), c\neq a$, studied by many authors, \cite{yang-kavcic-tatikonda2007,kim2010,liu-han2019,gattami2019,ihara2019,li-elia2019},  $V_t = c V_{t-1} + W_t -a W_{t-1}, \forall t \in {\mathbb Z}_+\tri \{1, 2, \ldots\},$
$V_0\in  N(0, K_{V_0}),  K_{V_0}\geq 0, W_0 \in N(0,  K_{W_0}),  K_{W_0}\geq 0,  W_t \in N(0,  K_{W}),    K_{W}>0, \forall t \in {\mathbb Z}_+$, $\{W_0, W_1, \ldots, W_n\}$ indep. seq. and indep. of $V_0$. 
Define the state variable of the noise by 
\begin{align}
S_t \tri \frac{c V_{t-1} -a W_{t-1}}{c-a}, \hso \forall t \in {\mathbb Z}_+    \label{ex_1_4_in}
\end{align}
Then the state space realization of $V^n$ is 
\begin{align}
&S_{t+1}=c S_t +W_t, \hst V_t= \Big(c-a\Big) S_t + W_t, \hso \forall t \in {\mathbb Z}_+, \label{ex_1_6_in}\\
&K_{S_{1}}= \frac{\big(c\big)^2 K_{V_0} +\big(a\big)^2 K_{W_0}}{\Big(c-a\Big)^2}, \hso K_{V_0}\geq 0,\hso K_{W_0}\geq 0 \hst \mbox{both given}.\label{ex_1_7_n}
\end{align}
First, to make  the transition from  Case I) formulation  of the information measure $C_n^{fb}(\kappa)$, i.e., (\ref{ftfic_is_g_def_in}),  with corresponding  channel input distributions ${\bf P}_{X_t|X^{t-1},Y^{t-1}}, t=1, \ldots,n$,  to Case II) formulation of the information measure $C_n^{fb,S}(\kappa,s)$, i.e., (\ref{ftfic_is_g_is_def_in}), with corresponding channel input distributions  ${\bf P}_{X_t|S^{t},Y^{t-1},S_1}, t=1, \ldots,n$,   the conditions stated in (\ref{inp_2}), (\ref{inp_3}) are necessary (follows  from the converse coding theorem).   
\begin{align}
{\bf P}_{X_t|X^{t-1},Y^{t-1}}=&{\bf P}_{X_t|V^{t-1},Y^{t-1}} \hst \mbox{always holds by channel definition $Y_k=X_k+V_k, k=1, \ldots, n$} \label{inp_1}
\\
=&{\bf P}_{X_t|V^{t-1},Y^{t-1},S_1} \hst \mbox{if  the initial state $S_1=s$ is known to the feedback code} \label{inp_2}\\
=&{\bf P}_{X_t|S^t,Y^{t-1},S_1} \hst \mbox{if $(V^{t-1}, S_1=s)$ uniquely defines $S^t$ and vice-versa}. \label{inp_3}
\end{align} 
Thus, a necessary condition for  (\ref{inp_3}) to hold is: $S_1=(V_0,W_0)=(v_0,w_0)=s$ is known to the encoder. It follows that in \cite[Theorem~6.1, see also Lemma~6.1 and comments above it, eqn(71)]{kim2010},  Conditions II.1), II.2) are assumed; hence, these results are not developed for the Cover and Pombra \cite{cover-pombra1989} formulation.     Additional elaboration is found in Remark~\ref{rem_kim}.\\
Second,  the  analysis of  the asymptotic per unit time  limits of (\ref{ftfic_is_g_def_in})-(\ref{ftfic_is_g_is_def_in}), and their variants (when the supremum over distributions  and limit over $n \longrightarrow \infty$ are interchanged), require certain technical necessary and/or sufficient conditions for the  limits to exist, and for the  rates to be independent of the initial data, $S_1=s$ (see \cite{charalambous-kourtellaris-loykaIEEEITC2019}), even if the noise process $V^n$ is stationary.  This is shown in  \cite{charalambous-kourtellaris-loykaIEEEITC2019}.

 
To conclude, it follows from   \cite{charalambous-kourtellaris-loykaIEEEITC2019},   that the analysis in \cite{kim2010,liu-han2019,gattami2019,ihara2019,li-elia2019}, at many parts is confusing and often contain incorrect statements; some of these are attributed to the fact that the analysis does not correspond to  the Cover and Pombra code definition and noise model, because it presupposes Conditions II.1) and II.2) hold, while others are related to the existence of asymptotic limits.



{\bf Main Results.} The  main results of this paper  are briefly stated below.\\ 
1) In the first part of the paper we derive new equivalent  sequential characterizations  of  the Cover and Pombra ``$n-$block or transmission'' feedback capacity formula \cite[eqn(11)]{cover-pombra1989}, ${C}_{n}^{fb}(\kappa)$,   which have not appeared elsewhere in the literature.  In particular, we derive  equivalent  realizations to   the optimal channel input process $X^n$ \cite[eqn(11)]{cover-pombra1989}, which are     linear  functionals of a {\it finite-dimensional sufficient statistic and an orthogonal innovations process}. From these new realizations, follows  the  sequential characterizations of the ``$n-$block or transmission'' feedback capacity formula \cite[eqn(11)]{cover-pombra1989},  henceforth  called the ``$n-$finite transmission feedback information ($n-$FTFI) capacity'',  which are expressed as  functionals of   {\it two generalized matrix difference Riccati equations (DRE) of filtering theory of Gaussian systems}.

2) In the second part of the paper we analyze the  asymptotic per unit time limit of  the sequential characterizations of the   $n-$FTFI capacity, denoted by ${C}^{fb,o}(\kappa)$, when the supremum and limit over $n\longrightarrow \infty$ are interchanged. We identify necessary and/or sufficient conditions for the asymptotic limit to exist, and for   the optimal input process $X_t, t=1, \ldots, $ to be  asymptotically stationary, in terms  of the convergence properties of two  generalized matrix difference Riccati equations (DREs) to their corresponding two generalized matrix algebraic Riccati equations (AREs). Use in made of  the so-called detectability and stabilizability conditions of generalized Kalman-filters of Gaussian processes \cite{caines1988,kailath-sayed-hassibi}.

3) From  1) and 2) we derive  analogous results for   ${C}_{n}^{fb}(\kappa,s)$
and its per unit time asymptotic limit denoted by, ${C}^{fb,o}(\kappa,s)$,  as degenerate versions of ${C}_{n}^{fb}(\kappa)$
and ${C}^{fb,o}(\kappa)$. Further, we show that for certain noise models, and under certain conditions,  it holds that ${C}^{fb,o}(\kappa,s)={C}^{fb}(\kappa)$, i.e., these values do not depend on the initial state or initial distributions. 

4) From  1) and 2) we derive  analogous results for   Case II) formulation, i.e., ${C}_{n}^{fb,S}(\kappa,s)$
and its per unit time asymptotic limit denoted by, ${C}^{fb,S,o}(\kappa,s)$, and we  show these are    fundamentally  different from  Case I) formulation, ${C}_{n}^{fb}(\kappa)$ and ${C}^{fb,o}(\kappa)$, and also  ${C}_{n}^{fb}(\kappa,s)$ and ${C}^{fb,o}(\kappa,s)$. In particular, we show that  the characterizations of $n-$FTFI capacity, ${C}_{n}^{fb,S}(\kappa,s)$,  for   Case II) formulation  follow directly from   Case I) formulation, as a special case (an independent derivation is also presented). Moreover, ${C}_{n}^{fb,S}(\kappa,s)$ is a functional of  one generalized DRE, while  ${C}_{n}^{fb}(\kappa), {C}_{n}^{fb}(\kappa,s)$, are functionals of two generalized DREs.

5) We provide proofs that our results listed under 1)-4) are fundamentally different from the current believe of  researchers, such as,  \cite{kim2010,liu-han2017,liu-han2019,gattami2019,ihara2019,li-elia2019}.


\subsection{The Code Definitions and Noise Models}
{\bf Case I) Feedback Code and Noise Definitions. }
 For Case I) formulation we consider the   code of Definition~\ref{def_code} (due to \cite{cover-pombra1989}).   \\

\begin{definition} Time-varying feedback code  \cite{cover-pombra1989}\\
\label{def_code}
 A noiseless time-varying feedback code
 for the  AGN Channel (\ref{g_cp_1}), is denoted  by  $(2^{n R},n)$,   $n=1,2, \ldots$, and  consists of the following  elements and assumptions. \\
(i) The  uniformly distributed messages $W : \Omega \rar  {\cal M}_n \tri  \left\{1, 2,\ldots, 2^{nR}\right\}$. \\
(ii) The time-varying encoder strategies, often called codewords  of block length $n$, defined by\footnote{The superscript $e(\cdot) $ on ${\bf E}^e$ indicates  that the distribution depends on the strategy $e(\cdot)\in {\cal E}_{[0,n]}(\kappa)$.} 
\begin{align}
{\cal E}_{[0,n]}(\kappa) \triangleq  \Big\{ X_1=e_1(W),X_2=e_2(W,X_1,Y_1) \ldots, X_n=e_n(W,X^{n-1}, Y^{n-1}): \; \frac{1}{n}   {\bf E}^e    \Big( \sum_{t=1}^n  (X_t)^2\Big) \leq \kappa \Big\}.  
\end{align}
(iii) The average  error  probability of  the decoder  functions  $y^n \longmapsto d_{n}(y^n)\in  {\cal M}_n$, defined by
\begin{align}
{\bf P}_{error}^{(n)} = {\mathbb P}\Big\{d_{n}(Y^n) \neq W\Big\}= \frac{1}{2^{nR}} \sum_{W=1}^{2^{nR}} {\mathbb  P}\Big\{d_n(Y^n) \neq W\Big\}. \label{g_cp_4}
\end{align}
(iv) The channel input sequence ``$X^n \tri \{X_1, \ldots,X_n\}$ is causally related\footnote{A notion found in    \cite{cover-pombra1989}, page 39, above Lemma~5.} to $V^n$'', which is equivalent to the following decomposition of the joint probability distribution of $(X^n, V^n)$: 
\begin{align}
{\bf P}_{X^n, V^n}
=&{\bf P}_{V_n|V^{n-1}, X^{n}} \; {\bf P}_{X_n|X^{n-1}, V^{n-1}} \; \ldots \; {\bf P}_{V_2|V_1, X^2}  {\bf P}_{X_2|X_1, V_1} {\bf P}_{V_1|X_1} {\bf P}_{X_1}\\
=&{\bf P}_{V^n} \prod_{t=1}^n  {\bf P}_{X_t|X^{t-1}, V^{t-1}}, \hst \mbox{that is,} \hso  {\bf P}_{V_t|V^{t-1},X^t}={\bf P}_{V_t|V^{t-1}}. \label{g_cp_3}
\end{align}
That is, $X^t \leftrightarrow V^{t-1} \leftrightarrow V_t$ is a Markov chain, for  $t=1, \ldots, n$. As usual, the messages $W$ are independent of the channel noise $V^n$.   \\
A rate $R$ is called an achievable rate with feedback coding,  if there exists a sequence of codes $(2^{nR},n), n=1,2, \ldots$, 
such that ${\bf P}_{error}^{(n)} \longrightarrow 0$ as $n \longrightarrow \infty$. The feedback capacity  $C^{fb}(\kappa)$  is defined as the supremum of all achievable rates $R$. 
\end{definition}
   
We consider a noise  model which is consistent with \cite{cover-pombra1989}, i.e., $V^n$ is jointly Gaussian distributed,    ${\bf P}_{V^n}=\times_{t=1}^n {\bf P}_{V_t|V^{t-1}}$, and induced by the  partially observable state space (PO-SS) realization of Definition~\ref{def_nr_2}.   \\

\begin{definition} 
 \label{def_nr_2}
A time-varying PO-SS realization of the Gaussian noise $V^n \in N(0, K_{V^n})$ is defined by  
\begin{align}
&S_{t+1}=A_{t} S_{t}+ B_{t} W_t, \hso t=1, \ldots, n-1\label{real_1a}\\
&V_t= C_t S_{t} +  N_t W_t,  \hso t=1, \ldots, n, \label{real_1_ab}\\
 & S_1\in N(\mu_{S_1},K_{S_1}), \hso K_{S_1} \succeq 0, \\
&W_t\in N(0,K_{W_t}),\hso K_{W_t} \succeq 0, \hso  t=1 \ldots, n \hso \mbox{an indep. Gaussian process}, \hso W^t \hso \mbox{indep. of} \hso S_1, \\
& S_t : \Omega \rar {\mathbb R}^{n_s}, \hso W_t : \Omega \rar {\mathbb R}^{n_w}, \hso V_t : \Omega \rar {\mathbb R}^{n_v},  \hso R_t\tri N_t K_{W_t} N_t^T \succ 0,\hso  t=1,\ldots, n \label{cp_e_ar2_s1_a_new}
\end{align}
where  $n_v=1$,  $n_s, n_w$ are arbitrary positive integers, and   $(A_{t}, B_{t}, C_t, N_t,\mu_{S_1}, K_{S_1}, K_{W_t})$ are nonrandom for all $t$, and $n_s, n_w$ are finite positive integers. \\
A time-invariant PO-SS realization of the Gaussian noise $V^n \in N(0, K_{V^n})$ is defined by  (\ref{real_1a})-(\ref{cp_e_ar2_s1_a_new}), with $(A_{t}, B_{t}, C_t, N_t, K_{W_t})=(A, B, C, N,K_{W}), \forall t$. 
\end{definition}

For Case I) formulation we use  the terminology ``partially observable'', which  is standard  in filtering theory \cite{caines1988}, because the noise $V^n$  induces a distribution   ${\bf P}_{V^n}=\times_{t=1}^n {\bf P}_{V_t|V^{t-1}}$,  and ${\bf P}_{V_t|V^{t-1}}$ cannot be expressed as a function of the state of the noise, i.e., $V^{t-1}$ does not uniquely define  $S^{t}$.   The  PO-SS realization  is often adopted in many practical problems of engineering and science,  to realize jointly Gaussian processes $V^n$. 


We should emphasize that for  Case I) formulation  to be consistent  with the  Cover and Pombra \cite{cover-pombra1989} formulation, see for example,    the code definition in  \cite[page~37]{cover-pombra1989}, the  characterization of the $n-$finite transmission feeback  capacity   \cite[eqn(10), eqn(11)]{cover-pombra1989},  and the coding theorems \cite[Theorem~1]{cover-pombra1989}, then  
 both   the code of Definition~\ref{def_code} and the   PO-SS realization of Definition~ \ref{def_nr_2}, must  respect the following two conditions:

{\bf (A1)} {\it The  initial state $S_1$ of the noise is not known at the encoder and the decoder, and}\\
{\bf (A2)}  {\it at each $t$, the representation of the noise $V^{t-1}$ by the PO-SS realization  of Definition~\ref{def_nr_2},  does  not uniquely determine  the state of the noise $S^t$ and vice-versa, i.e., it is a partially observable realization.}

{\bf Case II) Formulation of Feedback Code and Noise Definitions.} For Case II) formulation we pressupose: 

{\bf Condition 1.} {\it  The initial state of the noise or  channel $S_1=s$ is known to the encoder and  decoder, and } \\
{\bf Condition 2.} {\it  given a fixed initial state  $S_1=s$, known to the encoder and the decoder, at each $t$, the channel noise $V^{t-1}$ uniquely defines the state of the noise $S^t$ and vice-versa.} 

Thus, for Case II) formulation  the   code  is that of Definition~\ref{def_rem:cp_1}, below (hence different from Definition~\ref{def_code}).    \\

\begin{definition} A code with initial state known  at the  encoder and the decoder\\
\label{def_rem:cp_1}
A variant of the   code of Definition~\ref{def_code}, is a  feedback code
 with the   initial state of the noise or channel $S_1= s$, known to the encoder and decoder strategies,  denoted  by  $(s, 2^{n R},n)$,   $n=1,2, \ldots$. \\
 The code $(s, 2^{n R},n)$,   $n=1,2, \ldots$  is  defined as in Definition~\ref{def_nr_2}, with (ii), (iii), (iv) replaced by  
\begin{align}
{\cal E}_{[0,n]}^{s}(\kappa) \triangleq & \Big\{ X_1=e_1(W,S_1),X_2=e_2(W,S_1,X_1,Y_1) \ldots, X_n=e_n(W,S_1,X^{n-1}, Y^{n-1}): \nonumber \\
&\frac{1}{n+1}   {\bf E}^e    \Big\{ \sum_{i=1}^n  (X_t)^2\Big|S_1=s  \Big\}\leq \kappa \Big\},  \hst y^n \longmapsto  d_{n}^{s}(y^n,v_{-\infty}^o)\in  {\cal M}_n, \label{code_ykt_1} \\
{\bf P}_{X^n, V^n|S_1}
=&{\bf P}_{V^n|S_1} \prod_{t=1}^n  {\bf P}_{X_t|X^{t-1}, V^{t-1},S_1}, \hst \mbox{that is,} \hso  {\bf P}_{V_t|V^{t-1},X^t,S_1}={\bf P}_{V_t|V^{t-1},S_1}. \label{code_ykt_3}
\end{align}
The initial state may include $S_1\tri (V_{-\infty}^0, Y_{-\infty}^0)$, etc.
\end{definition} 

For Case II) formulation it is obvious  (from the converse to the coding theorem), that the optimal  channel input  conditional distribution is  expressed as a function of the state of the noise, $S^n$, due to (\ref{inp_2}), (\ref{inp_3}).

%

\subsection{Approach  of this Paper} 
  Our approach and analysis of information measures (\ref{ftfic_is_g_def_in})-(\ref{ftfic_is_g_is_def_in}), and their per unit time limits,  is based on the following two step procedure.

{\bf Step \# 1.} We apply a linear transformation to the Cover and Pombra optimal channel input process \cite[eqn(11)]{cover-pombra1989}  (see  (\ref{cp_11})-(\ref{cp_9}) which are reproduced from \cite{cover-pombra1989} for the convenience of the reader), to equivalently represent  it by a linear functional of the past channel noise sequence, the past channel output sequence,  and an orthogonal  Gaussian   process, i.e., an      innovations process. That is, $X^n$  is uniquely represented, since it is expressed in terms of the orthogonal process.  

{\bf Step \# 2.} We express the optimal input process   by  a functional of a {\it sufficient statistic}, which satisfies a Markov recursion,  and an {\it orthogonal innovations process}. It then follows   that the Cover and Pombra characterization of the ``$n-$block'' formula \cite[eqn(10)]{cover-pombra1989} (see (\ref{cp_11}) and (\ref{cp_12})) is equivalently represented  by a sequential characterization. The problem of feedback capacity is then expressed as the maximization of the per unit time limit of a sum of (differential)  entropies  of the innovations processes of $Y^n$, and $V^n$,  over  two sequences of time-varying strategies  of the channel input process,  of the following entropies (analog of entropies  in the right hand side of (\ref{ftfic_is_g_def_in})).  
\begin{align}
H(Y^n)-H(V^n)=& \sum_{t=1}^n \Big\{ H(Y_t|Y^{t-1})-H(V_t|V^{t-1})\Big\}  \label{inn_a_intr}   \\
=& \sum_{t=1}^n\Big\{ H(Y_t- {\bf E}\big\{Y_t\Big|Y^{t-1}\big\}|Y^{t-1})-H(V_t- {\bf E}\big\{V_t\Big|V^{t-1}\big\}|V^{t-1})\Big\}\\
\sr{(a)}{=}&\sum_{t=1}^n\Big\{ H(Y_t- {\bf E}\big\{Y_t\Big|Y^{t-1}\big\})-H(V_t- {\bf E}\big\{V_t\Big|V^{t-1}\big\})\Big\} \\
=&\sum_{t=1}^n\Big\{ H(I_t)-H(\hat{I}_t)\Big\}, \hso I_t\tri Y_t- {\bf E}\big\{Y_t\Big|Y^{t-1}\big\}, \; \hat{I}_t \tri V_t- {\bf E}\big\{V_t\Big|V^{t-1}\big\} \label{inn_intr}
\end{align}
where $(a)$ is due to orthogonality of innovations process $I_t$ and $Y^{t-1}$ and of innovations process  $\hat{I}_t$ and $V^{t-1}$. \\The asymptotic analysis of $\lim_{n \longrightarrow \infty} \frac{1}{n}{C}_{n}^{fb}(\kappa)$ (or with limit and supremum interchanged)  is then addressed from the asymptotic properties of the  entropy rates and the  average power, 
\begin{align}
\lim_{n \longrightarrow \infty} \frac{1}{n} \sum_{t=1}^n\Big\{ H(I_t)-H(\hat{I}_t)\Big\}, \hst \lim_{n \longrightarrow \infty} \frac{1}{n}  {\bf E} \Big\{\sum_{t=1}^{n} \big(X_t\big)^2\Big\}
\end{align}
over the channel input distributions, and where the  covariance of the  innovations process of $Y^n$  is a functional  of the solutions of two generalized matrix DREs. We identify necessary and/or sufficient conditions for  existence of the limits, irrespectively of whether the noise $V^n$ is nonstationary, unstable, or stationary. Further, we show the characterizations of feedback capacity for Case I) formulation and Case II) are fundamentally different.


  \subsection{The Cover and Pombra Characterizations of Capacity and Related Literature}
\label{sect:cp}

Cover and Pombra  applied the converse coding theorem and the  maximum entropy principle of Gaussian distributions  to  characterize  the $n-$FTFI capacity  \cite[eqn(10)]{cover-pombra1989} by\footnote{We use $H(X)$ to denote differential entropy of a continuous-valued RV $X$, hence we indirectly assume  the  probability density functions exist.} 
\begin{align}
 C_n^{fb}(\kappa) 
  \tri & \max_{  \big({\bf B}^n, K_{{\bf \overline{Z}}^n}\big): \hso \frac{1}{n} tr  \Big\{ {\bf E} \big({\bf X}^n ({\bf X}^n)^T\big)\Big\}\leq \kappa}  H(Y^n) - H(V^n )  \label{cp_11}\\
=&  \max_{  \big({\bf B}^n, K_{{\bf \overline{Z}}^n}\big):\hso \frac{1}{n} tr\Big\{{\bf B}^n \; K_{\bf {V^n}} \; ({\bf B}^n)^T+ K_{\bf {\overline{Z}}^n}\Big\}  \leq \kappa } \frac{1}{2} \log \frac{|\big({\bf B}^n+I_{n\times n}\big) K_{{\bf V}^n}\big({\bf B}^{n}+I_{n\times n}\big)^T+ K_{{\bf \overline{Z}}^n}|}{|K_{{\bf V}^n}|} \label{cp_12}
  \end{align}
where the distribution ${\bf P}_{Y^n}$ is induced by a    jointly Gaussian channel input process $X^n$ \cite[eqn(11)]{cover-pombra1989}:
\begin{align}
&X_t=\sum_{j=1}^{t-1}{B}_{t,j}V_j +\overline{Z}_t, \hso t=1, \ldots, n, \label{cp_6}   \\
&  
{\bf X}^n ={\bf B}^{n} {\bf V}^n + {\bf \overline{Z}}^n, \hso {\bf Y}^n=\Big({\bf B}^n +I_{n\times n}\Big) {\bf V}^n + {\bf \overline{Z}}^n, \label{cp_10}\\
& {\bf \overline{Z}}^n \hso \mbox{is jointly Gaussian, $N(0, K_{\bf {\overline{Z}}^n})$}, \hso {\bf \overline{Z}}^n \hso \mbox{is independent of} \hso {\bf V}^n, \label{cp_8} \\
&{\bf X}^n\tri \left[ \begin{array}{cccc} X_1 &X_2 &\ldots &X_n\end{array}\right]^T \hso \mbox{and similarly for the rest}, \hso \mbox{${\bf B}^n$ is a lower diagonal  matrix}, \\
&\frac{1}{n}  {\bf E} \Big\{\sum_{t=1}^{n} (X_t)^2\Big\} =\frac{1}{n}tr \; \Big\{ {\bf E}\Big({\bf X}^n ({\bf X}^{n})^T\Big) \Big\} \leq \kappa. \label{cp_9} 
\end{align} 
The notation $N(0, K_{\bf {\overline{Z}}^n})$ means the random variable ${\bf \overline{Z}}^n$ is jointly Gaussian with mean ${\bf E}\{{\bf \overline{Z}}^n\}=0$ and covariance matrix $K_{{\bf \overline{Z}}^n}={\bf E}\{ {\bf \overline{Z}}^n ({\bf \overline{Z}}^n)^T  \}$, and $I_{n\times n}$ denotes an $n$ by $n$ identity matrix.\\
  The  feedback capacity, $C^{fb}(\kappa)$, is characterized    by  the per unit time limit of the $n-$FTFI capacity  \cite{cover-pombra1989}.
 \begin{align}
 C^{fb}(\kappa) \tri \lim_{n \longrightarrow \infty} \frac{1}{n} C_n^{fb}(\kappa). \label{CP_F}
 \end{align}
 The  direct and converse coding theorems, are   stated in  \cite[Theorem~1]{cover-pombra1989}. \\ 
Over the years,  considerable efforts have been devoted  to compute $C_n^{fb}(\kappa)$ and $C^{fb}(\kappa)$, \cite{yang-kavcic-tatikonda2007,kim2010,liu-han2017,liu-han2019,gattami2019,ihara2019}, often  under simplified assumptions on  the channel noise. In addition,    bounds are described in  \cite{chen-yanaki1999,chen-yanaki2000}, while numerical methods are developed in \cite{ordentlich1994}, mostly for time-invariant AGN channel, driven by stationary noise. We should mention that most papers considered a variant of (\ref{CP_F}), by interchanging the per unit time limit and the maximization operations, 
under the assumption: the joint process $(X^n, Y^n), n=1, 2, \ldots$ is either {\it jointly stationary or asymptotically stationary} (see \cite{kim2010,liu-han2017,liu-han2019,gattami2019}), and  {\it the  joint  distribution of the joint process  $(X^n, Y^n), n=1, 2, \ldots$ is time-invariant}.

Yang, Kavcic and Tatikonda \cite{yang-kavcic-tatikonda2007} and Kim \cite{kim2010} analyzed the feedback capacity of the AGN channel (\ref{g_cp_1}) driven by a stationary  noise, described  the power spectral density   (PSD) functions $S_V(e^{j\theta}), \theta \in [-\pi,\pi]$:
\begin{align}
S_V(e^{j\theta}) \tri& K_W\frac{\Big( 1-\sum_{k=1}^L a(k) e^{jk \theta}\Big)\Big(1- \sum_{k=1}^L a(k) e^{-j k \theta}\Big)}{\Big( 1-\sum_{k=1}^L c(k) e^{jk \theta}\Big)\Big(1- \sum_{k=1}^L c(k) e^{-jk\theta}\Big)}, 
\; |c(k)|<1,\; |a(k)|< 1, \; c(k) \neq a(k), \forall k. \label{PSD_G}
\end{align}
More specifically, the analysis by Yang, Kavcic and Tatikonda,  presupposed Case II) formulation (see \cite[Section~II, in particular Section~II.C,  I)-III), Theorem~1, Section~III]{yang-kavcic-tatikonda2007}), and   a specific state space realization of the noise PSD (\ref{PSD_G}), such  that the following holds:

 {\it The initial state of  the noise, $S_1=s$,  is known to the encoder and the decoder, and the initial state and noise $(s,V^{t-1})$   uniquely define the noise state $S^t$, and vice versa, for all $t$.}

 Kim also analyzed the feedback capacity of the AGN channel (\ref{g_cp_1}) driven by a stationary  noise described by the PSD (\ref{PSD_G}), and by a state space realization of the noise $V^n$ (see \cite[Section~VI]{kim2010}). A major  point of confusion, which should be read with caution is that,     Kim's characterization of feedback capacity in time-domain \cite[Theorem~6.1]{kim2010}, does not state the conditions based on which this characterization is derived. The reader, however,  can verify  from \cite[Lemma~6.1 and comments above it]{kim2010}, that the characterization of feedback capacity  \cite[Theorem~6.1]{kim2010}, presupposed a Case II) formulation, precisely as Yang, Kavcic and Tatikonda \cite{yang-kavcic-tatikonda2007}.
We  reconfirm this point at various parts of  this paper (see for example,  Section~\ref{sect:p_l}).  

A recent investigation of AGN channels driven by autoregressive unit memory stable and unstable noise with feedback is \cite{kourtellaris-charalambous-loyka:2020b}.  An investigation of 
nonfeedback capacity of stable and unstable noise   is \cite{kourtellaris-charalambous-loyka:2020a}.   
The connection of ergodic theory and feedback capacity of unstable channels is discussed in 
\cite{kourtellaris-charalambousIT2015_Part_1,charalambous-kourtellaris-loykaIT2015_Part_2}. 

We structure the paper as follows. 
In Section~\ref{sect:AGN} we derive the new sequential  characterizations of the $n-$FTFI  capacity for the   Cover and Pombra formulation of feedback capacity of the  AGN channel (\ref{g_cp_1}), i.e. for Case I) formulation, $C_n^{fb}(\kappa)$. We  also derive analogous sequential  characterizations for $C_n^{fb}(\kappa,s)$, and for  Case II) formulation, $C_n^{fb,S}(\kappa,s)$,  i.e., when   Conditions 1 and 2 hold to illustrate  their fundamental differences. 
%
%
In Section~\ref{sect:as_an}  we present  the asymptotic analysis of feedback capacity for Case I) formulation. 
In Section~\ref{sect_POSS} we treat the Case II) formulation. 
The paper contains several  examples, and comparisons to   existing literature.



 \section{Sequential Characterizations of $n-$FTFI Capacity for Case I) Formulation}
  \label{sect:AGN}
In this section we derive equivalent sequential  characterizations, for 

i) $C_n^{fb}(\kappa)$ defined by  (\ref{ftfic_is_g_def_in}) of Case I) formulation, i.e., for the Cover and Pombra $n-$FTFI capacity characterization  (\ref{cp_12}),

ii)   ${C}_{n}^{fb}(\kappa,s)$ defined by  (\ref{ftfic_is_g_def_in_IS}), as a degenerated case of $C_n^{fb}(\kappa)$, and     

iii) ${C}_{n}^{fb,S}(\kappa,s)$ defined by (\ref{ftfic_is_g_is_def_in}) of Case II) formulation,  as a degenerated case of $C_n^{fb}(\kappa)$.

Their  asymptotic per unit time limit are addressed in  Section~\ref{sect:as_an}.  

We organize the presentation of the material  as follows: \\
1) Section~\ref{sect:notation}. Here we  introduce our notation.

2) Section~\ref{sect:preliminary}. The main result is 
Theorem~\ref{thm_FTFI}, which gives  an equivalent   sequential characterization of the Cover and Pombra characterization of the $n-$FTFI capacity, $C_n^{fb}(\kappa)$, i.e., of  (\ref{cp_11}), (\ref{cp_12}). Our derivation proceeds as follows. We apply    a linear transformation to the Cover and Pombra Gaussian optimal channel input $X^n$ (\ref{cp_6}), to represent $X_t$, by  a linear function of $(V^{t-1}, Y^{t-1})$ or equivalently $(X^{t-1}, Y^{t-1})$ and an  orthogonal  Gaussian  innovations process $Z_t$, which is   independent of $(Z^{t-1},X^{t-1},V^{t-1}, Y^{t-1})$ for  $t=1, \ldots, n$. \\
Subsequently, we apply Theorem~\ref{thm_FTFI} to  the time-varying PO-SS$(a_t,c_t, b_t^1, b_t^2, d_t^1, d_t^2)$ noise (see  Example~\ref{ex_1_poss}),  to the nonstationary autoregressive moving average, ARMA$(a,c), a \in (-\infty, \infty), c\in (-\infty, \infty)$ noise, and to the stationary ARMA$(a,c), a \in (-1,1), c\in (-1,1)$ noise (see  
Example~\ref{ex_1_1_n}), which is found in many references, such as, \cite{kim2010}. It will become apparent  that our characterizations of $n-$FTFI capacity are fundamentally different from past literature.

%
%
%
%
%


3) Section~\ref{sect_POSS_SS}. The main result is  Theorem~\ref{thm_SS}, which gives a simplified characterization of  the sequential characterization of   the $n-$FTFI capacity, $C_n^{fb}(\kappa)$, given in  Theorem~\ref{thm_FTFI} (i.e., the equivalent of (\ref{cp_12})), 
 for time-varying  AGN channel (\ref{g_cp_1}) driven by the  PO-SS realization of Definition~\ref{def_nr_2}, for  the code of Definition~\ref{def_code}. The $n-$FTFI capacity of Theorem~\ref{thm_SS} is expressed in terms of solutions to two DREs.
Our derivation   is based on identifying a {\it finite-dimensional  sufficient statistic}  to express $X_t$ as a functional of the sufficient statistic,  instead of  $(V^{t-1}, Y^{t-1})$ or  $(X^{t-1}, Y^{t-1})$, and an orthogonal Gaussian innovations process. 

4) Section~\ref{appl_ex}. The main results is  Corollary~\ref{cor_ex_1},  which is  an application  of Theorem~\ref{thm_SS} (i.e., the sufficient statistic representation),  to   the  ARMA$(a,c), a \in (-\infty,\infty), c\in (-\infty,\infty)$ noise of Example~\ref{ex_1_1_n}. This example  shows that the $n-$FTFI capacity is expressed in terms of solutions to two DREs.  From Corollary~\ref{cor_ex_1}, the following will become apparent: \\
(i) Neither the  time-domain characterization  \cite[Theorem~6.1]{kim2010} (see \cite[Theorem~5.3]{kim2010}) nor   the frequency domain characterization \cite[Theorem~4.1]{kim2010},  correspond to the Cover and Pombra characterization of feedback capacity (when the limit and maximization operations are interchanged) of the nonstationary and  stationary ARMA$(a,c)$ noise of  Example~\ref{ex_1_1_n}.   \\
 (ii) The characterizations given in  \cite[Theorem~6.1 and Theorem~4.1]{kim2010} is incorrect, reconfirming the recent analysis in \cite{charalambous-kourtellaris-loykaIEEEITC2019}, of the autoregressive unit memory noise.


5) Section~\ref{sect:case_II}. The main results is Corollary~\ref{cor_issr}, which gives the  $n-$FTFI Capacity for Case II) formulation, as a degenerate case of Case I) formulation, i.e.,  of Theorem~\ref{thm_SS}.

6) Section~\ref{sect:p_l}.  The main result is Proposition~\ref{pro_1}, which further clarifies the following: (i)   the formulation of \cite{yang-kavcic-tatikonda2007} and the formulation that let to \cite[Theorem~6.1]{kim2010}, are based on Case II) formulation, and (ii) some of the  oversights in  \cite{kim2010,liu-han2019,gattami2019,ihara2019,li-elia2019}. 



\subsection{Notation}
\label{sect:notation}
 Throughout the paper, we use the following notation.\\
${\mathbb Z}\tri \{\ldots, -1,0,1,\ldots\}, {\mathbb Z}_+ \tri \{1,\ldots\},    {\mathbb Z}_+^n \tri \{1,2, \ldots, n\}$, where $n$ is a finite positive integer. \\
${\mathbb R}\tri (-\infty, \infty)$,  and ${\mathbb R}^m$ is the vector space of tuples
of the real numbers  for an integer  $n\in {\mathbb Z}_+$.\\
${\mathbb C} \tri \{a+j b: (a,b) \in {\mathbb R} \times {\mathbb R}\}$ is the  space of complex numbers. \\
${\mathbb R}^{n \times m}$ is the set of $n$ by $m$ matrices with entries from the set of real numbers for integers   $(n,m)\in {\mathbb Z}_+\times {\mathbb Z}_+$. \\
${\mathbb D}_o \tri \big\{c \in {\mathbb C}: |c| <1\big\}$ is the open unit disc of the space of complex number ${\mathbb C}$. \\
${\mathbb S}_{+}^{n\times n}, n\in {\mathbb Z}_+$ (resp. ${\mathbb S}_{++}^{n\times n}$)
denotes the set of positive semidefinite (resp. positive definite) symmetric matrices with
elements in the real numbers and of size $n\times n$. Thus, $A \in {\mathbb S}_{+}^{n\times n}$
if for all $w \in {\mathbb R}^n$,  $w^T A w \geq 0 $. 
Positive semidefiniteness is denoted by $A \succeq  0$ and (strict) positive definiteness
by $A \succ  0$. $I_{n\times n} \in {\mathbb S}_{++}^{n\times n}, n\in {\mathbb Z}_{+}$ denones the identity matrix, $tr\big(A\big)$ denotes the trace of any matrix $A \in {\mathbb R}^{n\times n}, n\in {\mathbb Z}_+$.  \\
$spec(A) \subset {\mathbb C}$ is  the Spectrum of a matrix $A \in {\mathbb R}^{q \times q}, q \in {\mathbb Z}_+$ (the set of all its eigenvalues).
A  matrix $A \in {\mathbb R}^{q \times q}$ is called exponentially stable if all its eigenvalues are within the open unit disc, that is,  $spec(A) \subset {\mathbb D}_o$.\\
$\Big(\Omega, {\cal F}, {\mathbb P}\Big)$ denotes a probability space. 
Given a  random variable $X: \Omega \rar {\mathbb R}^{n_x}, n_x\in {\mathbb Z}_+^n$, its  induced    distribution on ${\mathbb R}^{n_x}$ is denoted by  ${\bf P}_{X}$. \\
${\bf P}_{X}\in N(\mu_{X}, K_{X}), K_{X}\succeq 0$ denotes a Gaussian distributed RV $X$, with   mean value  $\mu_{X}$ and covariance matrix $K_{X}=cov(X,X)\succeq 0$, defined by  
\begin{align}
\mu_{X} \tri {\bf E}\{X\}, \hst K_X = cov(X,X) \tri {\bf E}\Big\{\Big(X-{\bf E}\Big\{X\Big\}\Big) \Big(X-{\bf E}\Big\{X\Big\}\Big)^T \Big\}.
\end{align}  
Given another  Gaussian random variables $Y: \Omega \rar {\mathbb R}^{n_y},  n_y\in {\mathbb Z}_+^n$, which is jointly Gaussian distributed with $X$, i.e., the  joint distribution is ${\bf P}_{X,Y}$,   the  conditional covariance of $X$ given $Y$ is  defined by
\begin{align}
K_{X|Y} = cov(X,X\Big|Y) \tri & {\bf E}\Big\{\Big(X-{\bf E}\Big\{X\Big|Y\Big\}\Big) \Big(X-{\bf E}\Big\{X\Big|Y\Big\}\Big)^T\Big|Y \Big\}\\
=&{\bf E}\Big\{\Big(X-{\bf E}\Big\{X\Big|Y\Big\}\Big) \Big(X-{\bf E}\Big\{X\Big|Y\Big\}\Big)^T \Big\}
\end{align}
where the last equality is due to a property of  jointly Gaussian distributed RVs.\\
Given three arbitrary RVs $(X, Y, Z)$ with induced distribution ${\bf P}_{X, Y, Z}$, the RVs $(X, Z)$ are called conditionally independent given the RV $Y$ if  ${\bf P}_{Z|X,Y}={\bf P}_{Z|Y}$. This conditional independence is often denoted by,   $X \leftrightarrow Y \leftrightarrow Z$ is a Markov chain.


\subsection{Preliminary Characterizations of $n-$FTFI Capacity of AGN Channels Driven by Correlated Noise}
\label{sect:preliminary}
We start with preliminary calculations, for the feedback  code of Definition~\ref{def_code},  which we   use to prove Theorem~\ref{thm_FTFI}. These calculations are  introduced for the sake of clarity and to establish our notation.\\
For the feedback  code of Definition~\ref{def_code}, by the channel definition (\ref{g_cp_1}), i.e., (\ref{g_cp_3}), the conditional distribution of $Y_t$ given $Y^{t-1}=y^{t-1}, X^t=x^t$, is   
\begin{align}
{\mb P}\big\{Y_t \in dy \Big| Y^{t-1}=y^{t-1}, X^t=x^t\big\} =&{\mb P}\big\{Y_t \in dy \Big| Y^{t-1}=y^{t-1}, X^t=x^t, V^{t-1}=v^{t-1}\big\}, \hst \mbox{by (\ref{g_cp_1})} \\
=&{\bf P}_{V_t|V^{t-1}}\Big(v_t: x_t+v_t \in dy\Big),   \ \ t=2, \ldots, n,\hst  \mbox{by (\ref{g_cp_3})}   \\
=&{\bf P}_{Y_t|X_t, V^{t-1}}\\
\equiv &{\bf P}_t(dy |x_t,v^{t-1}), \label{NCM-A.D_CD_C_CD_n}\\
{\mb P}\big\{Y_1 \in dy \Big| Y^{0}=y^{0}, X^1=x^1\big\}=&{\bf P}_{Y_1|X_1}\equiv {\bf P}_1(dy |x_1).\label{NCM-A.D_CD_C_CD2_n}
\end{align}
 We introduce  the  set of channel input distributions with feedback, which are consistent with the code of Definition~\ref{def_code}, not necessarily generated by the messages $W$, as follows:  
\begin{align}
{\cal P}_{[0,n]}(\kappa) \tri \Big\{{P}_t(dx_t| x^{t-1}, y^{t-1})\tri {\bf P}_{X_t|X^{t-1},Y^{t-1}}, t=1,\ldots,n: 
\frac{1}{n} {\bf E}^{ P}\Big( \sum_{t=1}^n \big(X_t\big)^2 \Big) \leq \kappa   \Big\} . \label{conv_11} 
\end{align} 
By Definition~\ref{def_code}, we have  ${\cal E}_{[0,n]}(\kappa) \subseteq {\cal P}_{[0,n]}(\kappa)$. Moreover, by the channel definition, any  pair of  the  sequence triple $(V^t,X^t, Y^t)$ uniquely defines  the remaining sequence. Thus,  the identity holds:
\begin{align}
&\overline{\cal P}_{[0,n]}(\kappa) \tri \Big\{ \overline{P}_t(dx_t|v^{t-1}, y^{t-1}), t=1,\ldots,n: \frac{1}{n+1} {\bf E}^{ \overline{P}}\Big( \sum_{t=1}^n \big(X_t\big)^2\Big) \leq \kappa    \Big\}={\cal P}_{[0,n]}(\kappa). \label{conv_1}
\end{align}
We also  emphasize that, by Definition~\ref{def_code}, for   a given feedback encoder strategy $e(\cdot) \in {\cal E}_{_{[0,n]}}(\kappa)$, i.e., $x_1=e_1(w), x_2=e_2(w, x_1, y_1), \ldots, x_n=e_n(w, x^{n-1},y^{n-1})$  the conditional distributions of $Y_t$ given $(Y^{t-1}, W)=(y^{t-1},w)$ depend on the strategies, $e(\cdot)$ as follows: 
\begin{align}
{\bf P}_{Y_t|W,Y^{t-1}}^e(dy_t|,y^{t-1},w) \sr{(a)}{=}&{\bf P}_t(dy_t|\{e_j(w, x^{j-1}, y^{j-1}):j=1,\ldots, t\}, y^{t-1},w)  \label{kernel_1}  \\
\sr{(b)}{=}&{\bf P}_t(dy_t|\{e_j(w, x^{j-1}, y^{j-1}):j=1,\ldots, t\},y^{t-1}, v^{t-1}, w) \\
\sr{(c)}{=}&{\bf P}_t(dy_t|\{e_j(w, x^{j-1}, y^{j-1}):j=1,\ldots, t\}, v^{t-1}, w)\\
\sr{(d)}{=}&{\bf P}_t(dy_t|\{e_j(w, x^{j-1}, y^{j-1}):j=1,\ldots, t\},v^{t-1})\\
\sr{(e)}{=}&{\bf P}_t(dy_t|e_t(w, x^{t-1}, y^{t-1}), v^{t-1}) \label{kernel_2}
\end{align}
$(a)$ is due to  knowledge of the distribution of the strategies $e_j(\cdot), j=1, \ldots, t$,  the code definition, and the recursive substitution, $x_1=e_1(w), x_2=e_2(w,x_1, y_1), \ldots, e_t(w,x^{t-1},y^{t-1})$, where $x^{t-1}$ is specified by the knowledge of  the strategies, $e_j(\cdot), j=1, \ldots, t-1$ and the knowledge of $(y^{t-2}, w) $,\\
$(b)$ is due to knowing $x_j=e_j(w, x^{j-1}, y^{j-1}), y_j, j=1, \ldots, t-1$ specifies $v_j=y_j-x_j, j=1, \ldots, t-1$, \\
$(c)$ is due to the fact that, any pair of the  triple  $(x^t, y^t, v^t)$ specifies the  remaining  sequence, i.e., knowing $(x^{t-1}, v^{t-1})$ specifies $y^{t-1}$, and hence $y^{t-1}$ is redundant, \\
$(d)$ is due to the conditional independence ${\bf P}_{V_t|V^{t-1}, X^t, W}=  {\bf P}_{V_t|V^{t-1}, X^t}$, \\
$(e)$ is due to   (\ref{g_cp_3}), i.e., ${\bf P}_{V_t|V^{t-1}, X^t}={\bf P}_{V_t|V^{t-1}}$, and the channel definition. 

By the channel definition $Y_t=X_t+V_t, t=1, \ldots, n$, then  each $e(\cdot)\in {\cal E}_{[0,n]}(\kappa)$ is also expressed as
\begin{align}
 x_1=&e_1(w)=\overline{e}_1(w),\hso  x_2=e_2(w,x_1,y_1)=\tilde{e}_2(w,x_1,v_1, y_1)\sr{(a)}{=}\overline{e}_2(w,v_1, y_1), \hso   \ldots,\nonumber \\
 & x_n=e_n(w, x^{n-1}, y^{n-1})=\tilde{e}_n(w, x^{n-1},v^{n-1}, y^{n-1})\sr{(a)}{=}  \overline{e}_n(w,v^{n-1}, y^{n-1}),\hso w\in {\cal M}^{(n)}. \label{enc_1}
 \end{align}
where $(a)$ is due to the  channel definition, i.e.,    the presence of $x^{t-1}$ in $\tilde{e}_t(\cdot,v^{t-1},\cdot)$ can be removed, since it is redundant,  and  specified by $(v^{t-1}, y^{t-1})$. Consequently,  we have the identity 
\begin{align}
\overline{\cal E}_{[0,n]}(\kappa) \triangleq  \Big\{ x_1=\overline{e}_1(w),x_2=\overline{e}_2(w,v_1,y_1) \ldots, x_n=\overline{e}_n(w,v^{n-1}, y^{n-1}): \frac{1}{n}   {\bf E}^{\overline{e}}    \Big( \sum_{i=1}^n  (X_t)^2 \Big) \leq \kappa  \Big\}= {\cal E}_{[0,n]}(\kappa). \label{eqiv_s_1_n}
\end{align}

\begin{notation}
\label{not_1}
  For the feedback  code of Definition~\ref{def_rem:cp_1}, with initial state  $S_1=s$,   known to the encoder and the decoder, the above  sets  ${\cal P}_{[0,n]}(\kappa), \overline{\cal P}_{[0,n]}(\kappa), {\cal E}_{[0,n]}, \overline{\cal E}_{[0,n]}$ are replaced by ${\cal P}_{[0,n]}^s(\kappa), \overline{\cal P}_{[0,n]}^s(\kappa), {\cal E}_{[0,n]}^s, \overline{\cal E}_{[0,n]}^s$, to indicate the distributions and codes  are $ \overline{P}_t(dx_t|v^{t-1}, y^{t-1},s), t=1, \ldots,  x_1=\overline{e}_1(w,s),x_2=\overline{e}_2(w,v_1,y_1,s) \ldots, x_n=\overline{e}_n(w,v^{n-1}, y^{n-1},s)$, etc., and these depend on $s$.
\end{notation}

In the next theorem we present  our preliminary  equivalent   sequential characterization of the Cover and Pombra characterization $C_n^{fb}(\kappa)$, i.e., of  (\ref{cp_11}), 
 under encoder strategies ${\cal E}_{[0,n]}(\kappa)=\overline{\cal E}_{[0,n]}(\kappa)$,  and
   channel input distributions  ${\cal P}_{[0,n]}(\kappa)= \overline{\cal P}_{[0,n]}(\kappa)$. Unlike  the  Cover and Pombra \cite{cover-pombra1989} realization of $X^n$, given by  (\ref{cp_6}), at each time $t$, $X_t$ is driven by an orthogonal Gaussian process $Z_t$. \\

\begin{theorem}  Information structures of maximizing distributions for AGN Channels \\
\label{thm_FTFI}
Consider the AGN channel (\ref{g_cp_1}), i.e., with noise distribution ${\bf P}_{V^n}$, and the  code of Definition~\ref{def_code}. Then the following hold.\\
(a) The inequality holds, 
\begin{align}
\sup_{\overline{\cal E}_{[0,n]}(\kappa) } \sum_{t=1}^n H^{\overline{e}}(Y_t|Y^{t-1})\leq  \sup_{ \overline{\cal P}_{[0,n]}(\kappa)  }\sum_{t=1}^n H^{\overline{P}}(Y_t|  Y^{t-1})
\label{dp_1_new}
\end{align}
where the conditional (differential) entropy $H^{\overline{e}}(Y_t|Y^{t-1})$ is evaluated with respect to 
the probability distribution ${\bf P}_t^{\overline{e}}(dy_t|y^{t-1})$, defined by  
\begin{align}
 {\bf P}_t^{\overline{e}}(dy_t|y^{t-1}) = \int {\bf P}_t(dy_t|\overline{e}_t(w,v^{t-1},y^{t-1}), v^{t-1})\; {\bf P}_t^{\overline{e}}(dw, dv^{t-1}| y^{t-1}), 
  \hso t=0, \ldots, n. \label{eq_a_1_thm}
 \end{align}
and  $H^{\overline{P}}(Y_t|Y^{t-1})$ is evaluated with respect to 
the probability distribution ${\bf P}_t^{\overline{P}}(dy_t|y^{t-1})$, defined by  
\begin{align}
 {\bf P}_t^{\overline{P}}(dy_t|y^{t-1}) = \int {\bf P}_t(dy_t|x_t,v^{t-1})\; {\bf P}_t^{\overline{P}}(dx_t|v^{t-1}, y^{t-1}) \; {\bf P}_t^{\overline{P}}(dv^{t-1}|y^{t-1}), \hso t=0, \ldots, n.  \label{eq_a_2_thm}
 \end{align}
(b) The optimal channel input distribution
$\{\overline{P}(dx_t|v^{t-1},y^{t-1}), t=1, \ldots, n\}\in 
\overline{\cal P}_{[0,n]}(\kappa)$, which maximizes   $\sum_{t=1}^n H^{\overline{P}}(Y_t|  Y^{t-1})$ of part (a), i.e., the right hand side of (\ref{dp_1_new}),   is induced by an input process $X^n$, which is  conditionally Gaussian, 
with linear conditional mean, nonrandom conditional covariance, given by
\begin{align}
&{\bf E}^{\overline{P}}\Big\{X_t\Big|V^{t-1},Y^{t-1}\Big\}=\left\{ \begin{array}{lll}  \Gamma_t^1 {\bf V}^{t-1} + \Gamma_t^2 {\bf Y}^{t-1}, & \mbox{for} & t=2, \ldots, n\\
0, & \mbox{for} & t=1,
\end{array} \right.  \label{mean_1}  \\
&K_{X_t|V^{t-1}, Y^{t-1}}\tri cov\big(X_t, X_t\Big|V^{t-1}, Y^{t-1}\big)= K_{Z_t} \succeq 0, \hso t=1, \ldots, n\label{var_1}
\end{align}
and such that the average constraint holds and (\ref{g_cp_3}) is respected. \\
(c) The optimal channel input distribution
$\{\overline{P}(dx_t|v^{t-1},y^{t-1}), t=1, \ldots, n\}\in 
\overline{\cal P}_{[0,n]}(\kappa)$  of part (b), 
 is induced by a jointly  Gaussian process $X^n$, with a realization given by 
\begin{align}
&X_t = \sum_{j=1}^{t-1} \Gamma_{t,j}^1 { V}_{j} + \sum_{j=1}^{t-1}\Gamma_{t,j}^2 Y_j + Z_t,    \hso X_1=Z_1,   \hso t=2, \ldots,n,  \label{Q_1_3_s1}  \\
& \hst  = \Gamma_t^1 {\bf V}^{t-1} +\Gamma_t^2 {\bf Y}^{t-1} +Z_t,   \label{Q_1_3_s1_a}    \\ 
&Z_t\in  N(0, K_{Z_t}), \hso t=1, \ldots, n \hso \mbox{a  Gaussian sequence,}\label{Q_1_5_s1} \\
&Z_t \hso  \mbox{independent of}  \hso  (V^{t-1},X^{t-1},Y^{t-1}),  \hso t=1, \ldots, n,\label{Q_1_6_s1}\\
&Z^n \hso  \mbox{independent of}  \hso  V^{n},\\
&\frac{1}{n}  {\bf E} \Big\{\sum_{t=1}^{n} \big(X_t\big)^2\Big\} \leq \kappa,     \label{cp_e_ar2_s1}\\
&(\Gamma_t^1, \Gamma_t^2,  K_{Z_t})\in (-\infty, \infty) \times (-\infty, \infty) \times  [0,\infty) \hst \mbox{nonrandom}. \label{cp_e_ar2_s11}
\end{align}  
(d) An equivalent characterization of the $n-$FTFI capacity ${C}_n^{fb}(\kappa)$, defined  by (\ref{cp_11}), (\ref{cp_12}),  is given by 
\begin{align}
{C}_{n}^{fb}(\kappa)= & \sup_{\frac{1}{n}  {\bf E} \big\{\sum_{t=1}^{n} \big(X_t\big)^2\big\} \leq \kappa    }\sum_{t=1}^n H^{\overline{P}}(Y_t|Y^{t-1})-H(V^n) \label{ftfic_is_g}\\
=& \sup_{\frac{1}{n}  {\bf E} \big\{\sum_{t=1}^{n} \big(X_t\big)^2\big\} \leq \kappa    }\sum_{t=1}^n \Big\{ H^{\overline{P}}(I_t)-H(\hat{I}_t)\Big\} \label{ftfic_is_g_in}
\end{align}
where $I_t, \hat{I}_t$ are innovations processes defined, 
\begin{align}
I_t \tri Y_t- {\bf E}\big\{Y_t\Big|Y^{t-1}\big\}, \hst  \hat{I}_t \tri V_t- {\bf E}\big\{V_t\Big|V^{t-1}\big\} \label{inn_intr_thm1}
\end{align}
parts (b), (c) hold, and where 
the supremum is over all $(\Gamma_t^1, \Gamma_t^2, K_{Z_t}), t=1, \ldots, n$ of the realization of part (c), that induces the distribution $\overline{P}_t(dx_t|v^{t-1},y^{t-1}), t=1, \ldots, n$.  
\end{theorem}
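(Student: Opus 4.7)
The plan is to reduce everything to the Cover--Pombra characterization (\ref{cp_11})--(\ref{cp_9}) and then re-express the resulting Gaussian optimizer in innovations form via a Gram--Schmidt orthogonalization. For part (a), I first note that each strategy $\overline{e}\in\overline{\cal E}_{[0,n]}(\kappa)$, after averaging out the uniform message $W$, induces a conditional input kernel $\overline{P}_t(dx_t|v^{t-1},y^{t-1})$ that satisfies the causal decomposition (\ref{conv_1}) and the same power constraint; hence $\overline{\cal E}_{[0,n]}(\kappa)\subseteq \overline{\cal P}_{[0,n]}(\kappa)$ as sets of kernels, and (\ref{dp_1_new}) follows by taking suprema over nested sets. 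The two output-distribution formulas (\ref{eq_a_1_thm}), (\ref{eq_a_2_thm}) merely record how $H(Y_t|Y^{t-1})$ is computed in each class.

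For part (b), because $V^n$ is Gaussian and the trace constraint fixes ${\bf E}(X^n(X^n)^T)$, the Cover--Pombra maximum-entropy argument shows that $H(Y^n)=\sum_t H(Y_t|Y^{t-1})$ is maximized by a jointly Gaussian $(X^n,V^n)$. Consequently $(X_t,V^{t-1},Y^{t-1})$ is jointly Gaussian, so the conditional mean ${\bf E}\{X_t|V^{t-1},Y^{t-1}\}$ is an affine (linear, since means are zero) functional of $(V^{t-1},Y^{t-1})$, giving (\ref{mean_1}), and the conditional covariance $K_{X_t|V^{t-1},Y^{t-1}}$ is deterministic, giving (\ref{var_1}). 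For part (c), I start from the Cover--Pombra realization $X_t=\sum_{j<t}B_{t,j}V_j+\overline{Z}_t$ with $\overline{Z}^n\perp V^n$ and set $Z_t := X_t-{\bf E}\{X_t|V^{t-1},Y^{t-1}\}$. Since $X^{t-1}=Y^{t-1}-V^{t-1}$ recovers $\overline{Z}^{t-1}$, we have $\sigma(V^{t-1},Y^{t-1})=\sigma(V^{t-1},\overline{Z}^{t-1})$, and the Gaussian projection identity together with $\overline{Z}^n\perp V^n$ yield ${\bf E}\{\overline{Z}_t|V^{t-1},\overline{Z}^{t-1}\}={\bf E}\{\overline{Z}_t|\overline{Z}^{t-1}\}$; thus $Z_t$ is the Gram--Schmidt innovation of $\overline{Z}_t$ against $\overline{Z}^{t-1}$. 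It follows that $Z_t$ is Gaussian, independent of $(V^{t-1},X^{t-1},Y^{t-1},Z^{t-1})$ and of $V^n$, and $\{Z_t\}$ is orthogonal with variances $K_{Z_t}$, producing the realization (\ref{Q_1_3_s1})--(\ref{cp_e_ar2_s11}). The parametrization in both $V^{t-1}$ and $Y^{t-1}$ is deliberately redundant (since $X_j=Y_j-V_j$), but it will be convenient for the sufficient-statistic development in the next subsection.

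For part (d), the realization in (c) renders $(Y^n,V^n)$ jointly Gaussian, so by translation invariance of differential entropy and orthogonality of Gaussian innovations from the conditioning past, $H(Y_t|Y^{t-1})=H(Y_t-{\bf E}\{Y_t|Y^{t-1}\}|Y^{t-1})=H(I_t|Y^{t-1})=H(I_t)$ and analogously $H(V_t|V^{t-1})=H(\hat{I}_t)$; summing the telescoped decomposition (\ref{inn_a_intr})--(\ref{inn_intr}) then delivers (\ref{ftfic_is_g_in}). The main obstacle I anticipate is justifying that the Gram--Schmidt orthogonalization of $\overline{Z}^n$ produces a $Z^n$ which simultaneously satisfies (\ref{Q_1_6_s1}) and the stronger independence $Z^n\perp V^n$; this rests on the identity $\sigma(V^{t-1},Y^{t-1})=\sigma(V^{t-1},\overline{Z}^{t-1})$ combined with $\overline{Z}^n\perp V^n$, which makes the conditional expectation collapse onto $\overline{Z}^{t-1}$ alone and cleanly decouples the innovations $Z^n$ from the noise $V^n$.
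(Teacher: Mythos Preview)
Your proposal is correct and follows essentially the same route as the paper's proof: both derive (a) from the inclusion $\overline{\cal E}_{[0,n]}(\kappa)\subseteq\overline{\cal P}_{[0,n]}(\kappa)$, obtain (b) from the Gaussian maximum-entropy principle applied to the Cover--Pombra optimizer, construct $Z_t$ in (c) as the Gram--Schmidt/innovations residual $\overline{Z}_t-{\bf E}\{\overline{Z}_t\mid V^{t-1},Y^{t-1}\}$, and deduce (d) from the innovations decomposition (\ref{inn_a_intr})--(\ref{inn_intr}). Your justification of $Z^n\perp V^n$ via the $\sigma$-algebra identity $\sigma(V^{t-1},Y^{t-1})=\sigma(V^{t-1},\overline{Z}^{t-1})$ and the collapse ${\bf E}\{\overline{Z}_t\mid V^{t-1},\overline{Z}^{t-1}\}={\bf E}\{\overline{Z}_t\mid \overline{Z}^{t-1}\}$ is in fact a bit more explicit than the paper, which simply invokes Definition~\ref{def_code}.(iv) at that step.
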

\begin{proof} See Appendix~\ref{sect:app_thm_FTFI}.
\end{proof}

\begin{remark} For the  code of Definition~\ref{def_rem:cp_1} that assumes knowledge of the initial state $S_1=s$, it is easy to verify that $C_n^{fb}(\kappa,s)$ is directly obtained from  Theorem~\ref{thm_FTFI}, as a degenerate case (an independent derivation is easily produced  following the derivation of Corollary~\ref{cor_kim}, with slight variations). 
\end{remark}

By  utilizing Theorem~\ref{thm_FTFI},   we can derive  the converse coding theorems stated below  for the feedback codes of  Definition~\ref{def_code} and Definition~\ref{def_rem:cp_1}.\\

\begin{theorem} Converse coding theorems for  codes of Definition~\ref{def_code} and Definition~\ref{def_rem:cp_1}\\
\label{thm_ftfic}
Consider the AGN channel (\ref{g_cp_1}). \\ 
(a) Any achievable rate $R$ for the  code of Definition~\ref{def_code} satisfies
\begin{align}
R \leq & {C}^{fb}(\kappa) \tri  \lim_{n \longrightarrow \infty}\frac{1}{n} {C}_{n}^{fb}(\kappa),\\
{C}_{n}^{fb}(\kappa)= & \sup_{\overline{P}_t(dx_t|v^{t-1},y^{t-1}), t=1, \ldots, n: \: \frac{1}{n}  {\bf E} \big\{\sum_{t=1}^{n} \big(X_t\big)^2\big\} \leq \kappa    }\sum_{t=1}^n H^{\overline{P}}(Y_t|Y^{t-1})-H(V^n) \label{ftfic_is_g_def}
\end{align}
provided the supremum exists and the limit exists, where the right hand side of (\ref{ftfic_is_g_def}) is given in  Theorem~\ref{thm_FTFI}.(d). \\
(b) Any achievable rate $R$ for the  code of Definition~\ref{def_rem:cp_1} (with initial state $S_1=s$) satisfies 
\begin{align}
R \leq & {C}^{fb}(\kappa,s) \tri  \lim_{n \longrightarrow \infty}\frac{1}{n} {C}_{n}^{fb}(\kappa,s),\\
{C}_{n}^{fb}(\kappa,s)= & \sup_{\overline{P}_t(dx_t|v^{t-1},y^{t-1},s), t=1, \ldots, n: \: \frac{1}{n}  {\bf E}_s \big\{\sum_{t=1}^{n} \big(X_t\big)^2\Big|S_1\big\} \leq \kappa    }\sum_{t=1}^n H^{\overline{P}}(Y_t|Y^{t-1},s)-H(V^n|s). \label{ftfic_is_g_is_def}
\end{align}
where ${\bf E}_s\{\cdot\}$ means the expectation is for a fixed $S_1=s$, provided the supremum exists and the limit exists, and   where the right hand side of (\ref{ftfic_is_g_is_def}) is obtained from   Theorem~\ref{thm_FTFI}.(d), by replacing all conditional distributions, entropies, etc, for fixed initial state $S_1=s$ (see Notation~\ref{not_1}). 
\end{theorem}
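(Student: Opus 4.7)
The plan is to follow the standard Fano-inequality based converse, adapted to the feedback code of Definition~\ref{def_code}, and then to recognize the resulting upper bound as lying in the admissible class of feedback distributions $\overline{\cal P}_{[0,n]}(\kappa)$, so that the right-hand side of \eqref{ftfic_is_g_def} arises directly from Theorem~\ref{thm_FTFI}.

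For part (a), I would start from a $(2^{nR},n)$ code with error probability ${\bf P}_{error}^{(n)}\longrightarrow 0$. Since $W$ is uniformly distributed on ${\cal M}_n$ and independent of $V^n$, Fano's inequality yields
\begin{align*}
nR = H(W) \leq I(W;Y^n) + 1 + {\bf P}_{error}^{(n)}\, nR.
\end{align*}
Next I would expand $I(W;Y^n)=\sum_{t=1}^n\big[H(Y_t|Y^{t-1})-H(Y_t|Y^{t-1},W)\big]$ by the entropy chain rule. The central step is to show $H(Y_t|Y^{t-1},W)=H(V_t|V^{t-1})$ for every $t$. This rests on three observations: first, the feedback encoder $X_t=e_t(W,X^{t-1},Y^{t-1})$ of Definition~\ref{def_code}.(ii) is a deterministic function of $(W,Y^{t-1})$ once $X^{t-1}$ is recursively substituted; second, since $V_i=Y_i-X_i$, the noise $V^{t-1}$ is then also a deterministic function of $(W,Y^{t-1})$; third, the causality identity \eqref{g_cp_3}, namely ${\bf P}_{V_t|V^{t-1},X^t}={\bf P}_{V_t|V^{t-1}}$, combined with independence of $W$ from $V^n$, yields $H(Y_t|Y^{t-1},W)=H(V_t|V^{t-1},W)=H(V_t|V^{t-1})$.

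Combining these produces $I(W;Y^n)\leq \sum_{t=1}^n H(Y_t|Y^{t-1})-H(V^n)$, and the conditional distributions ${\bf P}_{X_t|X^{t-1},Y^{t-1}}$ induced by the code satisfy the average power constraint in \eqref{g_cp_1} and therefore belong to ${\cal P}_{[0,n]}(\kappa)$. Taking the supremum over this class, invoking Theorem~\ref{thm_FTFI}.(d) to recast it in the sequential form, dividing by $n$, and sending $n\longrightarrow\infty$ with ${\bf P}_{error}^{(n)}\longrightarrow 0$ (and using the assumed existence of the limit) delivers $R\leq C^{fb}(\kappa)$.

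For part (b), the argument is identical after conditioning everywhere on $S_1=s$: Fano's inequality is applied to $H(W|Y^n,S_1=s)$, and the three facts go through verbatim using Definition~\ref{def_rem:cp_1} and, in particular, the analogous causality identity ${\bf P}_{V_t|V^{t-1},X^t,S_1}={\bf P}_{V_t|V^{t-1},S_1}$ implied by \eqref{code_ykt_3}. The admissible set becomes $\overline{\cal P}_{[0,n]}^{s}(\kappa)$ of Notation~\ref{not_1}, and the sufficient-statistic Gaussian reduction of Theorem~\ref{thm_FTFI} carries over with $s$ held fixed. I expect the main obstacle in both parts to be the elimination of $X^{t-1}$ from the conditioning in $H(Y_t|Y^{t-1},W)$: it requires simultaneously exploiting the deterministic nature of the feedback encoder, the channel equation $Y_t=X_t+V_t$, and the causality identity \eqref{g_cp_3}, and this step would fail if \eqref{g_cp_3} were weakened or if the encoder were randomized.
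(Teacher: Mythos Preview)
Your proposal is correct and follows precisely the approach the paper indicates: the paper's own proof is the one-line remark ``Follows from standard arguments, using Fano's inequality (see also \cite{cover-pombra1989}) and Theorem~\ref{thm_FTFI},'' and you have faithfully spelled out those standard arguments. Your identification of the key step, namely $H(Y_t|Y^{t-1},W)=H(V_t|V^{t-1})$ via the determinism of the encoder, the channel equation, and the causality condition \eqref{g_cp_3}, is exactly what the Cover--Pombra converse uses, and your handling of part (b) by conditioning throughout on $S_1=s$ and invoking \eqref{code_ykt_3} is the correct adaptation.
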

\begin{proof} Follows from standard arguments, using Fano's  inequality (see also \cite{cover-pombra1989}) and  Theorem~\ref{thm_FTFI}.
\end{proof}

In the next remark we clarify the equivalence of 
Theorem~\ref{thm_ftfic}.(d) to  Cover and Pombra \cite{cover-pombra1989}. \ \

\begin{remark} Relation of Theorem~\ref{thm_ftfic} and Cover and Pombra \cite{cover-pombra1989}\\
\label{rem:con_cp_a} 
(a) From the  realization of $X^n$ given by (\ref{Q_1_3_s1}), we can recover  the  Cover and Pombra \cite{cover-pombra1989} realization  (\ref{cp_6}), by recursive substitution of ${ Y}^{t-1}$ into the right hand side of (\ref{Q_1_3_s1}), as follows. 
\begin{align}
X_t 
=&\sum_{j=1}^{t-1}\Gamma_{t,j}^1 V_j + \sum_{j=1}^{t-1}\Gamma_{t,j}^2 Y_j +Z_t\\
=&\sum_{j=1}^{t-1}\Gamma_{t,j}^1 V_j + \sum_{j=1}^{t-2}\Gamma_{t,j}^2 Y_j + \Gamma_{t,t-1}^2 \Big(X_{t-1} + Z_{t-1}\Big) +Z_t \\
=& \sum_{j=1}^{t-1} B_{t,j}V_{j} + \overline{Z}_t, \hst  \mbox{by recursive substitution of  $X_1, \ldots, X_{t-1}, Y_1, \ldots, Y_{t-2}$} \label{alt-cov}
\end{align}
for some $\overline{Z}_t\in (0, K_{\overline{Z}_t})$ which is jointly correlated, and some nonrandom $B_{t,j}$, as given by (\ref{cp_6}) and (\ref{cp_10}).\\ 
(b) 
Unlike the Cover and Pombra \cite{cover-pombra1989} realization of $X^n$, i.e.,   (\ref{cp_6}),  the realization of $X^n$ given by    (\ref{Q_1_3_s1}) or in vector form by  (\ref{Q_1_3_s1_a}), is such  that,  at each time $t$, $X_t$ depends on $(V^{t-1},  Y^{t-1}, Z_t)$ or in vector form on  $({\bf V}^{t-1}, {\bf Y}^{t-1}, Z_t)$,   where $Z^t$ is an innovations or orthogonal process, i.e., (\ref{Q_1_6_s1}) holds.\\ (c) In subsequent parts of the paper, we 
 derive an equivalent sequential characterization of the Cover and Pombra  $n-$FTFI capacity (\ref{cp_12}), which is  simplified further, by the use of a sufficient statistic (that satisfies a Markov recursion).
\end{remark}

To characterize $C_n^{fb}(\kappa)$ using    Theorem~\ref{thm_FTFI}.(d)  we need to compute the (differential) entropy $H(V^n)$ of $V^n$. The following lemma is useful in this respect.\\

\begin{lemma} Entropy $H(V^n)$ calculation from generalized Kalman-filter of  the PO-SS noise realization.\\
\label{lemma_POSS}
Consider the PO-SS realization of $V^n$ of  Definition~\ref{def_nr_2}.  Define the conditional covariance and conditional mean  of $S_t$ given $V^{t-1}$ by 
\begin{align}
\Sigma_{t} \tri & cov\big(S_t, S_t\Big|V^{t-1}) = {\bf E}\Big\{\Big(S_t- \hat{S}_{t}\Big)\Big(S_t-\hat{S}_{t} \Big)^T\Big|V^{t-1}\Big\}, \hso \hat{S}_{t} \tri {\bf E}\Big\{S_t\Big|V^{t-1}\Big\},  \hso  t=2, \ldots, n, \label{cond_GK_1} \\
\Sigma_1 \tri & cov\big(S_1, S_1) =K_{S_1}, \hso \hat{S}_1 \tri \mu_{S_1}. \label{cond_GK_2}
\end{align} 
Then the following hold.\\
(a) The  conditional distribution of $V_t$ conditioned on $V^{t-1}$ is Gaussian, i.e., 
\begin{align}
{\bf P}_{V_t|V^{t-1}} \in N(\mu_{V_t|V^{t-1}}, K_{V_t|V^{t-1}}), \hst t=1, \ldots, n \hso 
\end{align}
where $\mu_{V_t|V^{t-1}}\tri {\bf E}\big\{V_t\Big|V^{t-1}\big\}, K_{V_t|V^{t-1}}\tri cov\big(V_t, V_t\Big|V^{t-1})$.\\
(b) The conditional mean and covariance  $\mu_{V_t|V^{t-1}}, K_{V_t|V^{t-1}}$ are given by  the Generalized Kalman-filter recursions, as follows. \\
(i) The optimal mean-square error estimate $\hat{S}_t$ satisfies the generalized Kalman-filter recursion   
\begin{align}
&\hat{S}_{t+1}=A_{t} \hat{S}_{t}+ M_{t}(\Sigma_t)  \hat{I}_t, \hso \hat{S}_{1}=\mu_{S_1}, \label{kal_fil_noise} \\
& M_{t}(\Sigma_t) \tri \Big( A_{t}  \Sigma_{t} C_{t}^T+B_{t} K_{W_{t}}N_{t}^T\Big)\Big(N_{t} K_{W_{t}}N_{t}^T+ C_{t} \Sigma_{t} C_{t}^T \Big)^{-1}, \\
&\hat{I}_t \tri V_t -{\bf E}\Big\{V_t\Big|V^{t-1}\Big\}=  V_t - C_t \hat{S}_{t}= C_t\big(S_{t}- \hat{S}_{t}\big) + N_t W_t, \hso t=1, \ldots, n, \label{inn_po_1} \\
& \hat{I}_t \in N(0, K_{\hat{I}_t}), \hso t=1, \ldots, n \hso \mbox{is an orthogonal innovations process, i.e., $\hat{I}_t$ is independent of}\nonumber \\
&\mbox{ $\hat{I}_s$, for all $t \neq s$, and $\hat{I}_t$ is independent of  $V^{t-1}$},  \label{inn_po_2}\\
&K_{\hat{I}_t} \tri cov(\hat{I}_t, \hat{I}_t)= C_t \Sigma_t C_t^T +N_t K_{W_t} N_t^T. \label{cov_in_noise}
\end{align}
(ii) The error $E_t \tri S_t-\hat{S}_t$  satisfies the recursion
\begin{align}
E_{t+1} = &M_t^{CL}(\Sigma_t)E_t +  \Big(B_t- M(\Sigma_t) N_t\Big) W_t, \hso E_1= S_1-\hat{S}_1, \hso t=1,\ldots, n,  \label{dre_1_a}  \\
M_t^{CL}(\Sigma_t)\tri & A_t- M_t(\Sigma_t)C_t.
\end{align}
(iii) The  covariance of the error is such that  ${\bf E} \big\{E_t E_t^T\big\}=\Sigma_t$ and satisfies the generalized matrix DRE 
\begin{align}
\Sigma_{t+1}= &A_{t} \Sigma_{t}A_{t}^T  + B_{t}K_{W_{t}}B_{t}^T -\Big(A_{t}  \Sigma_{t}C_{t}^T+B_{t}K_{W_{t}}N_{t}^T  \Big) \Big(N_{t} K_{W_{t}} N_{t}^T+C_{t}  \Sigma_{t} C_{t}^T\Big)^{-1}\nonumber \\
& \hst . \Big( A_{t}  \Sigma_{t}C_{t}^T+ B_{t} K_{W_{t}}N_{t}^T  \Big)^T,  \hso t=1, \ldots, n, \hso \Sigma_{1}=K_{S_1}\succeq 0, \hso \Sigma_t \succeq 0. \label{dre_1}
\end{align}
(iv) The conditional mean and covariance $\mu_{V_t|V^{t-1}}, K_{V_t|V^{t-1}}$ are given by  
\begin{align}
\mu_{V_t|V^{t-1}}=&C_t \hat{S}_{t},  \hso t=1, \ldots, n,  \\
K_{V_t|V^{t-1}}=&K_{\hat{I}_t} =C_t \Sigma_{t} C_t^T + N_t K_{W_t} N_t^T,  \hso t=1, \ldots, n. \label{inn_noise}  
\end{align}
(v) The entropy of $V^n$, is given by 
\begin{align}
H(V^n)=\sum_{t=1}^n H(\hat{I}_t)=   \frac{1}{2}\sum_{t=1}^n \log\Big(2\pi e \Big[C_t \Sigma_{t} C_t^T + N_t K_{W_t} N_t^T\Big] \Big) \label{entr_noise}
\end{align}
\end{lemma}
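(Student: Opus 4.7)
The plan is to derive (a)--(v) as essentially a self-contained application of the Gaussian Kalman filter to the state space realization of Definition~\ref{def_nr_2}, with a chain-rule decomposition at the end to obtain the entropy formula.

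First, for part (a), I would argue that $(S^n,V^n)$ is jointly Gaussian: the map from $(S_1,W^n)$ to $(S^n,V^n)$ dictated by \eqref{real_1a}--\eqref{real_1_ab} is affine, and the driving tuple is jointly Gaussian by hypothesis. Since any conditional of a jointly Gaussian tuple is Gaussian with linear conditional mean and (nonrandom) conditional covariance, ${\bf P}_{V_t|V^{t-1}}$ is Gaussian with mean $\mu_{V_t|V^{t-1}}={\bf E}\{V_t|V^{t-1}\}$ and covariance $K_{V_t|V^{t-1}}=\mathrm{cov}(V_t,V_t|V^{t-1})$. This immediately reduces the entropy computation of part (v) to identifying these two conditional moments, since $H(V_t|V^{t-1})=\tfrac{1}{2}\log(2\pi e \, K_{V_t|V^{t-1}})$ (for scalar $V_t$, $n_v=1$), and then the chain rule $H(V^n)=\sum_{t=1}^n H(V_t|V^{t-1})$ gives \eqref{entr_noise}.

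For part (b)(i)--(iv), I would invoke the standard generalized Kalman predictor for the correlated-noise state space \eqref{real_1a}--\eqref{real_1_ab} (correlation arises because the same $W_t$ drives both the state update and the measurement). The orthogonality principle gives $\hat{S}_t={\bf E}\{S_t|V^{t-1}\}$ and the innovation $\hat{I}_t\tri V_t-{\bf E}\{V_t|V^{t-1}\}=V_t-C_t\hat{S}_t=C_t E_t + N_t W_t$ where $E_t\tri S_t-\hat{S}_t$. Since $W_t$ is independent of $V^{t-1}$ and $E_t$ is a function of $(S_1,W^{t-1})$ and $V^{t-1}$, we obtain $K_{\hat{I}_t}=C_t\Sigma_t C_t^T+N_t K_{W_t}N_t^T$, which is \eqref{cov_in_noise} and hence the second line of (iv) via (a). The independence of $\hat{I}_t$ from $V^{t-1}$, and across $t$, follows from Gaussianity plus zero cross-covariance, i.e.\ ${\bf E}\{\hat{I}_t V_s^T\}=0$ for $s<t$ by construction of the projection. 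To derive the Kalman gain, I would compute $\mathrm{cov}(S_{t+1},\hat{I}_t)=A_t\Sigma_t C_t^T + B_t K_{W_t}N_t^T$, using $S_{t+1}=A_t S_t+B_t W_t$ together with ${\bf E}\{S_t W_t^T|V^{t-1}\}=0$ and ${\bf E}\{W_t W_t^T\}=K_{W_t}$; then set $M_t(\Sigma_t)=\mathrm{cov}(S_{t+1},\hat{I}_t)K_{\hat{I}_t}^{-1}$ and invoke the projection update $\hat{S}_{t+1}=A_t\hat{S}_t+M_t(\Sigma_t)\hat{I}_t$. This is exactly \eqref{kal_fil_noise}. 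Initial conditions $\hat{S}_1=\mu_{S_1}$ and $\Sigma_1=K_{S_1}$ follow because $V^0=\emptyset$.

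For (ii) and (iii), subtracting the estimator recursion from the state recursion yields $E_{t+1}=A_t E_t + B_t W_t - M_t(\Sigma_t)(C_t E_t + N_t W_t)=M_t^{CL}(\Sigma_t)E_t+(B_t-M_t(\Sigma_t)N_t)W_t$, which is \eqref{dre_1_a}. Taking covariances and using that $W_t$ is independent of $E_t$ (since $E_t$ is a function of $(S_1,W^{t-1},V^{t-1})$, all independent of $W_t$) gives the Joseph-form update; expanding and simplifying with the definition of $M_t(\Sigma_t)$ collapses it into the standard generalized DRE \eqref{dre_1}. Here the only mild bookkeeping point (which I expect to be the main obstacle, though still routine) is tracking the cross term $B_tK_{W_t}N_t^T$ produced by the shared noise $W_t$ in the state and observation equations—this is what distinguishes the generalized DRE from the classical one and is easy to drop if one is careless.

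Finally, combining (b)(iv) with the chain rule and scalar Gaussian entropy formula delivers (v). Throughout, positive semidefiniteness $\Sigma_t\succeq 0$ is automatic because $\Sigma_t={\bf E}\{E_t E_t^T\}$, and positive definiteness of $K_{\hat{I}_t}=C_t\Sigma_t C_t^T+N_tK_{W_t}N_t^T$—needed to invert it in $M_t(\Sigma_t)$—follows from $R_t=N_tK_{W_t}N_t^T\succ 0$ assumed in \eqref{cp_e_ar2_s1_a_new}.
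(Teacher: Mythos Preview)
Your proposal is correct and follows the same approach as the paper: the paper simply cites standard textbooks (e.g., Caines) for parts (a) and (b)(i)--(iv), and then derives (v) via the chain rule $H(V^n)=\sum_t H(V_t|V^{t-1})=\sum_t H(\hat{I}_t)$ using orthogonality of the Gaussian innovations. Your write-up is in fact more detailed than the paper's proof, which defers the Kalman-filter mechanics to the literature rather than sketching them as you do.
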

\begin{proof} (a), (b).(i)-(iv). The generalized Kalman filter of the PO-SS realization of $V^n$ and accompanied statements can be found in many textbooks, i.e., \cite{caines1988}. However,  it is noted that $\hat{I}_t, t=2, \ldots, n$, $\hat{I}_1=V_1$ are all independent Gaussian. For example, to  show   (\ref{dre_1_a}) we write the recursion for  $E_t=S_t-\hat{S}_t$ using part (i) and the realization of $S_t$. (b).(v) By the chain rule of joint entropy then 
\begin{align}
H(V^n)=&H(V_1) + \sum_{t=2}^n H(V_t|V^{t-1})   \\
=&H(V_1) + \sum_{t=2}^n H(V_t-{\bf E}\Big\{V_t\Big|V^{t-1}\Big\}|V^{t-1})\\
=&H(V_1) + \sum_{t=2}^n H(\hat{I}_t), \hso \hso \mbox{by orthogonality of $\hat{I}_t\tri V_t-{\bf E}\Big\{V_t\Big|V^{t-1}\Big\}$ and $V^{t-1}$} \label{entropy_1}
\end{align}
 From (\ref{entropy_1}) and (\ref{inn_noise}), then follows (\ref{entr_noise}),  from the entropy formula of  Gaussian RVs.
\end{proof}

From Lemma~\ref{lemma_POSS} follows directly the next corollary of the entropy  $H(V^n|s)$, when $S_1=s$ is fixed.\\

\begin{corollary} Conditional entropy $H(V^n|s), S_1=s$  of  the PO-SS noise realization.\\
\label{cor_POSS}
Consider the PO-SS realization of $V^n$ of  Definition~\ref{def_nr_2}, for fixed $S_1=s$, and denote the state process generated by recursion (\ref{real_1a}), by\footnote{We often use the notation $S_t=S_t^s$ to emphasize that the $S_t$ process is generated for  $S_1=S_1^s=s$  fixed.},  $S_t=S_t^s, t=2, \ldots, n, S_1=S_1^s=s$. Replace the  conditional covariance and conditional mean  (\ref{cond_GK_1}) and (\ref{cond_GK_2}), by 
\begin{align}
\Sigma_{t}^{s} \tri & cov\big(S_t^s, S_t^s\Big|V^{t-1},S_1^s) = {\bf E}\Big\{\Big(S_t^s- \hat{S}_{t}^{s}\Big)\Big(S_t^s-\hat{S}_{t}^{s} \Big)^T\Big|V^{t-1},S_1^s\Big\}, \label{cond_GK_1_n} \\
\hat{S}_{t}^{s} \tri& {\bf E}\Big\{S_t^s\Big|V^{t-1}, S_1^s\Big\},  \hso  t=2, \ldots, n, \hso  \hso  S_1^s=s, \hso    \hat{S}_1^{s} \tri s, \hso 
\Sigma_1^{s} \tri  cov\big(S_1^s, S_1^s|S_1^s) =0. \label{cond_GK_2_n}
\end{align} 
Then all statements of Lemma~\ref{lemma_POSS} hold, with the changes, 
\begin{align}
\Sigma_t \longmapsto \Sigma_t^s, \hso \Sigma_1^s=0, \hso {\bf P}_{V_t|V^{t-1}} \longmapsto {\bf P}_{V_t|V^{t-1},S_1^s}, \hso \hat{S}_t \longmapsto \hat{S}_t^s,\hso \hat{S}_1^s=s, \hso \mbox{etc}, \; t=1,\ldots, n. 
\end{align}
In particular, the conditional entropy of $V^n$ conditioned on $S_1=S_1^s=s$,  is given by 
\begin{align}
H(V^n|s)=\frac{1}{2}\sum_{t=1}^n \log\Big(2\pi e \Big[C_t \Sigma_{t}^s C_t^T + N_t K_{W_t} N_t^T\Big] \Big) \label{entr_noise_cond}
\end{align}
where $\Sigma_{t}^s, t=2, \ldots, n$ satisfies the generalized DRE (\ref{dre_1}) with initial condition $\Sigma_1^s=0$. 
\end{corollary}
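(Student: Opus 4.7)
The plan is to derive Corollary~\ref{cor_POSS} as a direct specialization of Lemma~\ref{lemma_POSS}, exploiting the fact that conditioning on a fixed (nonrandom) value $S_1=s$ is formally equivalent to running the same Gaussian system with a degenerate initial prior concentrated at $s$. First I would observe that, under the PO-SS realization of Definition~\ref{def_nr_2}, for each fixed $S_1=s$ the joint process $(S^n,V^n)$ is still jointly Gaussian (it is a linear image of the Gaussian sequence $W^n$ plus the deterministic initial condition $s$), so the conditional law ${\bf P}_{V_t|V^{t-1},S_1^s}$ is Gaussian and its mean and covariance are fully characterized by a Kalman-type recursion. The only change with respect to Lemma~\ref{lemma_POSS} is the initial uncertainty: by definition (\ref{cond_GK_2_n}), $\hat S_1^s=s$ and $\Sigma_1^s=\operatorname{cov}(S_1^s,S_1^s|S_1^s)=0$.

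Next I would verify, step by step, that the recursions (\ref{kal_fil_noise})–(\ref{dre_1}) are structurally unchanged when we replace the unconditional filter by the filter conditioned on $S_1=s$. The derivation in Lemma~\ref{lemma_POSS}.(b) uses only: (i) the linear dynamics $S_{t+1}=A_tS_t+B_tW_t$, (ii) the linear observation $V_t=C_tS_t+N_tW_t$, (iii) the independence of $\{W_t\}$ and the Gaussianity and independence of $S_1$ from $W^n$. All three properties persist after conditioning on $S_1=s$, since $W^n$ remains independent of $S_1$ and hence retains its original distribution under the conditional measure. Consequently, the gain $M_t(\Sigma_t^s)$, the innovations $\hat I_t^s=V_t-C_t\hat S_t^s$, the error recursion (\ref{dre_1_a}) with $E_1^s=0$, and the generalized DRE (\ref{dre_1}) all hold verbatim with the substitutions $\Sigma_t\mapsto\Sigma_t^s$, $\hat S_t\mapsto\hat S_t^s$, initialized at $\Sigma_1^s=0$, $\hat S_1^s=s$.

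Finally, I would assemble the entropy formula (\ref{entr_noise_cond}) by mimicking the chain-rule argument of Lemma~\ref{lemma_POSS}.(b).(v) under the conditional measure ${\bf P}_{\cdot|S_1^s}$:
\begin{align*}
H(V^n|s)=\sum_{t=1}^n H(V_t|V^{t-1},S_1^s)=\sum_{t=1}^n H(\hat I_t^s|S_1^s),
\end{align*}
where the second equality follows from the orthogonality of $\hat I_t^s$ and $V^{t-1}$ under the conditional measure. Since $\hat I_t^s\in N(0,C_t\Sigma_t^s C_t^T+N_tK_{W_t}N_t^T)$ (a nonrandom covariance), the standard Gaussian entropy formula yields (\ref{entr_noise_cond}) directly.

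I expect no substantive obstacle: the only subtlety worth emphasizing is the justification that $W^n$ retains its joint Gaussian law and its independence structure after conditioning on $S_1=s$, so that the Kalman filter argument of Lemma~\ref{lemma_POSS} transfers unchanged; this is immediate from the a priori independence of $S_1$ and $W^n$ in Definition~\ref{def_nr_2}. The remainder is bookkeeping on initial conditions.
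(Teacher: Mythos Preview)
Your proposal is correct and follows essentially the same approach as the paper, which simply states that the corollary ``follows directly from Lemma~\ref{lemma_POSS} and (\ref{cond_GK_1_n}),(\ref{cond_GK_2_n}).'' You have merely made explicit the justification that the paper leaves implicit: that the independence of $S_1$ and $W^n$ ensures the conditional system retains the same Gaussian-linear structure, so the Kalman recursions of Lemma~\ref{lemma_POSS} carry over verbatim with the initialization $\Sigma_1^s=0$, $\hat S_1^s=s$.
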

\begin{proof} Follows directly from Lemma~\ref{lemma_POSS} and  (\ref{cond_GK_1_n}),(\ref{cond_GK_2_n}) . 
\end{proof}

Next we introduce an example of a PO-SS realization of the noise that we often use in  the paper. \\

\begin{example} 
\label{ex_1_poss}
A time-varying PO-SS$(a_t,c_t, b_t^1, b_t^2, d_t^1, d_t^2)$ noise realization is  defined by 
\begin{align}
&S_{t+1} = a_{t} S_{t} + b_t^1 W_t^1+b_t^2 W_t^2, \hso t=1, 2, \ldots, n-1 \label{ex_1_1_poss}\\
&V_t =c_t S_t +d_t^1 W_t^1 + d_t^2 W_t^2,\hso t=1, \ldots, n,\label{ex_1_1_poss_a} \\
&S_1\in  N(\mu_{S_1}, K_{S_1}),\hso  K_{S_1}\geq 0,  \hso W_t^i \in N(0,  K_{W_t^i}), \hso K_{W_t^i}\geq 0,\hso  i=1,2,\hso t=1, \ldots, n, \\
& \mbox{$W^{1,n}$ and $W^{2,n}$  indep. seq. and indep. of $S_1$}, \label{ex_1_2_poss} \\
&a_t \in {\mathbb R}, \hso c_t \in  {\mathbb R}, \hso b_t^i \in  {\mathbb R}, \hso   d_t^i \in  {\mathbb R}, \hso i=1,2, \forall t \hso \mbox{are nonrandom},\\
&b_t \circ b_t \tri   \big(b_t^1\big)^2 K_{W_t^1} + \big(b_t^2\big)^2K_{W_t^2},   \hso b_t \circ d_t \tri   b_t^1 K_{W_t^1}d_t^1 +b_t^ 2K_{W_t^2}d_t^2, \\
&      d_t\circ d_t\tri  \big(d_t^1\big)^2 K_{W_t^1} + \big(d_t^2\big)^2K_{W_t^2}>0, \; \forall t.\label{ex_1_3_poss}
\end{align}
\end{example}

The next corollary is an application of Lemma~\ref{lemma_POSS}  to the time-varying PO-SS noise of Example~\ref{ex_1_poss}. \\

\begin{corollary} 
\label{entr_poss}
The entropy $H(V^n)$ of  the PO-SS$(a_t,c_t, b_t^1, b_t^2, d_t^1, d_t^2)$ noise of Example~\ref{ex_1_poss} is computed from  Lemma~\ref{lemma_POSS}  with the following changes:
\begin{align}
&C_t \longmapsto c_t, \hso A_t \longmapsto a_t, \hso  B_t K_{W_t} N_t^T \longmapsto b_t \circ d_t, \nonumber \\
& B_tK_{W_t}B_t^T \longmapsto b_t \circ b_t, \hso N_t K_{W_t} N_t^T \longmapsto d_t \circ d_t, \hso   t=1,\ldots, n.  \label{poss_ex_1}
\end{align}
\end{corollary}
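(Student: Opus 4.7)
The plan is to recognize Example~\ref{ex_1_poss} as a special instance of the PO-SS realization of Definition~\ref{def_nr_2}, by vectorizing the two driving noises $W^1, W^2$ into a single noise process, and then to invoke Lemma~\ref{lemma_POSS} verbatim after translating the matrix quantities into the scalar ``circle'' notation of Example~\ref{ex_1_poss}.

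First I would fix the dimensions $n_s = 1$, $n_v = 1$, $n_w = 2$ in Definition~\ref{def_nr_2}, and form the stacked noise $W_t \tri (W_t^1, W_t^2)^T$, with covariance matrix
\begin{align}
K_{W_t} = \begin{bmatrix} K_{W_t^1} & 0 \\ 0 & K_{W_t^2} \end{bmatrix} \succeq 0,
\end{align}
which is block-diagonal by the independence assumption in (\ref{ex_1_2_poss}). The identification of the state-space matrices is then
\begin{align}
A_t = a_t, \quad B_t = \begin{bmatrix} b_t^1 & b_t^2 \end{bmatrix}, \quad C_t = c_t, \quad N_t = \begin{bmatrix} d_t^1 & d_t^2 \end{bmatrix},
\end{align}
so that recursions (\ref{real_1a})--(\ref{real_1_ab}) reproduce exactly (\ref{ex_1_1_poss})--(\ref{ex_1_1_poss_a}). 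The initial state $S_1 \in N(\mu_{S_1}, K_{S_1})$ and its independence from $W^{1,n}, W^{2,n}$ carry over directly, so Example~\ref{ex_1_poss} lies within the class covered by Lemma~\ref{lemma_POSS}.

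Next I would compute the three matrix products that appear in the Riccati recursion (\ref{dre_1}) and the innovations covariance (\ref{inn_noise}) under this identification. Multiplying out the block-diagonal covariance with the row vectors $B_t$ and $N_t$ gives
\begin{align}
B_t K_{W_t} B_t^T &= (b_t^1)^2 K_{W_t^1} + (b_t^2)^2 K_{W_t^2} = b_t \circ b_t, \\
B_t K_{W_t} N_t^T &= b_t^1 K_{W_t^1} d_t^1 + b_t^2 K_{W_t^2} d_t^2 = b_t \circ d_t, \\
N_t K_{W_t} N_t^T &= (d_t^1)^2 K_{W_t^1} + (d_t^2)^2 K_{W_t^2} = d_t \circ d_t > 0,
\end{align}
which are exactly the three scalar quantities introduced in (\ref{ex_1_3_poss}), with positivity of $d_t \circ d_t$ ensuring $R_t \succ 0$ as required in (\ref{cp_e_ar2_s1_a_new}).

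Finally, I would substitute the replacements in (\ref{poss_ex_1}) into the generalized DRE (\ref{dre_1}), the innovations covariance (\ref{inn_noise}), and the entropy expression (\ref{entr_noise}) of Lemma~\ref{lemma_POSS}. Because every step of the Kalman-filter derivation in Lemma~\ref{lemma_POSS} depends on the $B_t, N_t, K_{W_t}$ only through the three bilinear forms computed above, the resulting scalar recursions and the entropy formula hold verbatim with the circle-notation substitutions. There is no genuine obstacle here; the only point that requires care is ensuring that the two independent noise sequences are packaged into a single vector with a block-diagonal covariance, so that Lemma~\ref{lemma_POSS}, whose statement is for one noise process, applies without modification.
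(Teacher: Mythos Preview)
Your proposal is correct and is essentially the same approach as the paper, which simply records ``This is easily verified.'' You have merely spelled out the identification of Example~\ref{ex_1_poss} with Definition~\ref{def_nr_2} via the stacked noise $W_t=(W_t^1,W_t^2)^T$ and computed the three bilinear forms explicitly, which is exactly the verification the paper leaves to the reader.
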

\begin{proof}
This is easily verified.
\end{proof}

From  Corollary~\ref{entr_poss}  we have the following observations.\\

\begin{remark} Consider the PO-SS$(a_t,c_t, b_t^1, b_t^2, d_t^1, d_t^2)$
\label{rem_poss_1}
 noise of Example~\ref{ex_1_poss}. Then the following hold.\\
(a) Consider  the code of Definition~\ref{def_nr_2}. At each time $t$, the optimal channel input process $X^n$ is either realized 
 by (\ref{cp_6}),  
or equivalently 
by (\ref{Q_1_3_s1}), i.e.,  $X_t=\sum_{j=1}^{t-1} B_{t,j}V_{t,j} + \overline{Z}_t=  \sum_{j=1}^{t-1}\Gamma_{t,j}^1 V_j + \sum_{j=1}^{t-1}\Gamma_{t,j}^2 Y_j +Z_t$.
Moreover,   $X_t$ cannot be expressed in terms of the state $S^t$, because by  (\ref{ex_1_1_poss}) and (\ref{ex_1_1_poss_a}) the noise sequence  $V^{t-1}$  does not specify  $S^t$, for $t=1, \ldots, n$. 
\\
(b) Consider the code of Definition~\ref{def_rem:cp_1}, i.e.,  with  a fixed initial state $S_1=S_1^s=s$. By Corollary~\ref{cor_POSS} using  (\ref{poss_ex_1}), then  $H(V^n|s)$ is computed from Lemma~\ref{lemma_POSS}, with $\Sigma_1=\Sigma_1^s=0$, and (\ref{entr_noise_cond}) reduces to   
\bea
H(V^n|s)=\frac{1}{2}\sum_{t=1}^n \log\Big(2\pi e \Big[\big(c_t\big)^2 \Sigma_{t}^s  + d_t\circ d_t \Big] \Big) \label{ent_fin}
\eea
where $\Sigma_t^s$ is the solution of (\ref{dre_1}) with $\Sigma_1=\Sigma_1^s=0$ (using (\ref{poss_ex_1})).
\end{remark}

We also apply our  results to various versions of the autoregressive moving average (ARMA) noise model, such as,  the double-side and single-sided, stationary version of the ARMA noise, previously analyzed  in  \cite{kim2010} and  in many other  papers, to illustrate fundamental differences of Case I) and Case II) formulations. \\

\begin{example} The time-invariant  ARMA$(a,c)$ noise\\
\label{ex_1_1_n}
(a) The time-invariant one-sided, stable or unstable,   autoregressive moving average (ARMA$(a,c), a \in (-\infty,\infty), c\in (-\infty, \infty)$) noise is  defined by 
\begin{align}
&V_t = c V_{t-1} + W_t -a W_{t-1}, \hso \forall t \in {\mathbb Z}_+\tri \{1, 2, \ldots\},  \label{ex_1_1}\\
&V_0\in  N(0, K_{V_0}),\hso  K_{V_0}\geq 0, \hso W_0 \in N(0,  K_{W_0}), \hso K_{W_0}\geq 0,  \hso W_t \in N(0,  K_{W}), \hso   K_{W}>0,\hso \forall t \in {\mathbb Z}_+, \\
& \mbox{$\{W_0, W_1, \ldots, W_n\}$ indep. seq. and indep. of $V_0$}, \label{ex_1_2} \\
&c \in (-\infty,\infty), \hso a \in (-\infty,\infty), \hso c\neq a .\label{ex_1_3}
\end{align}
To express the AR$(a,c)$ in state space form we  define the state variable of the noise by 
\begin{align}
S_t \tri \frac{c V_{t-1} -a W_{t-1}}{c-a}, \hso \forall t \in {\mathbb Z}_+    \label{ex_1_4}
\end{align}
Then the state space realization of $V^n$ is 
\begin{align}
&S_{t+1}=c S_t +W_t, \hso \forall t \in {\mathbb Z}_+,\label{ex_1_5} \\
&V_t= \Big(c-a\Big) S_t + W_t, \hso \forall t \in {\mathbb Z}_+, \label{ex_1_6}\\
&K_{S_{1}}= \frac{\big(c\big)^2 K_{V_0} +\big(a\big)^2 K_{W_0}}{\Big(c-a\Big)^2}, \hso K_{V_0}\geq 0,\hso K_{W_0}\geq 0 \hst \mbox{both given}.\label{ex_1_7}
\end{align}
We note that the AR$(a,c)$ is not necessarily  stationary or asymptotically stationary.\\ 
A special case of the AR$(a,c)$ is the AR$(c)$ noise (i.e.,  with $a=0$)  defined by 
\begin{align}
&V_t = c V_{t-1} + W_t, \hso t=1, 2, \ldots, \hso K_{V_0}\geq 0, \hso K_{W}>0. \label{ex_1_1_AR1}
\end{align}
(b) Double-Sided Wide-Sense Stationary  ARMA$(a,c), a \in [-1,1], c\in (-1,1)$ Noise. A   double-sided wide-sense stationary  ARMA$(a,c)$ noise is  defined by 
\begin{align}
V_t = cV_{t-1} + W_t -a W_{t-1}, \hso \forall t \in {\mathbb Z}\tri  \{\ldots, -1,0,1, \ldots\}, \hso |a|\leq 1, \: |c|<1. \label{ex_2_2_in}
\end{align}
where  $W_t , \forall t \in {\mathbb Z}$ is an independent and identically distributed Gaussian sequence, i.e., $W_t \in N(0, K_W)$, $\forall t$. The power spectral density (PSD) of the  wide-sense stationary noise  is  (this corresponds to  \cite[eqn(43) with $L=1$]{kim2010}) is given by    
\begin{align}
S_V(e^{j\theta}) \tri& K_W\frac{\Big( 1-a e^{i\theta}\Big)\Big(1-a e^{-i\theta}\Big)}{\Big( 1- c e^{i \theta}\Big)\Big(1- c e^{-i\theta}\Big)}, 
\hso |c|<1,\hso |a|\leq 1, \hso c\neq a, \hso K_W>0. \label{ex_2_1_in}
\end{align} 
We define the  state process by 
\begin{align}
S_t \tri \frac{c V_{t-1} -a W_{t-1}}{c-a},   \hso \forall t \in {\mathbb Z}  . \label{ex_2_3_in}
\end{align}
Then the stationary state space realization of $V_t, \forall t \in {\mathbb Z}$ is 
\begin{align}
&S_{t+1}=c S_t +W_t, \hso \forall t\in {\mathbb Z},\label{ex_2_4_in} \\
&V_t= \Big(c-a\Big) S_t + W_t, \hso \forall t\in {\mathbb Z} \label{ex_2_5_in} 
\end{align}
provided the initial covariances, $cov(S_t,S_t), cov(S_t,V_t), cov(V_t,V_t)$ are chosen appropriately to ensure stationarity (see Proposition~\ref{pr_ex_2}). \\
(c) One-sided Wide-Sense Stationary  ARMA$(a,c), a \in [-1,1], c \in (-1,1)$.     The one-sided wide-sense stationary  ARMA$(a,c)$ noise is defined as in part (b) with  $\forall t\in {\mathbb Z}\tri \{\ldots, -1,0,1, \ldots,\} $ replaced by    $\forall t \in {\mathbb Z}_+\tri \{1,2, \ldots,\}$ and (\ref{ex_2_3_in})-(\ref{ex_2_5_in}) hold,  $\forall t \in {\mathbb Z}_+$, provide the initial covariances are chosen appropriately (see Proposition~\ref{pr_ex_2}). 
\end{example}

For the AR$(a,c)$ noise, in  the next remark we clarify  differences of the feedback codes of  Definition~\ref{def_code} and Definition~\ref{def_rem:cp_1}, and of  Case I) formulation versus Case II) formulation (and discuss implication to results in   \cite{kim2010,liu-han2019,gattami2019,ihara2019,li-elia2019}).\\

\begin{remark}   ARMA$(a,c)$ noise of Example~\ref{ex_1_1_n}\\
\label{rem_kim}
(a) Consider any of the AR$(a,c)$ of Example~\ref{ex_1_1_n}. For  the code of Definition~\ref{def_nr_2} the channel input process  $X^n$ cannot be expressed in terms of the state $S^n$ (see also  Remark~\ref{rem_poss_1}.(a)). \\
(b) Consider the nonstationary AR$(a,c), a\in (-\infty,\infty), c\in (-\infty,\infty)$ of Example~\ref{ex_1_1_n}.(a).\\
(i) Assume the code of Definition~\ref{def_rem:cp_1}, with  initial state $V_0=v_0$ known to the encoder. By   (\ref{ex_1_4}),  
\bea
S_1=S_1^{v_0}= \frac{c v_{0} -a W_{0}}{c-a}, \hso V_0=v_0
\eea
 and hence knowledge of  $V_0=v_0$ at the encoder does not determine  $S_1^{v_0}$, because for this to hold the encoder requires knowledge  of $W_0$. It then follows that $H(V^n|v_0)$ is computed from Corollary~\ref{cor_POSS},
\bea 
 \Sigma_1=\Sigma_1^{v_0}=\frac{\big(a)^2K_{W_0}}{\big(c-a\big)^2}, \hso \mbox{and (\ref{entr_noise_cond}) reduces to }\hso   H(V^n|v_0)=\frac{1}{2}\sum_{t=1}^n \log\Big(2\pi e \Big[\big(c\big)^2 \Sigma_{t}^{v_0}  + K_{W} \Big] \Big)
\eea 
  where $\Sigma_t^{v_0}$ is the solution of (\ref{dre_1}) with initial data $\Sigma_1=K_{S_1}=\Sigma_1^{v_0},  K_{W_0}\geq 0$.\\
(ii) Assume the code of Definition~\ref{def_rem:cp_1}, with  initial state $S_1=s$ or  $(V_0, W_0)=(v_0,w_0)$, known to the encoder. Then  by Corollary~\ref{cor_POSS},
\bea
H(V^n|v_0,w_0)=\frac{1}{2}\sum_{t=1}^n \log\Big(2\pi e   K_{W}  \Big).
\eea
By   (\ref{ex_1_4}), $S_1 \tri \frac{c V_{0} -a W_{0}}{c-a}$, and  a    necessary condition for Conditions 1 of  Section~\ref{sect:motivation} to hold is:  both $(V_0,W_0)=(v_0,w_0)$ are known to the encoder and the decoder.  \\
(c) The statements of parts (a), (b) also hold for the double-sided and the one-sided wide-sense stationary   AR$(a,c), a\in [-1,1], c\in (-1,1)$ of Example~\ref{ex_1_1_n}.(b), (c).\\
(d) Case II) formulation discussed in Section~\ref{sect:motivation}, requires Conditions 1 and 2 to hold. For any of the AR$(a.c)$ noise models, then  Conditions 1 and 2 hold if and only if $S_1=s_1$ or  $(V_0, W_0)=(v_0,w_0)$ are known to the encoder. Clearly, the values of   $H(V^n)$  under Case I) formulation is fundamentally different from the value of  $H(V^n|s), S_1=s$ under Case II) formulation. Consequently,  in general,  $C_n^{fb}(\kappa)$ given by  (\ref{ftfic_is_g}) is fundamentally different from $C_n^{fb}(\kappa,s)$, i.e., that corresponds to a fixed initial state $S_1=s$, known to the encoder and the decoder, and  to the channel input distribution. \\
(e) From parts (a)-(d) follows the  characterization of feedback capacity  for the stationary ARMA$(a,c), a \in[-1,1], c\in (-1,1)$ given in   \cite[Theorem~6.1, $C_{FB}$]{kim2010} (which is derived based on \cite[Lemmas~6.1]{kim2010})  presupposed the encoder and the decoder assumed knowledge  of $S_1=S_1^s=s$ (similarly for  \cite{liu-han2019,gattami2019,li-elia2019}).
\end{remark}

 In the next proposition, we state conditions for the stable realizations of Example~\ref{ex_1_1_n}.(a), i.e., AR$(a,c), a\in [-1,1], c\in (-1,1)$ to be asymptotically stationary, and  for the realizations of Example~\ref{ex_1_1_n}.(b), (c) to be stationary. We should emphasize that for stationary noise,  we need to  determine  the  initial conditions of the  generalized Kalman-filter of Lemma~\ref{lemma_POSS} to correspond to the stationary noise.  \\

\begin{proposition} Asymptotically stationary and  stationary  ARMA$(a,c)$ noises of Example~\ref{ex_1_1_n} \\
\label{pr_ex_2}
(a) The realization of the double-sided ARMA$(a,c), a \in [-1,1], c\in (-1,1)$ noise of Example~\ref{ex_1_1_n}.(b) is stationary if the following conditions hold. 
\begin{align}
&d_{11} \tri cov\big(S_t,S_t\big)=K_{S_{t}}, \hso d_{12}\tri cov\big(S_t, V_t\big)=K_{S_t,V_t}, \hso d_{22}\tri cov(V_t,V_t)=K_{V_t},  \hso \mbox{are constant  $\forall t \in {\mathbb Z}$}. \label{ex_2_6}
\end{align}
where  the constants $(d_{11}, d_{12}, d_{22})$ are given by 
\begin{align}
&d_{11} = \frac{K_W}{1- c^2}, \hst d_{12}=\frac{\big(c-a\big)K_W}{1-c^2}, \hst d_{22}= \frac{\big(c-a\big)^2K_W}{1-c^2}+K_W. \label{ex_2_7}
\end{align}
Similarly the one-sided ARMA$(a,c), a \in [-1,1], c\in (-1,1)$ noise of Example~\ref{ex_1_1_n}.(c) is stationary if the above equations hold  $\forall t\in {\mathbb Z}_+\tri \{1,2,\ldots\}$.\\
(b) The realization of the ARMA$(a,c)$ noise of Example~\ref{ex_1_1_n}.(a) is asymptotically stationary if   $a\in [-1,1], c\in (-1,1)$. \\ 
(c) For the stationary  realization of part (a) the optimal conditional variance and conditional mean of $S_t$ from $(V_0, V_1,V_2, \ldots, V_{t-1})$, i.e.,  $\Sigma_{t} \tri cov\big(S_t, S_t\Big|V^{t-1},V_0), \hat{S}_{t} \tri {\bf E}\Big\{S_t\Big|V^{t-1}, V_0\Big\}$ are defined by the  generalized Kalman-filter given by 
\begin{align}
&\hat{S}_{t+1}=c \hat{S}_{t}+ \Big( \big(c\big)^2  \Sigma_{t}+K_{W}\Big)\Big(K_{W}+ \big(c-a\big)^2 \Sigma_{t} \Big)^{-1} \Big(V_t - \big(c-a\big) \hat{S}_t\Big), \hso t=1,2,  \ldots,  \label{arma_s1}\\
&\Sigma_{t+1}= \big(c\big)^2 \Sigma_{t}  + K_{W_{t}} -\Big( \big(c\big)^2  \Sigma_{t}+K_{W}  \Big)^2 \Big(K_{W} +\big(c\big)^2  \Sigma_{t} \Big)^{-1}\label{arma_s2}
\end{align}
initialized at the  initial data
\begin{align}
&\hat{S}_1 \tri {\bf E}\Big\{S_1 \Big|V_0\Big\}= \frac{cd_{12}+K_W}{d_{22}}V_0 , \label{in_data_1}\\
& \Sigma_1 \tri cov(S_1,S_1\Big|V_0)  =   d_{11} - \frac{\big(d_{12}\big)^2}{d_{22}}. \label{in_data_2}
\end{align}
(i) If the conditioning information is  $(V_{-N}, \ldots, V_0, V_1,V_2, \ldots, V_{t-1})$ then the generalized Kalman-filter (\ref{arma_s1}), (\ref{arma_s2}) still hold, and initialized at the initial data
\begin{align}
&\hat{S}_{-N} = \frac{cd_{12}+K_W}{d_{22}}V_{-N},  \label{in_data_3}   \\
& \Sigma_{-N}  =   d_{11} - \frac{\big(d_{12}\big)^2}{d_{22}}. \label{in_data_4}
\end{align}
(ii) If the inital data $V_0$ is not available then the generalized Kalman-filter is initialized at initial data $\hat{S}_1=0$, $\Sigma_1 = cov(S_1,S_1)=d_{11}$.
\end{proposition}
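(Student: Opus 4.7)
\textbf{Proof plan for Proposition~\ref{pr_ex_2}.} The three parts are essentially standard calculations for a time-invariant linear Gauss–Markov model with correlated process/measurement noise, so the plan is to handle each part by direct covariance propagation, invoking the generalized Kalman filter of Lemma~\ref{lemma_POSS} only at the end for part (c).

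For part (a), the plan is to impose time-invariance on the second moments of the joint process $(S_t,V_t)$ governed by (\ref{ex_2_4_in})–(\ref{ex_2_5_in}), and then solve the resulting Lyapunov-type equations. Concretely, since $W_t$ is independent of $S_t$, taking variances of $S_{t+1}=cS_t+W_t$ under the stationarity requirement $cov(S_{t+1},S_{t+1})=cov(S_t,S_t)=d_{11}$ gives $d_{11}=c^2 d_{11}+K_W$, hence $d_{11}=K_W/(1-c^2)$, which is well defined because $|c|<1$. Substituting into $V_t=(c-a)S_t+W_t$ and again using $W_t\perp S_t$ produces $d_{12}=(c-a)d_{11}$ and $d_{22}=(c-a)^2 d_{11}+K_W$, which match (\ref{ex_2_7}). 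The one-sided case of Example~\ref{ex_1_1_n}(c) is identical once the initial data $(K_{S_1},K_{S_1,V_1},K_{V_1})$ are set to $(d_{11},d_{12},d_{22})$, since the recursion then reproduces these values for every $t\in{\mathbb Z}_+$.

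For part (b), the plan is to track the Lyapunov recursion $K_{S_{t+1}}=c^2 K_{S_t}+K_W$ driven by the asymptotically stable scalar $c\in(-1,1)$, and to argue contraction to the unique fixed point $d_{11}=K_W/(1-c^2)$. Because the iteration $x\mapsto c^2 x+K_W$ is an affine contraction on $[0,\infty)$ with Lipschitz constant $c^2<1$, iterates converge geometrically from any $K_{S_1}\ge 0$; the same contraction, pushed through the linear map $V_t=(c-a)S_t+W_t$, gives convergence of the cross-covariance $K_{S_t,V_t}$ and the output covariance $K_{V_t}$ to $d_{12}$ and $d_{22}$, respectively. This establishes asymptotic (wide-sense) stationarity of $(S_t,V_t)$, and hence of $V_t$.

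For part (c), the plan is to use the MMSE projection formula for jointly Gaussian zero-mean vectors together with part (a). Since $S_1=cS_0+W_0$ and $V_0=(c-a)S_0+W_0$ with $W_0$ independent of $S_0$, a direct covariance calculation in the stationary regime yields $\mathrm{cov}(S_1,V_0)=c(c-a)d_{11}+K_W=cd_{12}+K_W$, and combining with $\mathrm{var}(S_1)=d_{11}$ and $\mathrm{var}(V_0)=d_{22}$ gives formulas (\ref{in_data_1})–(\ref{in_data_2}). Subpart (i) is an immediate relabeling: starting the observation record at $-N$ instead of $0$ simply shifts the index in stationarity, so the same projection of $S_{-N}$ on $V_{-N}$ yields (\ref{in_data_3})–(\ref{in_data_4}); the filter then propagates by (\ref{arma_s1})–(\ref{arma_s2}) up to time $t$. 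Subpart (ii) is the unconditional (no-observation) case, where $\hat{S}_1=\mathbf{E}\{S_1\}=0$ and $\Sigma_1=\mathrm{var}(S_1)=d_{11}$ by definition.

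The main obstacle I anticipate is bookkeeping in part (c): one must carefully separate the prior covariance $K_{S_1}=d_{11}$ (used when $V_0$ is unobserved) from the posterior covariance after conditioning on $V_0$, and then verify that feeding $(\hat{S}_1,\Sigma_1)$ into the generalized Kalman recursion of Lemma~\ref{lemma_POSS}, with the identifications $(A,B,C,N)=(c,1,c-a,1)$ from Corollary~\ref{entr_poss}, yields exactly (\ref{arma_s1})–(\ref{arma_s2}). Everything else is routine algebra once the Lyapunov fixed point of part (a) has been computed.
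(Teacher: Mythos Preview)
Your proposal is correct and follows essentially the same approach as the paper: propagate the second-moment recursions $K_{S_{t+1}}=c^2K_{S_t}+K_W$, $K_{S_t,V_t}=(c-a)K_{S_t}$, $K_{V_t}=(c-a)^2K_{S_t}+K_W$ to obtain the fixed point $(d_{11},d_{12},d_{22})$ for part (a), use the affine contraction $x\mapsto c^2x+K_W$ for part (b), and invoke the Gaussian MMSE projection formula together with Lemma~\ref{lemma_POSS} (with $(A,B,C,N)=(c,1,c-a,1)$) for part (c). The only cosmetic slip is that the identifications $(A,B,C,N)=(c,1,c-a,1)$ come directly from the ARMA state-space realization (\ref{ex_1_5})--(\ref{ex_1_6}) rather than from Corollary~\ref{entr_poss}, which treats the PO-SS$(a_t,c_t,b_t^1,b_t^2,d_t^1,d_t^2)$ model.
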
 
\begin{proof} See Appendix~\ref{sect:app_pr_ex_2}.
\end{proof}

\ \

\begin{remark} Consider the stationary double-sided or one-sided ARMA$(a,c), a \in [-1,1], c\in (-1,1)$  of Example~\ref{ex_1_1_n}. 
From  Proposition~\ref{pr_ex_2}, and in particular the initial data $\hat{S}_1, \Sigma_1$ stated  in  (\ref{in_data_1}), (\ref{in_data_2}), it is  clear that even if the encoder and the decoder know the initial state  $V_0$, then $H(V^n|v_0) \neq \frac{1}{2}\sum_{t=1}^n \log\Big(2\pi e   K_{W}  \Big)$. In this case, the value of $C_n^{fb}(\kappa,v_0)$ defined by (\ref{ftfic_is_g_is_def}) is fundamentally different from the formulation in \cite{yang-kavcic-tatikonda2007} and \cite{kim2010}  that let to  the characterization of feedback capacity \cite[Theorem~6.1]{kim2010}. 
\end{remark}

In the next corollary we further clarify the difference between Case I) formulation and Case II) formulation, by stating the   analog of Theorem~\ref{thm_FTFI} for  
   the code of Definition~\ref{def_rem:cp_1}, i.e., when $S_1=S_1^s=s$ is fixed.\\

\begin{corollary} $n-$FTFI capacity for feedback code of Definition~\ref{def_rem:cp_1}\\
\label{cor_nftfic_s}
Consider the time-varying AGN channel defined by (\ref{g_cp_1}), driven by a noise with the PO-SS realization  of Definition~\ref{def_nr_2}, and the code of Definition~\ref{def_rem:cp_1}, with initial state $S_1=S_1^s=s$ fixed.\\
Then the following hold.\\
(a) The  $n-$FTFI capacity $C_n^{fb}(\kappa,s)$ is given by 
\begin{align}
C_n^{fb}(\kappa,s) \tri &  \sup_{ \frac{1}{n} {\bf E}_s\Big\{\sum_{t=1}^n \big(X_t\big)^2 \Big|S_1\Big\}\leq \kappa}  H^{\overline{P}}(Y^n|s)-H(V^n|s). \label{CP_F_new_nn}\\
X_t =& \Gamma^0 s+  \sum_{j=1}^{t-1} \Gamma_{t,j}^1 { V}_{j} + \sum_{j=1}^{t-1}\Gamma_{t,j}^2 Y_j + Z_t,    \hso t=1, \ldots,n. \label{Q_1_3_s1_is} 
 \end{align}
where the supremum is over all $(\Gamma^0, \Gamma_{t,j}^1, \Gamma_{t,j}^2, K_{Z_t}), j=1, \ldots, t-1, t=1, \ldots, n$ of the realization of $X^n$, that induces the distribution $\overline{P}_t(dx_t|v^{t-1},y^{t-1},s), t=1, \ldots, n$, and   all statements of  Theorem~\ref{thm_FTFI} and Lemma~\ref{lemma_POSS} hold, with the  conditional distribitions,  expectations,  and entropies replaced  by the corresponding expressions with fixed   $S_1=S_1^s=s$.\\
(b) A necessary condition  for Condition 2 of  Section~\ref{sect:motivation} to hold is \\
(i) $N_tW_t$ uniquely defines $C_{t+1}B_t W_t, \forall t$.\\ 
Moreover, if (i) holds then  the  entropy $H(V^n|s)$ of part (a)   is given by 
\bea
H(V^n|s)=\frac{1}{2}\sum_{t=1}^n \log\Big(2\pi e N_t K_{W_t} N_t^T \Big). \label{cond_ent}
\eea 
\end{corollary}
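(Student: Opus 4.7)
For part (a), the plan is to rerun the derivation of Theorem~\ref{thm_FTFI} after conditioning every distribution, expectation, and entropy on $S_1=s$, using the feasible sets $\overline{\cal P}^{s}_{[0,n]}(\kappa)$ and $\overline{\cal E}^{s}_{[0,n]}(\kappa)$ introduced in Notation~\ref{not_1}. Since the encoder and decoder of Definition~\ref{def_rem:cp_1} share $s$ as common side information, the identities (\ref{inp_1})--(\ref{inp_3}) and the inequality $\sup_{\overline{\cal E}^s}\sum_t H^{\overline{e}}(Y_t|Y^{t-1},s)\leq \sup_{\overline{\cal P}^s}\sum_t H^{\overline{P}}(Y_t|Y^{t-1},s)$ carry through by the same argument as in Theorem~\ref{thm_FTFI}(a). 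The Gaussian maximum-entropy principle applied under the conditional power constraint then forces the optimizer $\overline{P}^{s}_t(dx_t|v^{t-1},y^{t-1},s)$ to be conditionally Gaussian with conditional mean affine in $(s,V^{t-1},Y^{t-1})$ and nonrandom conditional covariance $K_{Z_t}$. Writing $Z_t=X_t-{\bf E}^{\overline{P}}\{X_t|V^{t-1},Y^{t-1},S_1\}$ produces the orthogonal innovation realization (\ref{Q_1_3_s1_is}), with the extra term $\Gamma^0 s$ reflecting common knowledge of the initial state; the characterization (\ref{CP_F_new_nn}) then follows from Theorem~\ref{thm_FTFI}(d) applied conditionally on $S_1=s$.

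For the necessary direction in part (b), I would argue at the level of measurability. Condition~2 states that, given $s$, the state sequence $S^t$ is a deterministic function of $V^{t-1}$, so $\sigma(S^t)\subseteq \sigma(V^{t-1},s)$. Since $V_t=C_tS_t+N_tW_t$ with $C_tS_t$ already $\sigma(V^{t-1},s)$-measurable, the residual $N_tW_t=V_t-C_tS_t$ is $\sigma(V^t,s)$-measurable. Applying Condition~2 one step forward, $S_{t+1}=A_tS_t+B_tW_t$ is $\sigma(V^t,s)$-measurable; subtracting the $\sigma(V^{t-1},s)$-measurable term $A_tS_t$ yields $B_tW_t$, and therefore $C_{t+1}B_tW_t$, $\sigma(V^{t-1},s,N_tW_t)$-measurable. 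Because $W_t$ is independent of $\sigma(V^{t-1},s)$, every linear functional of $W_t$ that is $\sigma(V^{t-1},s,N_tW_t)$-measurable must be a deterministic function of $N_tW_t$ alone, which is precisely condition (i).

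For the entropy formula, I would start from Corollary~\ref{cor_POSS}, which gives $H(V^n|s)=\frac{1}{2}\sum_t\log\big(2\pi e[C_t\Sigma_t^sC_t^T+N_tK_{W_t}N_t^T]\big)$ with $\Sigma_t^s$ satisfying the DRE (\ref{dre_1}) initialized at $\Sigma_1^s=0$. The goal is to prove inductively that $\Sigma_t^s C_t^T=0$ for all $t$, since positive semidefiniteness of $\Sigma_t^s$ then forces $C_t\Sigma_t^sC_t^T=0$ and (\ref{entr_noise_cond}) collapses to (\ref{cond_ent}). The base case is immediate from $\Sigma_1^s=0$. In the inductive step, substituting $\Sigma_t^sC_t^T=0$ into (\ref{dre_1}) collapses the Kalman gain to $M_t=B_tK_{W_t}N_t^T(N_tK_{W_t}N_t^T)^{-1}$ and reduces $\Sigma_{t+1}^s$ to $A_t\Sigma_t^sA_t^T$ plus the conditional covariance of $B_tW_t$ given $N_tW_t$; sandwiching by $C_{t+1}$ and invoking (i) annihilates the second summand. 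The main obstacle is controlling the first summand $C_{t+1}A_t\Sigma_t^sA_t^TC_{t+1}^T$: condition (i) by itself does not eliminate this term, and one must use the stronger structural consequence of Condition~2, namely that $S_{t+1}$ is exactly recoverable from $(s,V^t)$, to argue that the residual state-error orthogonal to $C_t$ cannot leak into observable directions through the closed-loop map $M_t^{CL}$ of Lemma~\ref{lemma_POSS}. This is the technical heart of the argument, where the measurability chain from the necessity direction must be turned into a matrix identity relating $A_t$, $C_{t+1}$, and the range of $\Sigma_t^s$.
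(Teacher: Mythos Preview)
Your treatment of part (a) and of the necessity direction in (b) coincides with the paper's: rerun Theorem~\ref{thm_FTFI} with all distributions, expectations, and entropies conditioned on $S_1=s$, and read off condition (i) from the measurability constraint that Condition~2 imposes on the PO-SS recursion. Your measurability argument is more explicit than the paper's single sentence, but the content is the same.

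For the entropy formula in (b) you take a different route from the paper. The paper does not pass through Corollary~\ref{cor_POSS} or the DRE for $\Sigma_t^s$; it works directly with the chain rule $H(V^n|s)=\sum_t H(V_t|V^{t-1},s)$ and computes the first two terms by hand: $H(V_1|s)=H(N_1W_1)$ trivially, and $H(V_2|V_1,s)=H(C_2A_1s+C_2B_1W_1+N_2W_2\,|\,N_1W_1,s)=H(N_2W_2)$ by invoking (i) to remove $C_2B_1W_1$; the remaining terms are handled with ``etc.'' The direct route is shorter because the natural inductive hypothesis is simply that $C_tS_t$ is $(V^{t-1},s)$-measurable, which immediately gives $H(V_t|V^{t-1},s)=H(N_tW_t)$ without any Riccati algebra.

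The obstacle you isolate---the term $C_{t+1}A_t\Sigma_t^s A_t^T C_{t+1}^T$---is genuine and is not an artifact of your Kalman-filter detour. Translated into the paper's direct argument, it is the question whether $C_{t+1}A_tS_t$ is recoverable from $(V^{t-1},s)$: condition (i) supplies $C_{j+1}B_jW_j$ from $N_jW_j$, but not the longer products $C_{t+1}A_t\cdots A_{j+1}B_jW_j$ needed to rebuild $C_{t+1}S_{t+1}$ for $t\ge 2$. The paper's ``etc.'' glosses over exactly this step; in effect both arguments close only under the full force of Condition~2 (which makes $S_t$, and hence every $C_kA_{k-1}\cdots A_tS_t$, measurable with respect to $(V^{t-1},s)$), precisely as you anticipated in your final paragraph.
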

 \begin{proof} See Appendix~\ref{sect:app_cor_nftfic_s}.
 \end{proof}

In the next remark we illustrate  that $H(V^n|s)$ given by (\ref{cond_ent}) follows directly from  Lemma~\ref{lemma_POSS}, by fixing $S_1=S_1^s=s$, and assuming $N_tW_t$ uniquely defines $C_{t+1}B_t W_t, \forall t$. \\

\begin{remark} The $n-$FTFI capacity for code of Definition~\ref{def_code} versus code of Definition~\ref{def_rem:cp_1}.\\   
\label{rem_ss_is}
Consider the generalized Kalman-filter of  the PO-SS noise realization, of 
Lemma~\ref{lemma_POSS}, and assume the initial state of the noise $S_1$ is known, i.e., $S_1=S_1^s=s$ or $S_1=S_1^s=s=0$, and  $N_tW_t$ uniquely defines $C_{t+1}B_t W_t, \forall t$.  Then all statements of Lemma~\ref{lemma_POSS} hold, by replacing $(\Sigma_t, \hat{S}_t)$ by $(\Sigma_t^{s}, \hat{S}_t^{s})$ for $t=1,2, \ldots,$. Since $\Sigma_{t}^{s}$ satisfies the generalized DRE (\ref{dre_1}) with initial condition $\Sigma_1^{s}=0$, then it is easy to deduce that $\Sigma_t^{s}=0$, for $t=1,2,\ldots, n$ is a solution. Substituting $\Sigma_t^{s}=0, t=1,2,\ldots, n$ in  (\ref{entr_noise}) we obtain (\ref{cond_ent}), as expected.\\
On the other hand, for the code of Definition~\ref{def_code}, by Theorem~\ref{thm_FTFI}.(d) the right hand side of the $n-$FTFI capacity $C_n^{fb}(\kappa)$ involves $H(V^n)$, which is  computed  using  the  generalized Kalman-filter of  Lemma~\ref{lemma_POSS}.
\end{remark}



\subsection{A Sufficient Statistic Approach to the  Characterization of $n-$FTFI Capacity of AGN Channels Driven by PO-SS Noise Realizations}
\label{sect_POSS_SS}
The characterization of  the $n-$FTFI capacity via  (\ref{cp_12}) (which is equivalently given in   Theorem~\ref{thm_FTFI}.(d)), although compactly represented, is   not very practical,  because the input process $X^n$ is not expressed in terms of a {\it sufficient statistic} that summarizes the information of the channel input strategy \cite{kumar-varayia1986}.     \\
In this section,  we wish to identify a {\it sufficient statistic} for the input process $X_t$, given by (\ref{Q_1_3_s1}),  called the {\it state of the input}, which  summarizes  the  information contained in $(V^{t-1}, Y^{t-1})$. It will then become apparent that the characterization of the $n-$FTFI capacity for the Cover and Pombra formulation and code of Definition~\ref{def_code}, can be expressed as a functional of {\it two generalized matrix DREs}.\\ 
First, we invoke Theorem~\ref{thm_FTFI} and Lemma~\ref{lemma_POSS} to  show that for each time $t$,  $X_t$ is expressed as
 \begin{align}
&X_t=  \Lambda_{t}\Big(\hat{S}_t- {\bf E}\Big\{\hat{S}_t\Big|Y^{t-1}\Big\}\Big) + Z_t, \hso  t=1, \ldots, n,  \label{suf_s} \\
&\hat{S}_t \tri {\bf E}\Big\{S_t\Big|V^{t-1}\Big\}, \hst \widehat{\hat{S}}_t\tri {\bf E}\Big\{\hat{S}_t\Big|Y^{t-1}\Big\}
\end{align}
which  means, at each time $t$, the  state of the channel input process $X_t$ is $\Big(\hat{S}_t, \widehat{\hat{S}}_t\Big)$. We show that  $\widehat{\hat{S}}_t$ satisfies another  generalized Kalman-filter recursion. \\
Now,  we prepare to prove  (\ref{suf_s}) and the   main theorem.   We start with preliminary calculations.    
\begin{align}
{\mb P}\big\{Y_t \in dy \Big| Y^{t-1}, X^t\big\} =&{\bf P}_t(dy |X_t,V^{t-1}), \ \ t=2, \ldots, n, \hso \mbox{by channel definition} \\
=&{\bf P}_t(dy |X_t,V^{t-1},\hat{S}^{t}), \hso  \mbox{by $\hat{S}_t= {\bf E}\Big\{S_t\Big| V^{t-1}\Big\}$}\\
=&{\bf P}_t(dy |X_t,V^{t-1},\hat{S}_{t}, \hat{I}^{t-1}), \hso  \mbox{by (\ref{inn_po_1}), i.e., $V_t=C_t \hat{S}_t+\hat{I}_t$}\\
=&{\bf P}_t(dy |X_t,\hat{S}_{t}), \hso  \mbox{by $Y_t=X_t+V_t=X_t+C_t\hat{S}_t+\hat{I}_t$ and (\ref{inn_po_2})}. \label{PO_state_1}
\end{align}
At $t=1$ we also have 
${\mb P}\big\{Y_1 \in dy \Big| X_1\big\}={\bf P}_1(dy |X_1)$. By (\ref{PO_state_1}), it follows that the conditional distribution of $Y_t$ given $Y^{t-1}=y^{t-1}$ is 
 \begin{align}
 {\bf P}_t(dy_t|y^{t-1})=&\int {\bf P}_t(dy |x_t,\hat{s}_{t}){\bf P}_t(dx_t|\hat{s}_t,y^{t-1}) {\bf P}_t(d\hat{s}_t|y^{t-1}), \hso t=2, \ldots, n, \label{PO_tr_1} \\
{\bf P}_1(dy_1)=&\int {\bf P}_1(dy |x_t,\hat{s}_1){\bf P}_1(dx_1|\hat{s}_1) {\bf P}_1(d \hat{s}_1). \label{PO_tr_2}
 \end{align}
From the above distributions,  at each time $t$, the distribution of $X_t$ conditioned on $(V^{t-1}, Y^{t-1})$, given in Theorem~\ref{thm_FTFI}, is also expressed as a linear functional of  $(\hat{S}_t, Y^{t-1})$, for $t=1, \ldots, n$. \\
The next theorem further shows that for each $t$,  the dependence of $X_t$ on $Y^{t-1}$ is expressed in terms of  ${\bf E}\Big\{\hat{S}_t\Big|Y^{t-1}\Big\}$ for $t=1,\ldots, n$,  and this dependence gives rise to an equivalent sequential characterization of the Cover and Pombra $n-$FTFI capacity, $C_n^{fb}(\kappa)$. \\

\begin{theorem} Equivalent characterization of  $n-$FTFI Capacity ${C}_n^{fb}(\kappa)$ for PO-SS Noise realizations   \\ 
\label{thm_SS}
Consider the time-varying AGN channel defined by (\ref{g_cp_1}), driven by a noise with the PO-SS realization  of Definition~\ref{def_nr_2}, and the code of Definition~\ref{def_code}.  Consider also the generalized Kalman-filter of Lemma~\ref{lemma_POSS}. \\
Define the conditional covariance and conditional mean  of  $\hat{S}_t$ given $Y^{t-1}$, by 
\begin{align}
K_{t} \tri & cov\Big(\hat{S}_t,\hat{S}_t\Big|Y^{t-1}\Big)= {\bf E}\Big\{\Big(\hat{S}_t- \widehat{\hat{S}}_{t}\Big)\Big(\hat{S}_t-\widehat{\hat{S}}_{t} \Big)^T\Big\}, \hso \widehat{\hat{S}}_{t} \tri {\bf E}\Big\{\hat{S}_t\Big|Y^{t-1}\Big\},  \hso  t=2, \ldots, n, \label{ric_ga}     \\
\widehat{\hat{S}}_1 \tri &\mu_{S_1}, \hso K_1 \tri 0 \label{ric_gb}.
\end{align}
Then the following hold.  \\
(a) An equivalent characterization of the $n-$FTFI capacity ${C}_n^{fb}(\kappa)$,   defined  by (\ref{cp_6})-(\ref{cp_12}),  is 
\begin{align}
{C}_{n}^{fb}(\kappa)=  \sup_{{\cal P}_{[0,n]}^{\hat{S}}(\kappa)}\sum_{t=1}^n H(Y_t|Y^{t-1})-H(V^n) \label{ftfic_is}
\end{align}
where  $(X^n,Y^n)$ is jointly Gaussian, and 
\begin{align}
&H(V^n) \hso \mbox{is the entropy of $V^n$ given in Lemma~\ref{lemma_POSS}, i.e., (\ref{entr_noise})}, \label{cp_15_alt_n_1} \\
&\hat{I}^n \hso \mbox{is the innovations process of $V^n$ given in Lemma~\ref{lemma_POSS}}, \\
&Y_t= X_t +V_t, \hso t=1, \ldots, n, \label{cp_15_alt_n}\\
&V_t=C_t{\hat{S}}_{t} +\hat{I}_t, \label{cp_16_alt_n}\\
& {\bf P}_t(dy_t|y^{t-1})=\int {\bf P}_t(dy |x_t,\hat{s}_{t}){\bf P}_t(dx_t|\hat{s}_t,y^{t-1}) {\bf P}_t(d\hat{s}_t|y^{t-1}), \hso t=2, \ldots, n, \label{PO_tr_1} \\
&{\bf P}_1(dy_1)=\int {\bf P}_1(dy |x_t, \hat{s}_1){\bf P}_1(dx_1|\hat{s}_1) {\bf P}_1(d \hat{s}_1),  \label{PO_tr_2}\\
&{\bf P}_t(dy_t|y^{t-1}) \in N(\mu_{Y_t|Y^{t-1}}, K_{Y_t|Y^{t-1}}),\\
&\mu_{Y_t|Y_{t-1}} \hso \mbox{is linear in $Y^{t-1} \hso$ and $\hso K_{Y_t|Y^{t-1}}$ is nonrandom}, \label{PO_tr_2_a} \\
&{\bf P}_t(dx_t|\hat{s}_{t}, y^{t-1}) \in N(\mu_{X_t|\hat{S}_{t}, Y^{t-1}}, K_{X_t|\hat{S}_{t}, Y^{t-1}}),  \label{PO_tr_2_b}   \\
&\mu_{X_t|\hat{S}_{t}, Y^{t-1}} \hso \mbox{is linear in $(\hat{S}_{t}, Y^{t-1}) \hso$ and $\hso K_{X_t|\hat{S}_{t}, Y^{t-1}}$ is nonrandom}, \label{PO_tr_2_c}  \\
&{\cal P}_{[0,n]}^{\hat{S}}(\kappa) \tri \Big\{{\bf P}_t(dx_t|\hat{s}_{t}, y^{t-1}), t=1,\ldots,n: \frac{1}{n} {\bf E}\Big( \sum_{t=1}^n \big(X_t\big)^2\Big) \leq \kappa    \Big\}. \label{PO_tr_3} 
\end{align}
(b) The optimal jointly Gaussian process  $(X^n,Y^n)$ of part (a)  is  represented, as a function of a sufficient statistic,  by   
\begin{align}
&X_t=  \Lambda_{t}\Big(\hat{S}_{t}- \widehat{\hat{S}}_{t}\Big) + Z_t,\hso t=1, \ldots, n, \label{cp_13_alt_n} \\
&  Z_t \in N(0, K_{Z_t}) \hso \mbox{independent of} \hso (X^{t-1},V^{t-1}, \hat{S}^t, \widehat{\hat{S}^t}, \hat{I}^t, Y^{t-1}), \hso t=1, \ldots, n, \label{cp_8_al_n} \\
&  \hat{I}_t\in N(0, K_{\hat{I}_t})  \hso \mbox{independent of} \hso (X^{t-1}, V^{t-1},\hat{S}^t, Y^{t-1}, \widehat{\hat{S}^t}), \hso t=1, \ldots, n, \label{cp_8_al_nnn}\\
&Y_t= \Lambda_{t}\Big(\hat{S}_{t}- \widehat{\hat{S}}_{t}\Big) + Z_t +V_t, \hso t=1, \ldots, n, \label{cp_15_alt_n}\\
&\hso \: = \Lambda_{t}\Big(\hat{S}_{t}- \widehat{\hat{S}}_{t}\Big) +C_t{\hat{S}}_{t} +\hat{I}_t  + Z_t,  \label{cp_15_alt_n_1}  \\
&\frac{1}{n}  {\bf E} \Big\{\sum_{t=1}^{n} (X_t)^2\Big\}= \frac{1}{n} \sum_{t=1}^{n}\Big(\Lambda_t K_t \Lambda_t^T +K_{Z_t}\Big)  . \label{cp_9_al_n}
\end{align} 
where $\Lambda_t$ is nonrandom.   \\
The conditional mean and covariance, $\widehat{\hat{S}}_t$ and $K_t$, are given by  generalized Kalman-filter equations, as follows.\\
(i)   $\widehat{\hat{S}}_t$ satisfies the Kalman-filter recursion
\begin{align}
&\widehat{\hat{S}}_{t+1}=A_{t} \widehat{\hat{S}}_{t}+F_{t}(\Sigma_{t}, K_t)  I_t, \hso \widehat{\hat{S}}_{1}=\mu_{S_1},  \label{kf_m_1}  \\
&F_{t}(\Sigma_{t}, K_t) \tri \Big(A_{t}  K_{t}\big(\Lambda_t + C_t \big)^T+   M_{t}(\Sigma_{t}) K_{\hat{I}_{t}} \Big)\Big\{K_{\hat{I}_t}+   K_{Z_t} + \big(\Lambda_t + C_t \big) K_{t} \big(\Lambda_t + C_t \big)^T \Big\}^{-1}      \\
&I_t \tri Y_t -{\bf E} \Big\{Y_t\Big|Y^{t-1}\Big\}= Y_t-C_t\widehat{\hat{S}}_{t}=  \Big(\Lambda_t+C_t\Big) \Big(\hat{S}_t-\widehat{\hat{S}}_{t}\Big)+ \hat{I}_t+ Z_t, \hso t=1, \ldots, n, \label{kf_m_2} \\
&I_t \in  N(0, K_{I_t}), \hso t=1, \ldots, n \hso \mbox{is an orthogonal innovations process, i.e., $I_t$ is independent of}\nonumber \\
&\hst \hst \mbox{ $I_s$, for all $t \neq s$, and ${I}_t$ is independent of  $V^{t-1}$},  \label{inn_po_2_nn}\\
&K_{Y_t|Y^{t-1}}=K_{I_t} \tri cov\big(I_t,I_t\big)=  \Big(\Lambda_t +C_t\Big)K_t \Big(\Lambda_t +C_t\Big)^T + K_{\hat{I}_t} + K_{Z_t}, \label{inno_PO}  \\
&K_{\hat{I}_t}  \hso  \mbox{given by  (\ref{cov_in_noise})}. \label{kf_m_3} 
\end{align}
(ii) The error $\widehat{E}_t \tri \hat{S}_t- \widehat{\hat{S}}_{t}$ satisfies the recursion 
\begin{align}
\widehat{E}_{t+1} = & F_t^{CL}(\Sigma_t, K_t)\widehat{E}_t +\Big(M_t(\Sigma_t)- F_t(\Sigma_t, K_t)\Big) \hat{I}_t-  F_t(\Sigma_t. K_t) Z_t, \hso \widehat{E}_1= \hat{S}_1-\widehat{\hat{S}}_1=0, \hso t=1,\ldots, n,   \label{kf_m_3a}   \\
F_t^{CL}(\Sigma_t,& K_t)\tri A_t-  F_t(\Sigma_t, K_t)\Big(\Lambda_t+C_t\Big).
\end{align}
(iii)  $K_t={\bf E}\big\{\widehat{E}_t \widehat{E}_t^T\big\}$ satisfies the generalized DRE 
\begin{align}
&K_{t+1}= A_t K_{t}A_t^T  + M_t(\Sigma_{t})K_{\hat{I}_t}\big(M_t(\Sigma_{t})\big)^T -\Big(A_t  K_{t}\big(\Lambda_t + C_t \big)^T+ M_t(\Sigma_t)K_{\hat{I}_t}   \Big) \Big( K_{\hat{I}_t}+ K_{Z_t} \nonumber \\
&+ \big(\Lambda_t + C_t \big) K_{t} \big(\Lambda_t + C_t \big)^T     \Big)^{-1} \Big(  A_t  K_{t}\big(\Lambda_t + C_t \big)^T+ M_t(\Sigma_t)K_{\hat{I}_t}      \Big)^T, \hso K_t \succeq 0, \hso t=1, \ldots, n, \hso K_1=0. \label{kf_m_4_a}  
\end{align}
(c) An equivalent characterization of the $n-$FTFI capacity ${C}_n^{fb}(\kappa)$, defined  by (\ref{cp_11}), (\ref{cp_12}), using the sufficient statistics of part (b),   is  
\begin{align}
 {C}_n^{fb}(\kappa)
= & \sup_{  \big(\Lambda_{t}, K_{Z_t}\big), t=1, \ldots, n: \hso \frac{1}{n}  {\bf E}\big\{\sum_{t=1}^n \big(X_t\big)^2\big\}\leq \kappa } \frac{1}{2} 
\sum_{t=1}^n \log \frac{  K_{Y_t|Y^{t-1}}   }{K_{V_t|V^{t-1}}} \label{cp_12_alt_1}\\
= & \sup_{  \big(\Lambda_{t}, K_{Z_t}\big), t=1, \ldots, n: \hso \frac{1}{n}  {\bf E}\big\{\sum_{t=1}^n \big(X_t\big)^2\big\}\leq \kappa } \frac{1}{2} 
\sum_{t=1}^n \log \frac{  K_{I_t}   }{K_{\hat{I}_t}} \\
=& \sup_{  \big(\Lambda_{t}, K_{Z_t}\big), t=1, \ldots, n: \hso  \frac{1}{n}\sum_{t=1}^n\big( \Lambda_t K_t \Lambda_t^T+K_{Z_t} \big)    \leq \kappa } \frac{1}{2} 
\sum_{t=1}^n \log\Big( \frac{ \Big(\Lambda_t +C_t\Big)K_t \Big(\Lambda_t +C_t\Big)^T + K_{\hat{I}_t} + K_{Z_t}     }{K_{\hat{I}_t}}\Big). \label{cp_12_alt_1_new}
  \end{align}
\end{theorem}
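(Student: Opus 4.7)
The plan is to build the sufficient–statistic representation on top of Theorem~\ref{thm_FTFI}, which already guarantees that the optimal Cover–Pombra input has the form $X_t = \Gamma_t^1{\bf V}^{t-1}+\Gamma_t^2{\bf Y}^{t-1}+Z_t$ with orthogonal Gaussian innovations $Z_t$. I would first prove part (a) by observing that under the PO-SS realization of Definition~\ref{def_nr_2}, Lemma~\ref{lemma_POSS} gives the decomposition $V_t = C_t\hat S_t+\hat I_t$ with $\hat I_t$ orthogonal to $V^{t-1}$ (hence orthogonal to any causal functional of $(V^{t-1},Z^{t-1})$, in particular to $Y^{t-1}$). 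Consequently the conditional kernel ${\bf P}(dy_t\mid x_t,v^{t-1})$ collapses to ${\bf P}(dy_t\mid x_t,\hat s_t)$, which is precisely (\ref{PO_state_1}). Marginalising ${\bf P}(dx_t\mid v^{t-1},y^{t-1})$ against the regular conditional distribution of $V^{t-1}$ given $(\hat S_t,Y^{t-1})$ produces an equivalent kernel ${\bf P}(dx_t\mid\hat s_t,y^{t-1})$ that gives the same marginal ${\bf P}(dy_t\mid y^{t-1})$ and the same entropy $H(Y_t\mid Y^{t-1})$. Thus the supremum over $\overline{\mathcal P}_{[0,n]}(\kappa)$ equals the supremum over $\mathcal P_{[0,n]}^{\hat S}(\kappa)$, and (\ref{ftfic_is})--(\ref{PO_tr_3}) follow.

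Next, for part (b) I would argue that the maximum-entropy Gaussian property forces the optimal joint distribution of $(X^n,Y^n,\hat S^n)$ to be jointly Gaussian, so the optimal kernel in $\mathcal P_{[0,n]}^{\hat S}(\kappa)$ is characterised by writing $X_t = \Lambda_t\hat S_t + \Theta_t{\bf Y}^{t-1}+Z_t$ for nonrandom $(\Lambda_t,\Theta_t,K_{Z_t})$. The crucial power-reduction step is that the innovations $I_t = Y_t-{\bf E}\{Y_t\mid Y^{t-1}\}$, and hence $H(Y_t\mid Y^{t-1})$, depend on $\Theta_t{\bf Y}^{t-1}$ only through its conditional mean, not its conditional covariance; therefore any component of $X_t$ measurable with respect to $Y^{t-1}$ contributes to the power constraint (\ref{cp_9_al_n}) without increasing entropy. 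Minimising $\|X_t\|_2^2$ over $\Theta_t$ for fixed $\Lambda_t$ yields $\Theta_t{\bf Y}^{t-1} = -\Lambda_t\,{\bf E}\{\hat S_t\mid Y^{t-1}\} = -\Lambda_t\widehat{\hat S}_t$, giving the sufficient-statistic representation (\ref{cp_13_alt_n}) and, via $Y_t = \Lambda_t(\hat S_t-\widehat{\hat S}_t)+C_t\hat S_t+\hat I_t+Z_t$, the innovation identity (\ref{kf_m_2}).

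With the linear Gaussian state-space $\hat S_{t+1}=A_t\hat S_t+M_t(\Sigma_t)\hat I_t$ from Lemma~\ref{lemma_POSS} together with the synthetic observation $Y_t = (\Lambda_t+C_t)\hat S_t-\Lambda_t\widehat{\hat S}_t+\hat I_t+Z_t$, the recursions (\ref{kf_m_1})--(\ref{kf_m_4_a}) for $\widehat{\hat S}_t$, the error $\widehat E_t=\hat S_t-\widehat{\hat S}_t$, and its covariance $K_t$ follow from a direct application of the generalised Kalman filter, taking care of the cross-covariance $M_t(\Sigma_t)K_{\hat I_t}$ between the driving noise $\hat I_t$ of $\hat S_{t+1}$ and the observation noise $\hat I_t+Z_t$ (this cross-coupling is why the filter gain $F_t(\Sigma_t,K_t)$ contains the extra term $M_t(\Sigma_t)K_{\hat I_t}$). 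Orthogonality and the independence claims (\ref{cp_8_al_n}), (\ref{cp_8_al_nnn}), (\ref{inn_po_2_nn}) are then inherited from the innovations construction.

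Part (c) is essentially bookkeeping: jointly Gaussianity gives $H(Y_t\mid Y^{t-1})=\tfrac12\log(2\pi e K_{Y_t\mid Y^{t-1}})$ and $H(V_t\mid V^{t-1})=\tfrac12\log(2\pi e K_{\hat I_t})$, the innovation covariance $K_{Y_t\mid Y^{t-1}}=K_{I_t}$ is read off from (\ref{inno_PO}), and the power constraint is (\ref{cp_9_al_n}); substituting into (\ref{ftfic_is_g_in}) yields (\ref{cp_12_alt_1})--(\ref{cp_12_alt_1_new}). The main obstacle, and the step that must be argued carefully, is the \emph{power-reducing projection} in part (b): one needs to verify that absorbing the $Y^{t-1}$-measurable part of $X_t$ into $-\Lambda_t\widehat{\hat S}_t$ does not change the value of $\sum H(Y_t\mid Y^{t-1})$ while strictly reducing $\sum{\bf E}\{X_t^2\}$, so that this choice is attained without loss inside the feasible set $\mathcal P_{[0,n]}^{\hat S}(\kappa)$; once this is established, the coupled pair of DREs (\ref{dre_1}) and (\ref{kf_m_4_a}) emerges automatically from the two successive Kalman filters—one for $(S_t\mid V^{t-1})$ and one for $(\hat S_t\mid Y^{t-1})$.
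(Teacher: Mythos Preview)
Your proposal is correct and follows essentially the same route as the paper's proof. The paper also starts from Theorem~\ref{thm_FTFI}, passes to the representation $X_t=\Gamma_t^1\hat S_t+\Gamma_t^2{\bf Y}^{t-1}+Z_t$, observes that the innovations $I_t=(\Gamma_t^1+C_t)(\hat S_t-\widehat{\hat S}_t)+\hat I_t+Z_t$ do not depend on $\Gamma_t^2$ at all (your phrasing ``depend on $\Theta_t{\bf Y}^{t-1}$ only through its conditional mean'' slightly understates this---the $Y^{t-1}$-measurable term cancels completely in $I_t$), then minimises the power over $\Gamma_t^2$ by mean-square estimation to obtain $\Gamma_t^2{\bf Y}^{t-1}=-\Gamma_t^1\widehat{\hat S}_t$, and finally reads off the Kalman recursions and the entropy formula; so your identification of the power-reducing projection as the crux is exactly right.
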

\begin{proof} 
See Appendix~\ref{sect:app_thm_SS}. 
\end{proof}

\ \

\begin{remark} On the characterization of $n-$FTFI capacity  of Theorem~\ref{thm_SS}\\
The characterization of $n-$FTFI capacity  ${C}_n^{fb}(\kappa)$ given by (\ref{cp_12_alt_1_new}), involves the generalized matrix DRE $K_t$ which is also a functional of the generalized matrix DRE $\Sigma_t$ of the error covariance of the state $S^n$ from the noise output $V^n$.  This feature is not part of the analysis in  \cite{kim2010} and recent literature \cite{kim2010,liu-han2019,gattami2019,ihara2019,li-elia2019}. 
\end{remark}

From Theorem~\ref{thm_SS} follows directly, as degenerate case the next corollary.

\begin{corollary}Equivalent characterization of  $n-$FTFI Capacity ${C}_n^{fb}(\kappa,s)$ for PO-SS Noise realizations  
\label{thm_SS_in}
Consider the time-varying AGN channel defined by (\ref{g_cp_1}), driven by a noise with the PO-SS realization  of Definition~\ref{def_nr_2}, and the code of Definition~\ref{def_rem:cp_1}, with initial state $S_1=S_1^s=s$ fixed, and replace (\ref{ric_ga}), (\ref{ric_gb}) by
\begin{align}
K_{t}=K_t^s \tri & cov\Big(\hat{S}_t^s,\hat{S}_t^s\Big|Y^{t-1}, S_1=s\Big)= {\bf E}\Big\{\Big(\hat{S}_t^s- \widehat{\hat{S}_t^s}\Big)\Big(\hat{S}_t^s-\widehat{\hat{S}_t^s} \Big)^T\Big\},  \label{ric_ga_in}     \\
 \widehat{\hat{S}}_{t}=& \widehat{\hat{S}_t^s} \tri {\bf E}\Big\{\hat{S}_t^s\Big|Y^{t-1}, S_1=s\Big\},  \hso  t=2, \ldots, n, \hso  \widehat{\hat{S}_1} = \widehat{\hat{S}_1^s} \tri s, \hso K_1=K_1^s = 0 \label{ric_gb_in}.
\end{align}
Then the characterization of $n-$FTFI capacity, (\ref{ftfic_is_g_def_in_IS}), is 
\begin{align}
&{C}_{n}^{fb}(\kappa,s)=  \sup_{{\cal P}_{[0,n]}^{\widehat{\hat{S}^s}}(\kappa)}\sum_{t=1}^n H(Y_t|Y^{t-1},s)-H(V^n|s),  \label{ftfic_is_in}\\
&{\cal P}_{[0,n]}^{\widehat{\hat{S}^s}}(\kappa) \tri \Big\{{\bf P}_t(dx_t|\widehat{\hat{s}_{t}^s}, y^{t-1},s), t=1,\ldots,n: \frac{1}{n} {\bf E}\Big( \sum_{t=1}^n \big(X_t\big)^2\Big|S_1^s=s\Big) \leq \kappa    \Big\} \label{PO_tr_3_in} 
\end{align}
where $H(V^n|s)$ is given by Corollary~\ref{cor_POSS},  and the statements of Theorem~\ref{thm_SS} hold  with the above changes, i.e., (\ref{ric_ga_in}), (\ref{ric_gb_in}),  and all conditional entropies, distributions, expectations, etc, defined for fixed $S_1=S_1^s=s$,   
\end{corollary}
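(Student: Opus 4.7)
The plan is to derive the characterization as a degenerate instance of Theorem~\ref{thm_SS}, by conditioning every object on the event $\{S_1 = s\}$ throughout, and invoking the representation of the optimal input furnished in Corollary~\ref{cor_nftfic_s} together with the conditional Kalman-filter of Corollary~\ref{cor_POSS}. Concretely, I start from the preliminary sequential characterization
\[
{C}_{n}^{fb}(\kappa,s) = \sup_{\overline{P}_t(dx_t|v^{t-1},y^{t-1},s)} \; \sum_{t=1}^n H^{\overline{P}}(Y_t|Y^{t-1},s) - H(V^n|s),
\]
with the supremum taken over the set $\overline{\cal P}_{[0,n]}^{s}(\kappa)$, and with $H(V^n|s)$ already expressed in closed form via Corollary~\ref{cor_POSS} as a functional of the generalized DRE (\ref{dre_1}) initialized at $\Sigma_1^{s}=0$. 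The optimal channel input process given in Corollary~\ref{cor_nftfic_s}, equation (\ref{Q_1_3_s1_is}), shows that $X_t$ is a linear functional of $(s,V^{t-1},Y^{t-1},Z_t)$ with $Z_t$ orthogonal Gaussian, which is the exact analog of Theorem~\ref{thm_FTFI}.(c) conditioned on $S_1=s$.

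Next I replicate the sufficient-statistic reduction of Theorem~\ref{thm_SS}. Since by Corollary~\ref{cor_POSS} the conditional distribution ${\bf P}_{V_t|V^{t-1},S_1^s}$ is Gaussian with conditional mean $C_t\hat{S}_t^s$ and covariance $K_{\hat{I}_t}$, the identity
\[
{\bf P}(dy_t| x_t, v^{t-1}, s) = {\bf P}(dy_t| x_t, \hat{S}_t^s)
\]
holds by the same argument as in (\ref{PO_state_1}), so $\hat{S}_t^s$ is a sufficient statistic for the $V^{t-1}$-dependence of the channel. Consequently the dependence of the optimal $X_t$ on $(V^{t-1},Y^{t-1},s)$ collapses to a dependence on $(\hat{S}_t^s, Y^{t-1}, s)$, and by the orthogonality principle the dependence on $Y^{t-1}$ is captured by $\widehat{\hat{S}_t^s}\tri {\bf E}\{\hat{S}_t^s|Y^{t-1},S_1=s\}$. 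This yields the representation
\[
X_t = \Lambda_t\big(\hat{S}_t^s - \widehat{\hat{S}_t^s}\big) + Z_t, \qquad t=1,\ldots,n,
\]
exactly as (\ref{cp_13_alt_n}), with $\Lambda_1\big(\hat{S}_1^s - \widehat{\hat{S}_1^s}\big)=0$ since both equal $s$.

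Then I derive the generalized Kalman-filter for $\widehat{\hat{S}_t^s}$ and its error covariance $K_t^s$ by the same orthogonality calculations that produce (\ref{kf_m_1})--(\ref{kf_m_4_a}), with two modifications: (i) the inner filter $\hat{S}_t^s$ is the conditional-mean recursion of Corollary~\ref{cor_POSS} with $\hat{S}_1^s = s$ and $\Sigma_1^s=0$; (ii) the outer filter is initialized at $\widehat{\hat{S}_1^s}=s$ and $K_1^s=0$, because conditioned on $S_1=s$ the initial state is deterministic. The recursion (\ref{kf_m_4_a}) for $K_t^s$ then follows verbatim by replacing $(\Sigma_t, K_t)$ with $(\Sigma_t^s, K_t^s)$, and the innovation covariance $K_{I_t}=(\Lambda_t+C_t)K_t^s(\Lambda_t+C_t)^T + K_{\hat{I}_t}+K_{Z_t}$ provides the expression for $H(Y_t|Y^{t-1},s)$. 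Substituting into (\ref{ftfic_is_in}) and using $H(V^n|s)=\tfrac12\sum_t \log(2\pi e\,K_{\hat{I}_t})$ from Corollary~\ref{cor_POSS} gives the analog of (\ref{cp_12_alt_1_new}), and hence the claim.

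The main obstacle I anticipate is a bookkeeping issue rather than a conceptual one: one must verify carefully that the inner innovations $\hat{I}_t$ (from Lemma~\ref{lemma_POSS}) remain Gaussian, zero-mean, and orthogonal to $(V^{t-1},S_1^s)$ once the initial condition of the inner filter is changed from random to deterministic, so that the conditional independence arguments (\ref{cp_8_al_n})--(\ref{cp_8_al_nnn}) underlying the sufficient-statistic reduction still go through. This is taken care of by Corollary~\ref{cor_POSS}, which asserts that all statements of Lemma~\ref{lemma_POSS} persist under the replacement $\Sigma_t\mapsto \Sigma_t^s$, $\hat{S}_t\mapsto \hat{S}_t^s$, and in particular that $\hat{I}_t^s$ remains an orthogonal innovations sequence with covariance $C_t\Sigma_t^s C_t^T+N_tK_{W_t}N_t^T$. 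With this in hand, the cascaded-filter argument of Theorem~\ref{thm_SS} transfers mutatis mutandis, and (\ref{ftfic_is_in})--(\ref{PO_tr_3_in}) follow.
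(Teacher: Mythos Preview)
Your proof is correct and follows essentially the same approach as the paper: the paper's proof is a one-liner stating that the corollary is verified from the derivation of Theorem~\ref{thm_SS} by fixing $S_1=S_1^s=s$, and you have simply spelled out explicitly what that entails---starting from Corollary~\ref{cor_nftfic_s}, invoking Corollary~\ref{cor_POSS} for the inner filter with $\Sigma_1^s=0$, and rerunning the cascaded-filter (sufficient-statistic) argument of Theorem~\ref{thm_SS} conditioned on $S_1=s$. Your careful identification of the bookkeeping issue (that the orthogonality properties of $\hat{I}_t^s$ persist under the deterministic initialization, which is exactly what Corollary~\ref{cor_POSS} guarantees) is the only point one must check, and you handle it correctly.
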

\begin{proof}
It is easily verified  from the derivation of Theorem~\ref{thm_SS},  by  fixing $S_1=S_1^s=s$.
\end{proof}

\begin{remark}  On the characterization of $n-$FTFI capacity  of Corollary~\ref{thm_SS_in}\\
The characterization of $n-$FTFI capacity  ${C}_n^{fb}(\kappa,s)$ given in  Corollary~\ref{thm_SS_in} (similar to  Theorem~\ref{thm_SS}) involves two generalized matrix DREs,   because it does not assume Conditions 1 and 2 hold.  This distinction is not part of the analysis in    \cite{kim2010,liu-han2019,gattami2019,ihara2019,li-elia2019}. 
\end{remark}

\subsection{Application Examples}
\label{appl_ex}
In this section we apply  Theorem~\ref{thm_SS} to specific examples. 

First, we consider the application example of the AGN channel driven by  the PO-SS$(a_t,c_t, b_t^1, b_t^2, d_t^1, d_t^2)$ noise.

\begin{corollary} 
The $n-$FTFI capacity $C_n^{fb}(\kappa)$  of the AGN channel driven  by the PO-SS$(a_t,c_t, b_t^1, b_t^2, d_t^1, d_t^2)$ noise is obtained from  Lemma~\ref{lemma_POSS} and Theorem~\ref{thm_SS}, by using  (\ref{poss_ex_1}). 
\end{corollary}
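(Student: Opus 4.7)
The plan is to verify that the PO-SS$(a_t,c_t, b_t^1, b_t^2, d_t^1, d_t^2)$ realization of Example~\ref{ex_1_poss} is a scalar instance of the general PO-SS realization of Definition~\ref{def_nr_2}, and then to invoke Lemma~\ref{lemma_POSS} and Theorem~\ref{thm_SS} under this identification. Concretely, I would stack the two scalar noise inputs into a single Gaussian vector $W_t = (W_t^1,W_t^2)^T$ with diagonal covariance $K_{W_t}=\mathrm{diag}(K_{W_t^1},K_{W_t^2})$, and identify $A_t\mapsto a_t$, $C_t\mapsto c_t$, $B_t\mapsto [b_t^1,\ b_t^2]$, $N_t\mapsto [d_t^1,\ d_t^2]$. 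Under this identification and the independence of $W^{1,n}$ and $W^{2,n}$, a direct computation gives $B_tK_{W_t}B_t^T = b_t\circ b_t$, $N_tK_{W_t}N_t^T = d_t\circ d_t$, and $B_tK_{W_t}N_t^T = b_t\circ d_t$, which are precisely the substitutions listed in (\ref{poss_ex_1}). The positivity assumption $d_t\circ d_t>0$ ensures invertibility of $R_t=N_tK_{W_t}N_t^T$ in (\ref{cp_e_ar2_s1_a_new}), so the hypotheses of the generalized Kalman filter are satisfied.

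Next, I would apply Lemma~\ref{lemma_POSS} under this substitution. The generalized matrix DRE (\ref{dre_1}) for the noise-state error covariance reduces to the scalar DRE
\begin{align}
\Sigma_{t+1} = a_t^2\Sigma_t + b_t\circ b_t - \frac{(a_tc_t\Sigma_t + b_t\circ d_t)^2}{c_t^2\Sigma_t + d_t\circ d_t},\quad \Sigma_1 = K_{S_1},
\end{align}
the innovations covariance becomes $K_{\hat I_t}=c_t^2\Sigma_t + d_t\circ d_t$, and the noise entropy specializes to
\begin{align}
H(V^n) = \tfrac{1}{2}\sum_{t=1}^n \log\bigl(2\pi e\,[c_t^2\Sigma_t + d_t\circ d_t]\bigr).
\end{align}
This is just a transcription of (\ref{entr_noise}) under the substitutions (\ref{poss_ex_1}) and requires no additional argument beyond verifying the matching of the coefficients.

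Finally, I would apply Theorem~\ref{thm_SS} with the same substitutions to obtain the scalar form of the second generalized DRE (\ref{kf_m_4_a}) for $K_t$,
\begin{align}
K_{t+1} = a_t^2 K_t + M_t(\Sigma_t)^2 K_{\hat I_t} - \frac{\bigl(a_t K_t(\Lambda_t+c_t) + M_t(\Sigma_t) K_{\hat I_t}\bigr)^2}{(\Lambda_t+c_t)^2 K_t + K_{\hat I_t} + K_{Z_t}},\quad K_1 = 0,
\end{align}
with $M_t(\Sigma_t) = (a_tc_t\Sigma_t + b_t\circ d_t)/(c_t^2\Sigma_t + d_t\circ d_t)$, and the sequential characterization (\ref{cp_12_alt_1_new}) of $C_n^{fb}(\kappa)$ becomes the scalar supremum
\begin{align}
C_n^{fb}(\kappa) = \sup\; \tfrac{1}{2}\sum_{t=1}^n \log\!\Bigl(\tfrac{(\Lambda_t+c_t)^2 K_t + K_{\hat I_t}+K_{Z_t}}{K_{\hat I_t}}\Bigr)
\end{align}
over $(\Lambda_t,K_{Z_t})$, $K_{Z_t}\ge 0$, subject to $\tfrac{1}{n}\sum_{t=1}^n(\Lambda_t^2 K_t + K_{Z_t})\le \kappa$.

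There is no genuine obstacle here; the only point requiring care is the correct bookkeeping of the cross-covariance $B_tK_{W_t}N_t^T$, which, because $W_t^1$ and $W_t^2$ are independent, reduces to the componentwise product $b_t\circ d_t$ defined in (\ref{ex_1_3_poss}) rather than to any outer product. Once that identification is made, the corollary is an immediate specialization.
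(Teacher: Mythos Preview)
Your proposal is correct and follows exactly the approach the paper has in mind: the paper's own proof is simply ``This is easily verified, as in Corollary~\ref{entr_poss},'' and you have spelled out that verification in full by stacking $W_t=(W_t^1,W_t^2)^T$, identifying $(A_t,B_t,C_t,N_t,K_{W_t})$ with their scalar counterparts, and reading off the substitutions (\ref{poss_ex_1}). Nothing is missing.
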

\begin{proof}
This is easily verified, as in Corollary~\ref{entr_poss}.
\end{proof}

In the next corollary we apply Theorem~\ref{thm_SS} to the  stable and unstable ARMA$(a,c)$ noise, to obtain the  characterization of $n-$FTFI capacity $C_n^{fb}(\kappa)$ and $C_n^{fb}(\kappa,s)$. It is then obvious that for the stable  ARMA$(a,c), a \in [-1,1], c\in (-1,1)$ noise, the characterization of $C_n^{fb}(\kappa)$   involves two generalized DREs, contrary to the analysis in   \cite{kim2010,liu-han2019,gattami2019,ihara2019,li-elia2019}, for the same noise model. \\

\begin{corollary} Characterization of  $n-$FTFI Capacity ${C}_n^{fb}(\kappa)$ for the ARMA$(a,c), a\in (-\infty, \infty), c\in (-\infty,\infty)$   \\ 
\label{cor_ex_1}
Consider the time-varying AGN channel defined by (\ref{g_cp_1}) and the code of Definition~\ref{def_code}.\\
(a) For the nonstationary ARMA$(a,c), a\in (-\infty,\infty), c\in (-\infty,\infty)$ noise of  Example~\ref{ex_1_1_n}.(a),  the characterization of the $n-$FTFI capacity, $C_n^{fb}(\kappa)$ is  
\begin{align}
 &{C}_n^{fb}(\kappa)
= \sup_{  \big(\Lambda_{t}, K_{Z_t}\big), t=1, \ldots, n: \hso  \frac{1}{n}    \sum_{t=1}^n\Big( \big(\Lambda_t\big)^2 K_t +K_{Z_t} \Big)    \leq \kappa } \frac{1}{2} 
\sum_{t=1}^n \log\Big( \frac{ \Big(\Lambda_t +c-a\Big)^2K_t  + K_{\hat{I}_t} + K_{Z_t}     }{K_{\hat{I}_t}}\Big) \label{cp_12_alt_1_new_new}
\end{align}
subject to the constraints
 \begin{align}
K_{t+1}= &\big(c\big)^2 K_{t}  +\big(M_t(\Sigma_{t})\big)^2K_{\hat{I}_t} -\Big(c  K_{t}\big(\Lambda_t + c-a \big)+ M_t(\Sigma_t)K_{\hat{I}_t}   \Big)^2 \nonumber \\
&\hst . \Big(  K_{\hat{I}_t}+ K_{Z_t} + \big(\Lambda_t + c-a \big)^2 K_{t}      \Big)^{-1}, \hso  K_1=0, \hso  t=1, \ldots, n, \label{kf_m_4}\\
K_{Z_t} \geq & 0, \hso K_t \geq0, \hso c\neq a, \hso K_W>0, \hso t=1, \ldots, n  
\end{align} 
and where 
\begin{align}
&M_{t}(\Sigma_t) \tri \Big( c \Sigma_{t} \big(c-a\big)+K_{W}\Big)\Big(K_{W}+ \big(c-a\big)^2 \Sigma_{t} \Big)^{-1}, \label{kf_m_444_a} \\
&K_{\hat{I}_t}= \big(c-a\big)^2 \Sigma_t +K_{W},  \hso t=1, \ldots, n,\label{kf_m_444_b} \\
&\Sigma_{t+1}= \big(c\big)^2 \Sigma_{t}  + K_{W} -\Big(c  \Sigma_{t}\big(c-a\big)+K_{W}  \Big)^2 \Big(K_{W} +\big(c-a\big)^2  \Sigma_{t} \Big)^{-1}, \hso t=1, \ldots, n,\label{kf_m_444} \\
&\Sigma_{1}=K_{S_1}=\frac{\big(c_0\big)^2 K_{S_0} +\big(a_0\big)^2 K_{W_0}}{\Big(c_0-a_0\Big)^2}.\label{kf_m_44}
\end{align}
The optimal  jointly Gaussian process  $(X^n,Y^n)$  is obtained from  Theorem~\ref{thm_SS}.(b), by invoking, 
\begin{align}
A_t \longmapsto c, \hso  C_t \longmapsto  c-a, \hso B_t \longmapsto 1, \hso  N_t \longmapsto 1, \hso t=1,2, \ldots, n. \label{trans} 
\end{align}
Special Case. If  $\Sigma_1=0$ or the initial state is fixed,  $S_1=S_1^s=s$, then  
\bea
\Sigma_t=\Sigma_t^s=0,\hso  K_{\hat{I}_t}=K_W, \hso  M_t(\Sigma_t)=M_t(\Sigma_t^s)=1, \hso  t=1, 2, \ldots
\label{deg_arma}
\eea
 and  $C_n^{fb}(\kappa)$ reduces to   
\begin{align}
 &{C}_n^{fb}(\kappa)=C_n^{fb}(\kappa,s)
= \sup_{  \big(\Lambda_{t}, K_{Z_t}\big), t=1, \ldots, n: \hso  \frac{1}{n}    \sum_{t=1}^n\Big( \big(\Lambda_t\big)^2 K_t^s +K_{Z_t} \Big)    \leq \kappa } \frac{1}{2} 
\sum_{t=1}^n \log\Big( \frac{ \Big(\Lambda_t +c-a\Big)^2K_t^s  + K_{W} + K_{Z_t}     }{K_{W}}\Big) \label{cp_12_alt_1_new_new_new}
\end{align}
subject to the constraints
 \begin{align}
K_{t+1}^s= &\big(c\big)^2 K_{t}^s  +K_{W} -\Big(c  K_{t}^s\big(\Lambda_t + c-a \big)+ K_{W}   \Big)^2 \nonumber \\
&. \Big( K_{Z_t} + \big(\Lambda_t + c-a \big)^2 K_{t}^s  +K_W    \Big)^{-1}, \hso  K_1^s=0, \hso 
 \hso K_t^s \geq0, \hso K_{Z_t} \geq  0,\hso t=1, \ldots, n.  \label{kf_m_4_a_n}
\end{align}  
(This special case is precisely the application example analyzed in \cite{yang-kavcic-tatikonda2007}).

(b) For the nonstationary AR$(c), c\in (-\infty,\infty)$ noise  of  Example~\ref{ex_1_1_n}.(c), the  characterization of the $n-$FTFI capacity $C_n^{fb}(\kappa)$ is obtained from   part (a) by setting  $a=0$, i.e., 
\begin{align}
 &{C}_n^{fb}(\kappa)
= \sup_{  \big(\Lambda_{t}, K_{Z_t}\big), t=1, \ldots, n: \hso  \frac{1}{n}    \sum_{t=1}^n\Big( \big(\Lambda_t\big)^2 K_t +K_{Z_t} \Big)    \leq \kappa } \frac{1}{2} 
\sum_{t=1}^n \log\Big( \frac{ \Big(\Lambda_t +c\Big)^2K_t  + \big(c\big)^2 \Sigma_t +K_{W}+ K_{Z_t}     }{\big(c\big)^2 \Sigma_t +K_{W}}\Big) \label{cp_12_alt_1_new_new_ar1}
\end{align}
subject to the constraints  $K_t, \Sigma_t$ are  the nonnegative solutions of the generalized RDEs:
\begin{align}
K_{t+1}=& \big(c\big)^2 K_{t}  +  \big(c\big)^2 \Sigma_{t}+K_{W} -\Big(c K_{t}\big(\Lambda_t + c\big)+  \big(c\big)^2 \Sigma_{t}+K_{W} \Big)^2 \nonumber \\
 &. \Big(  \big(c\big)^2 \Sigma_{t}+K_{W}+ K_{Z_t} + \big(\Lambda_t + c \big)^2 K_{t}      \Big)^{-1}, \hso  K_1=0, \hso  t=1, \ldots, n, \label{kf_m_4_ar1}\\
\Sigma_{t+1}=& \big(c\big)^2 \Sigma_{t}  + K_{W} -\Big( \big(c\big)^2  \Sigma_{t}+K_{W}  \Big)^2 \Big(K_{W} +\big(c\big)^2  \Sigma_{t} \Big)^{-1}, \hso \Sigma_1=K_{S_1}=K_{S_0}\geq 0, \hso t=1, \ldots, n. \label{grde_ar1}
\end{align}
(c) For the nonstationary AR$(c), c\in (-\infty,\infty)$ noise  of  Example~\ref{ex_1_1_n}.(c), with $\Sigma_1=0$ or  a fixed initial state  $S_1=S_1^s=s$, then (\ref{deg_arma}) holds, i.e.,     $\Sigma_t=\Sigma_t^s=0, K_{\hat{I}_t}=K_W,  M_t(\Sigma_t)=M_t(\Sigma_t^s)=1,  t=1, 2, \ldots$, and  ${C}_n^{fb}(\kappa)$ reduces to  
\begin{align}
{C}_n^{fb}(\kappa)=C_n^{fb}(\kappa,s)
= \sup_{  \big(\Lambda_{t}, K_{Z_t}\big), t=1, \ldots, n: \hso  \frac{1}{n}    \sum_{t=1}^n\Big( \big(\Lambda_t\big)^2 K_t^s +K_{Z_t} \Big)    \leq \kappa } \frac{1}{2} 
\sum_{t=1}^n \log\Big( \frac{ \Big(\Lambda_t +c\Big)^2K_t^s  + K_{W} + K_{Z_t}     }{K_{W}}\Big) \label{ar_1deg}
\end{align}
subject to the constraint
\begin{align}
&K_{t+1}^s= \big(c\big)^2 K_{t}^s  +K_{W} -\Big(c  K_{t}^s\big(\Lambda_t + c \big)+ K_{W}   \Big)^2 \Big(  K_{W}+ K_{Z_t} + \big(\Lambda_t + c \big)^2 K_{t}^s      \Big)^{-1}, \;  K_1^s=0, \;  t=1, \ldots, n. \label{ar_1deg_n}
\end{align}
\end{corollary}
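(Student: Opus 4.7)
The proof is essentially a mechanical specialization of Theorem~\ref{thm_SS} and Lemma~\ref{lemma_POSS} to the ARMA$(a,c)$ PO-SS realization of Example~\ref{ex_1_1_n}. My plan is to read off the parameters $(A_t,B_t,C_t,N_t,K_{W_t})=(c,1,c-a,1,K_W)$ from (\ref{ex_1_5})--(\ref{ex_1_6}) and substitute them into each formula of the general theorem, keeping track of which quantities simplify to scalars.

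First I will apply Lemma~\ref{lemma_POSS} with the above parameters. The Kalman gain (\ref{kal_fil_noise}) becomes $M_t(\Sigma_t)=\big(c\Sigma_t(c-a)+K_W\big)\big(K_W+(c-a)^2\Sigma_t\big)^{-1}$, which is exactly (\ref{kf_m_444_a}); the innovations variance (\ref{cov_in_noise}) becomes $K_{\hat I_t}=(c-a)^2\Sigma_t+K_W$, matching (\ref{kf_m_444_b}); and the generalized DRE (\ref{dre_1}) collapses to the scalar recursion (\ref{kf_m_444}) with initial datum $\Sigma_1=K_{S_1}$ from (\ref{ex_1_7}). Next I invoke Theorem~\ref{thm_SS}.(b): the second DRE (\ref{kf_m_4_a}) for the state-of-the-input covariance $K_t$, under the substitution (\ref{trans}), becomes the scalar DRE (\ref{kf_m_4}), with $K_1=0$. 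Plugging $K_{Y_t|Y^{t-1}}=(\Lambda_t+c-a)^2K_t+K_{\hat I_t}+K_{Z_t}$ and the expression (\ref{cp_9_al_n}) for the average power into (\ref{cp_12_alt_1_new}) yields (\ref{cp_12_alt_1_new_new}). This establishes part (a).

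For the special case $\Sigma_1=0$ (equivalently $K_{V_0}=K_{W_0}=0$) or a fixed initial state $S_1=S_1^s=s$ (handled by Corollary~\ref{thm_SS_in} with $\Sigma_1^s=0$), I will verify that $\Sigma_t\equiv 0$ is a solution of (\ref{kf_m_444}): direct substitution gives $\Sigma_{t+1}=K_W-K_W^2/K_W=0$, confirming the invariance. Hence $M_t(\Sigma_t)=1$ and $K_{\hat I_t}=K_W$ for all $t$, i.e.\ (\ref{deg_arma}). Substituting these identities into (\ref{cp_12_alt_1_new_new}) and (\ref{kf_m_4}) delivers (\ref{cp_12_alt_1_new_new_new}) and (\ref{kf_m_4_a_n}). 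The equality ${C}_n^{fb}(\kappa)={C}_n^{fb}(\kappa,s)$ in this degenerate case follows because both $H(V^n)$ from (\ref{entr_noise}) and $H(V^n|s)$ from (\ref{entr_noise_cond}) evaluate to $\tfrac{1}{2}\sum_{t=1}^n\log(2\pi e K_W)$, and the second DRE and objective functional coincide. Parts (b) and (c) are then immediate: setting $a=0$ in the formulas of part (a) yields (\ref{cp_12_alt_1_new_new_ar1})--(\ref{grde_ar1}), and in the degenerate version (\ref{ar_1deg})--(\ref{ar_1deg_n}).

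The calculations are all routine once the correspondence between the vector state-space notation of Theorem~\ref{thm_SS} and the scalar ARMA parameters has been fixed via (\ref{trans}); there is no substantive obstacle, only bookkeeping. The only point that requires brief verification, rather than pure substitution, is the claim that $\Sigma_t\equiv 0$ is a fixed point of the noise DRE (\ref{kf_m_444}) whenever $\Sigma_1=0$, which underlies the reduction of the two-DRE characterization to a single-DRE characterization in the degenerate subcase and underpins the identity $C_n^{fb}(\kappa)=C_n^{fb}(\kappa,s)$ there.
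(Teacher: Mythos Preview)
Your proposal is correct and follows essentially the same approach as the paper: both proofs specialize Theorem~\ref{thm_SS} and Lemma~\ref{lemma_POSS} via the parameter map (\ref{trans}), then for the special case verify by direct substitution that $\Sigma_t\equiv 0$ solves the noise DRE (\ref{kf_m_444}) whenever $\Sigma_1=0$, whence $M_t(\Sigma_t)=1$ and $K_{\hat I_t}=K_W$, and parts (b), (c) follow by setting $a=0$. Your write-up is slightly more explicit about the fixed-point check and the entropy identity underlying $C_n^{fb}(\kappa)=C_n^{fb}(\kappa,s)$, but the logical content is the same.
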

\begin{proof} (a) The first part follows directly from 
Theorem~\ref{thm_SS}, by using (\ref{trans}).
The last part is obtained as follows. If  $\Sigma_1=0$ or $S_1=S_1^s=s$ is fixed,  then 
by   
(\ref{kf_m_444}) it follows, $\Sigma_t=\Sigma_t^s=0, \forall t=2,\ldots$, and by (\ref{kf_m_444_a}), (\ref{kf_m_444_b})
it follows,  $M_t(\Sigma_t)= M_t(\Sigma_t^s)=1, K_{\hat{I}_t}=(c-a)^2 \Sigma_t^s+K_W=K_W, \forall t =1, 2,$. Substituting into (\ref{cp_12_alt_1_new_new}), (\ref{kf_m_4}) we obtain (\ref{cp_12_alt_1_new_new_new}), (\ref{kf_m_4_a}). 
 (b) From part (a), letting $a=0$, then 
\begin{align}
&M_{t}(\Sigma_t)= \Big( \big(c\big)^2  \Sigma_{t}+K_{W}\Big)\Big(K_{W}+ \big(c\big)^2 \Sigma_{t} \Big)^{-1},\hso K_{\hat{I}_t}= \big(c\big)^2 \Sigma_t +K_{W}, \hso t=1, \ldots, n. \label{spe_c_1}
\end{align}
By substitution into the equations of part (a) we obtain  (\ref{kf_m_4_ar1}), (\ref{grde_ar1}). (c) This is a special case of parts (a), (b).
\end{proof}

\  \

\begin{remark} By Corollary~\ref{cor_ex_1}.(a) it is obvious that, if $\Sigma_1=0$, i.e., $K_{S_0}=K_{W_0}=0$, which means $S_1=S_1^s=s$ is fixed, and hence $(V_0, W_0) =(v_0, w_0)$ is fixed (and known to the encoder and the decoder), see (\ref{ex_1_4}), then $\Sigma_1=\Sigma_1^s=0$, and $C_n^{fb}(\kappa)=C_n^{fb}(\kappa,s)$, which depends on the initial state $S_1=S_1^s=s$. To ensure for large enough $n$ the rate $\frac{1}{n}C_n(\kappa, s)$ is independent of $s$, it is necessary to identify conditions for convergence of solutions $K_t^s, t=1,2, \ldots$ of  generalized DRE  (\ref{kf_m_4_a}) to a unique limit,  $\lim_{n \longrightarrow \infty}K_n^s= K^\infty\geq 0$, that does not depend on the initial data $K_1^s=0$. We address this  problem in Section~\ref{sect:as_an}. We should emphasize that the asymptotic limit of the of Corollary~\ref{cor_ex_1}.(c), i.e., of the AR$(c), c \in (-\infty,\infty)$ is fully analyzed in \cite{charalambous-kourtellaris-loykaIEEEITC2019}. 
\end{remark}

\subsection{Case II) Formulation:   A Degenerate of Case I) Formulation}
\label{sect:case_II}
%
%
%
%
%
%
%

Theorem~\ref{thm_SS} gives the $n-$FTFI capacity for Case I)  formulation. However, since Case II) formulation  is a special case of Case I)  formulation, we expect that  from   Theorem~\ref{thm_SS} we can recover the characterization of the $n-$FTFI capacity for Case II)  formulation, i.e.,  when the code is $(s, 2^{n R},n)$,   $n=1,2, \ldots$, and Conditions 1 and 2 of  Section~\ref{sect:motivation} hold.  We show this in  the next corollary.  \\

\begin{corollary} The degenerate  $n-$FTFI Capacity ${C}_n^{fb}(\kappa)$ of Theorem~\ref{thm_SS} for Case II) formulation   \\ 
\label{cor_issr}
Consider the time-varying AGN channel defined by (\ref{g_cp_1}), driven by a noise with  PO-SS realization  of Definition~\ref{def_nr_2}, and  suppose the following hold.

1) The code is  $(s, 2^{n R},n)$,   $n=1,2, \ldots$, and 

2) Conditions 1 and 2 of  Section~\ref{sect:motivation} hold.

Then the following hold.\\
(a) Corollary~\ref{cor_POSS} holds, i.e.,  all statements of Lemma~\ref{lemma_POSS} hold with  $(\Sigma_t, \hat{S}_t)$ replaced by  $(\Sigma_t^{s}, \hat{S}_t^{s})$  as defined by (\ref{cond_GK_1_n}), (\ref{cond_GK_2_n}). In particular, $(\Sigma_t^{s}, \hat{S}_t^{s})=(0, S_t^s)$ for $t=1,2, \ldots$,   and $H(V^n)=H(V^n|s)$ is given by  (\ref{cond_ent}). \\
(b) All statements of  Theorem~\ref{thm_SS} hold with  $(\Sigma_t, \hat{S}_t)$ replaced by  $(\Sigma_t^{s}, \hat{S}_t^{s})$, as in part (a),   and    $(K_t, \widehat{\hat{S}}_{t})$ defined by  (\ref{ric_ga}), (\ref{ric_gb})  reduce to     
\bea
K_t=K_t^s=cov\Big({S}_t^s,{S}_t^s\Big|Y^{t-1}, S_1^s=s\Big),\;   \widehat{\hat{S}}_{t}=\widehat{S}_t^s={\bf E}\Big\{{S}_t^s\Big|Y^{t-1},S_1^s=s\Big\},\; K_1^s=0, \widehat{S}_1^s=s, \;  t=2, \ldots, n
\eea
In particular,  the optimal input process  $X^n$ of Theorem~\ref{thm_SS}.(c) degenerates to   
\begin{align}
&X_t=  \Lambda_{t}\Big(S_{t}^s- \widehat{S}_{t}^s\Big) + Z_t,\hso X_1=Z_t, \hso  t=2, \ldots, n. \label{cp_13_alt_n_deg}
\end{align}
(c) The characterization of $n-$FTFI capacity,  $C_n^{fb}(\kappa)$ of Theorem~\ref{thm_SS} degenerates to $C_n^{fb,S}(\kappa,s)$ defined by
\begin{align}
{C}_n^{fb}(\kappa)=& {C}_n^{fb,S}(\kappa,s)
  \tri \sup_{  \big(\Lambda_{t}, K_{Z_t}\big), t=1, \ldots, n: \hso \frac{1}{n}  {\bf E}_{s}\big\{\sum_{t=1}^n \big(X_t\big)^2\big\}\leq \kappa }  \sum_{t=1}^n \log \frac{K_{Y_t|Y^{t-1},s}}{K_{V_t|V^{t-1},s}} \label{cp_11_alt_1_deg}\\
=&  \sup_{  \big(\Lambda_{t}, K_{Z_t}\big), t=1, \ldots, n: \hso \frac{1}{n}\sum_{t=1}^n \big(\Lambda_t K_t^s \Lambda_t^T+K_{Z_t}\big)\leq \kappa }\frac{1}{2} 
\sum_{t=1}^n \log\Big( \frac{ \Big(\Lambda_t +C_t\Big)K_t^s \Big(\Lambda_t +C_t\Big)^T + N_t K_{W_t}N_t^T + K_{Z_t}     }{N_t K_{W_t}N_t^T}\Big).
 \label{cp_12_alt_1_deg}
  \end{align}
$K_t=K_t^s={\bf E}_s\big\{E_t^s \big(E_t^s\big)^T\big\}$ satisfies the generalized DRE 
\begin{align}
K_{t+1}^s= &A_t K_{t}A_t^T  + B_tK_{W_t}B_t^T -\Big( B_tK_{W_t}N_t^T  + A_t  K_{t}^s\big(\Lambda_t + C_t \big)^T\Big)  \Big\{N_t K_{W_t} N_t^T+K_{Z_t}\nonumber \\
&+ \big(\Lambda_t + C_t \big) K_{t}^s\big(\Lambda_t + C_t \big) ^T\Big\}^{-1}\Big( B_t K_{W_t}N_t^T + A_t  K_{t}^s\big(\Lambda_t + C_t \big)\Big)^T,\hso   K_t^s\succeq 0, \hso K_1^s=0, \hso t=1, \ldots, n.   \label{rde_ykt_deg}
\end{align}
and the statements of parts (a), (b) hold.
\end{corollary}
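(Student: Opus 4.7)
\textbf{Proof plan for Corollary~\ref{cor_issr}.} The strategy is to recognize Case II) as the algebraic degeneration of Theorem~\ref{thm_SS} obtained by taking $\Sigma_t^s \equiv 0$, and then propagate this reduction through every object in the sufficient--statistic representation. Part (a) is the starting point: by Corollary~\ref{cor_POSS} applied with $S_1 = S_1^s = s$ fixed, the generalized DRE (\ref{dre_1}) is initialized at $\Sigma_1^s = 0$. Condition 2 (equivalently, the condition ``$N_t W_t$ uniquely defines $C_{t+1} B_t W_t$'' of Corollary~\ref{cor_nftfic_s}.(b)) guarantees that $\Sigma_t^s = 0$ persists for all $t$, because plugging $\Sigma = 0$ into the right--hand side of (\ref{dre_1}) leaves $B_t K_{W_t} B_t^T - B_t K_{W_t} N_t^T (N_t K_{W_t} N_t^T)^{-1} N_t K_{W_t} B_t^T$, which vanishes precisely under that rank condition. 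Hence $\hat{S}_t^s = S_t^s$ and $K_{\hat{I}_t} = N_t K_{W_t} N_t^T$, and $H(V^n|s)$ reduces to (\ref{cond_ent}).

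For part (b), I substitute $\hat{S}_t = \hat{S}_t^s = S_t^s$ everywhere in Theorem~\ref{thm_SS}. The definitions (\ref{ric_ga})--(\ref{ric_gb}) of $K_t$ and $\widehat{\hat{S}}_t$ automatically specialize to the covariance and conditional mean of the \emph{true} state $S_t^s$ given $(Y^{t-1},s)$, because the outer expectation in $\widehat{\hat{S}}_t = {\bf E}\{\hat{S}_t\,|\,Y^{t-1},s\}$ collapses onto $S_t^s$ itself; thus $K_1^s = 0$ and $\widehat{S}_1^s = s$. The input realization (\ref{cp_13_alt_n}) becomes (\ref{cp_13_alt_n_deg}) by direct substitution.

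For part (c), I compute the residual consequences of $\Sigma_t^s \equiv 0$ in the characterization (\ref{cp_12_alt_1_new}) and in the DRE (\ref{kf_m_4_a}). The denominator $K_{\hat{I}_t}$ becomes $N_t K_{W_t} N_t^T$, which transforms (\ref{cp_12_alt_1_new}) directly into (\ref{cp_12_alt_1_deg}). For the Riccati equation, the cross term $M_t(\Sigma_t^s) K_{\hat{I}_t}$ equals $B_t K_{W_t} N_t^T$ (straight from the definition of $M_t$ in Lemma~\ref{lemma_POSS} with $\Sigma = 0$), while the quadratic term $M_t(\Sigma_t^s) K_{\hat{I}_t} M_t(\Sigma_t^s)^T$ equals $B_t K_{W_t} B_t^T$ by the same rank condition invoked in part (a). Substituting these two algebraic identities into (\ref{kf_m_4_a}) yields (\ref{rde_ykt_deg}).

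The main (and only nontrivial) obstacle is verifying cleanly that Condition 2 is exactly the algebraic hypothesis under which both $\Sigma_t^s \equiv 0$ solves the noise--filter DRE and the simplification $M_t(\Sigma_t^s) K_{\hat{I}_t} M_t(\Sigma_t^s)^T = B_t K_{W_t} B_t^T$ holds; all other steps are mechanical substitutions into the formulas already established in Theorem~\ref{thm_SS} and Corollary~\ref{cor_POSS}. Once this equivalence is articulated, parts (a)--(c) follow without further computation.
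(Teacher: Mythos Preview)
Your proposal is correct and follows essentially the same route as the paper's proof, which simply invokes Remark~\ref{rem_ss_is} for part (a) and then says parts (b)--(c) are ``easily verified by replacing all conditional expectations, distributions, etc.'' for fixed $S_1^s=s$ using $(\Sigma_t^s,\hat{S}_t^s)=(0,S_t^s)$. You supply the explicit algebra that the paper leaves implicit --- in particular, the identifications $M_t(0)K_{\hat{I}_t}=B_tK_{W_t}N_t^T$ and $M_t(0)K_{\hat{I}_t}M_t(0)^T=B_tK_{W_t}B_t^T$ (the latter holding exactly under the rank condition that $N_tW_t$ determines $B_tW_t$, which is what Condition~2 delivers) --- so your write-up is in fact more detailed than the paper's, but the strategy is identical.
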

\begin{proof} (a) The statements about Lemma~\ref{lemma_POSS} follow  from   Remark~\ref{rem_ss_is}.  (b) The statements about Theorem~\ref{thm_SS} are easily verified by replacing all conditional expectations, distributions, etc, for a fixed   initial state $S_1=S_1^s=s$, and using part (a), i.e.,  $(\Sigma_t^{s}, \hat{S}_t^{s})=(0, S_t^s)$, $t=1,2, \ldots$. 
(c)  Follows from parts (a), (b).
\end{proof}

%
%
%
%
%
%
%
%
%
%
%
%
%
%
%
%

\subsection{Comments  on Past Literature}
\label{sect:p_l}
It is  easily verified that Yang, Kavcic and Tatikonda  \cite{yang-kavcic-tatikonda2007} analyzed    $C_n^{fb}(\kappa,s)$ defined by  (\ref{ftfic_is_g_is_def}), under Case II) formulation, i.e., Conditions 1 and 2 of  Section~\ref{sect:motivation} hold, as  discussed in the next remark.\\

\begin{remark} Prior literature on the time-invariant stationary noise  of PSD (\ref{PSD_G}) \\
\label{rem_ykt}
Yang, Kavcic and Tatikonda \cite{yang-kavcic-tatikonda2007} analyzed the AGN channel driven by a stationary noise with PSD defined by (\ref{PSD_G}) (see    \cite[Theorem~1]{yang-kavcic-tatikonda2007}). The special case of  (\ref{ex_2_1_in})  is found in \cite[Section VI.B, Theorem~7]{yang-kavcic-tatikonda2007}.  \\
The analysis in \cite{yang-kavcic-tatikonda2007} presupposed   the following formulation:\\
(i) the   code is  $(s, 2^{n R},n)$,   $n=1,2, \ldots$, where $S_1=S_1^s=s$ is the initial state of the noise, known to the encoder and the decoder, as discussed in Definition~\ref{def_rem:cp_1}, \\ 
(ii) Conditions 1 and 2  of Section~\ref{sect:motivation}, hold, and \\
(iii) the  $n-$FTFI capacity formula is $C_n^{fb}(\kappa,s)$ defined by (\ref{ftfic_is_g_is_def}).\\
We  emphasize that in \cite[Section~II.C]{yang-kavcic-tatikonda2007} a specific  realization of the PSD is considered to ensure Conditions  1 and 2 hold, i.e., the analysis in  \cite{yang-kavcic-tatikonda2007} presupposed a stationary noise and Case II) formulation. \\  
\end{remark}

Now, we  ask: Given the PSD of the noise defined by (\ref{PSD_G}), and the double-sided realization \cite[eqn(58)]{kim2010}, i.e., the  analog of time-invariant version of the PO-SS realization of Definition~\ref{def_nr_2}, or its analogous one-sided realization, what are the necessary conditions for the feedback capacity of \cite[Theorem~6.1]{kim2010} to be valid?\\
The answer to this  question   is: Conditions 1 and 2  of Section~\ref{sect:motivation} are necessary conditions. We show this in the next proposition.  \\

\begin{proposition} Conditions for validity of  the feedback capacity characterization of \cite[Theorem~6.1]{kim2010}\\
\label{pro_1}
Consider the AGN channel (\ref{g_cp_1}) driven by a stationary noise with PSD defined by (\ref{PSD_G}) with the double-sided or one-sided realization \cite[eqn(58)]{kim2010}, (i.e., analog of time invariant of Definition~\ref{def_nr_2}). \\
   Then a  necessary condition for \cite[Theorem~6.1]{kim2010} to hold is 
\begin{align}
{\bf P}_{X_t|X^{t-1}, Y_{-\infty}^{t-1}}={\bf P}_{X_t|S^{t}, Y_{-\infty}^{t-1}}, \hst t =1, \ldots, \label{eql_kim}
\end{align}
Further, Conditions 1 and 2  of Section~\ref{sect:motivation} are necessary and sufficient  for equality  (\ref{eql_kim}) to hold.  
\end{proposition}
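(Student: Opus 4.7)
The plan is to decompose the proposition into three pieces: (i) Kim's Theorem~6.1 implicitly requires that the optimal Cover--Pombra input distribution be representable in state-conditioned form, which is exactly (\ref{eql_kim}); (ii) Conditions~1 and 2 of Section~\ref{sect:motivation} imply (\ref{eql_kim}); and (iii) their failure breaks (\ref{eql_kim}).

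For piece (i), I would read off from the derivation underlying \cite[Theorem~6.1]{kim2010} (see \cite[Lemma~6.1]{kim2010} and the comments above it) that the feedback capacity is expressed as a supremum over jointly Gaussian inputs of the form $X_t = \Lambda_t(S_t - \widehat S_t) + Z_t$, i.e.\ as functionals of the noise state $S^t$ and the output history $Y_{-\infty}^{t-1}$. The Cover--Pombra converse \cite[Theorem~1]{cover-pombra1989} instead supremizes over kernels ${\bf P}_{X_t|X^{t-1}, Y_{-\infty}^{t-1}}$. For the two suprema to coincide and return the true $C^{fb}(\kappa)$, every Cover--Pombra kernel must be representable as a Kim-type state-conditioned kernel, which is precisely (\ref{eql_kim}). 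Otherwise the class of input distributions over which Kim maximizes is strictly smaller than the one admitted by the Cover--Pombra code of Definition~\ref{def_code}, and \cite[Theorem~6.1]{kim2010} cannot equal the Cover--Pombra feedback capacity.

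For piece (ii), I would invoke the chain of identities (\ref{inp_1})--(\ref{inp_3}) already recorded in the paper: the channel equation $Y_k = X_k + V_k$ gives ${\bf P}_{X_t|X^{t-1}, Y_{-\infty}^{t-1}} = {\bf P}_{X_t|V^{t-1}, Y_{-\infty}^{t-1}}$ unconditionally; Condition~1 allows adjoining $S_1$ to the conditioning without altering the distribution, yielding ${\bf P}_{X_t|V^{t-1}, Y_{-\infty}^{t-1}, S_1}$; and Condition~2 asserts that $(V^{t-1}, S_1)$ and $S^t$ generate the same $\sigma$-algebra, so conditioning on one is equivalent to conditioning on the other. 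The composed equality is (\ref{eql_kim}).

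For piece (iii), I would argue contrapositively using the ARMA$(a,c)$ realization of Example~\ref{ex_1_1_n}: by (\ref{ex_1_4}), $S_1 = (cV_0 - aW_0)/(c-a)$, so unless the pair $(V_0, W_0)$ is revealed to the encoder (Condition~1) \emph{and} $V^{t-1}$ together with $S_1$ pins down $S^t$ (Condition~2), the $\sigma$-algebra of $(X^{t-1}, Y_{-\infty}^{t-1})$ is strictly coarser than the $\sigma$-algebra of $S^t$. A Kim-type encoder can then depend measurably on the extra information in $S^t$ --- through the hidden $W_0$ when Condition~1 fails, or through the unobserved innovations of the PO-SS realization of Definition~\ref{def_nr_2} when Condition~2 fails --- and the two conditional kernels in (\ref{eql_kim}) disagree. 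The main obstacle is making this measurability gap quantitative rather than merely set-theoretic; the cleanest route is to exhibit a Gaussian strategy of the form $X_t = \alpha W_0 + (\text{terms in } V^{t-1}, Y_{-\infty}^{t-1})$, which by Remark~\ref{rem_kim}.(b) is inaccessible to the Cover--Pombra encoder, and to verify that the right-hand side of (\ref{eql_kim}) is then a nondegenerate mixture over the hidden component while the left-hand side is not, completing the contrapositive.
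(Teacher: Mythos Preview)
Your pieces (i) and (ii) match the paper's proof exactly: the paper invokes \cite[Lemma~6.1]{kim2010} to read off that Kim's input is of the form $X_t=\Lambda(S_t-\widehat S_t)$, hence state-conditioned, which forces (\ref{eql_kim}); and for sufficiency it runs precisely the chain (\ref{inp_1})--(\ref{inp_3}), carried out explicitly for the ARMA$(a,c)$ realization by the induction (\ref{int_kim_1})--(\ref{int_kim_2}).

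For piece (iii) your instinct to exhibit an explicit distinguishing strategy is sound, but the specific witness is off. A strategy $X_t=\alpha W_0+(\text{terms in }V^{t-1},Y_{-\infty}^{t-1})$ does not produce the asymmetry you claim: neither $\sigma(S^t,Y^{t-1})$ nor $\sigma(V^{t-1},Y^{t-1})$ determines $W_0$, because $S_1=(cV_0-aW_0)/(c-a)$ only fixes a linear combination of $(V_0,W_0)$. Both sides of (\ref{eql_kim}) are then nondegenerate mixtures over the residual randomness in $W_0$, and you would have to compare two genuine Gaussian mixtures rather than a point mass against a mixture. The clean witness---and the one the paper implicitly uses---is Kim's own input $X_t=\Lambda(S_t-\widehat S_t)$: for this choice the right-hand side ${\bf P}_{X_t|S^t,Y_{-\infty}^{t-1}}$ is a point mass (since $S_t$ is in the conditioning and $\widehat S_t$ is $\sigma(Y_{-\infty}^{t-1})$-measurable), whereas the left-hand side ${\bf P}_{X_t|V^{t-1},Y_{-\infty}^{t-1}}$ is nondegenerate, because $S_t=c^{t-1}S_1+\sum_{j=1}^{t-1}c^{t-1-j}W_j$ and $S_1$ is not $\sigma(V^{t-1},Y_{-\infty}^{t-1})$-measurable when Condition~1 fails. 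This is exactly the contradiction the paper records: it observes that $V^{t-1}$ does not specify $S^t$ once $S_1$ is hidden, so the last equality in the chain ${\bf P}_{X_t|X^{t-1},Y_{-\infty}^{t-1}}={\bf P}_{X_t|V^{t-1},Y_{-\infty}^{t-1}}={\bf P}_{X_t|S^t,Y_{-\infty}^{t-1}}$ cannot hold for an $S^t$-dependent input. Note also that your final sentence has the roles of the two sides reversed---it is the left-hand side that becomes the nondegenerate mixture.
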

\begin{proof} 
 See Section~\ref{sect:app_pro_1}.
\end{proof} 


The next remark is our final observation on prior literature.

\begin{remark} Comparison of Cover and Pombra Characterization and current literature\\ 
From Corollary~\ref{cor_issr} and  Proposition~\ref{pro_1} we have the following.
\\
The characterization of feedback capacity given in \cite[Theorem~6.1, $C_{FB}$]{kim2010} corresponds to Case II) formulation and not to Case I) formulation. 
Further, the optimization problem of  \cite[Theorem~6.1, $C_{FB}$]{kim2010} is precisely the  optimization problem investigated  in  \cite[Section~VI]{yang-kavcic-tatikonda2007}, with the additional restriction that the innovations part of the channel input is taken to be zero in \cite[Theorem~6.1, $C_{FB}$]{kim2010}, i.e.,  see  \cite[Lemma~6.1 and comments above it]{kim2010}. Recent literature \cite{liu-han2019,gattami2019,ihara2019,li-elia2019} should be read with caution, because the results therein, often build on \cite[Theorem~4.1 and Theorem~6.1]{kim2010}.
\end{remark}

\section{Asymptotic Analysis for Case I) Formulation}
\label{sect:as_an}
In this section we address the asymptotic per unit time limit of the $n-$FTFI capacity. Our analysis includes the following.  

1) Fundamental differences of entropy rates  of jointly Gaussian stable versus  unstable noise processes.

2) Necessary and/or sufficient conditions expressed in terms of detectability and stabilizability conditions of generalized DREs\cite{caines1988,kailath-sayed-hassibi}, for   existence of  entropy rates, and asymptotic stationarity of the input process $X^n, n=1,2,\ldots$ (and output process $Y^n, n=1,2, \ldots,$ if the noise is stable). 


This section also reconfirms that, in general,  the asymptotic analysis of the $n-$FTFI capacity of a  feedback code that depends on the initial state of the channel, i.e., $S_1=S_1^s=s$,  is fundamentally different from a code that does does not depend on the initial state. 
 The analysis of the asymptotic per unit time limit of $C_n^{fb}(\kappa,s)$ of  AGN channels driven by AR$(c), c\in (-\infty, \infty)$ noise, i.e., stable and unstable, is found in  \cite{charalambous-kourtellaris-loykaIEEEITC2019}. 
We consider  the following definition of rate, often used for nonfeedback capacity of stationary processes; however, our formulations does not assume stationarity.  

\begin{definition} Per unit time limit of $C_n^{fb,o}(\kappa)$ and $C_n^{fb,o}(\kappa,s)$ \\
\label{def_limit_1}
Consider the AGN channel defined by (\ref{g_cp_1}), driven by the  time-invariant PO-SS realization  of Definition~\ref{def_nr_2}. \\
(a) For the code of Definition~\ref{def_code}, define the  per unit time limit 
  \begin{align}
 C^{fb,o}(\kappa) \tri  \sup_{\lim_{n \longrightarrow \infty} \frac{1}{n} {\bf E}\big\{\sum_{t=1}^n \big(X_t\big)^2 \big\}\leq \kappa} \lim_{n \longrightarrow \infty} \frac{1}{n} \Big\{ H(Y^n)-H(V^n)\Big\} 
  \leq C^{fb}(\kappa) \tri  \lim_{n \longrightarrow \infty} \frac{1}{n} C_n^{fb}(\kappa)
   \label{CP_F_new}
 \end{align}
 where the supremum is taken over all 
 time-invarinat distributions with feedback ${\bf P}_{X_t|X^{t-1}, Y^{t-1}}^o={\bf P}_{X_t|V^{t-1}, Y^{t-1}}^o, t=1,2,  \ldots$, such that the limits exists and the supremum exists and it is finite. 
  \\
(b) For code  of  Definition~\ref{def_rem:cp_1}, i.e.,  $(s, 2^{n R},n)$,   $n=1,2, \ldots$, with initial state  $S_1=S_1^s=s$, $C^{fb,o}(\kappa)$ is replaced by $C^{fb,o}(\kappa,s)$,  defined by (\ref{CP_F_new}),  with differential entropies, conditional expectations, conditional distributions, defined   for fixed $S_1^s=s$. 
\end{definition} 
 
Our definition of rate is consistent with the definition of rates considered in \cite{kim2010,liu-han2019,gattami2019,ihara2019,li-elia2019}, i.e., the interchange of limit and supremum. However, unlike  \cite{kim2010,liu-han2019,gattami2019,ihara2019,li-elia2019} we treat the general time-invariant stable and unstable,  PO-SS noise realization of Definition~\ref{def_nr_2},   not necessarily  stationary or  asymptotically stationary. \\
We should emphasize  that,  in general, and irrespective of  whether the noise is stable or unstable,  the entropy rates that appear in (\ref{CP_F_new}) may not  exist. To  show  existence of the limits   $C^{fb,o}(\kappa)$ and $C^{fb,o}(\kappa,s)$, we identify necessary and/or sufficient conditions, using the characterization of     Theorem~\ref{thm_SS}, when the   channel input strategies are restricted to the time-invariant strategies $\Lambda_t=\Lambda^\infty, K_{Z_t}=K_Z^\infty, t=1,2, \ldots$. Clearly, by (\ref{CP_F_new}), whether the limit as $n\longrightarrow \infty$ exists, and  supremum over channel input distributions exists,  depend on   the convergence properties of the coupled generalized matrix DREs, $\Sigma_n, K_n^0\equiv K_n (\Lambda^\infty, K_Z^\infty,\Sigma)$, as $n \longrightarrow \infty$. 


\subsection{Entropy Rates of Gaussian Processes}
First, we recall the following definition, which  is standard and it is  found in many textbooks. \\

\begin{definition} Entropy rate of continuous-valued random  processes \\
Let $X_t: \Omega \rar {\mathbb R}^{n_z}, n_x\in Z_+$ a random process defined on some probability space $(\Omega, {\cal F}, {\mathbb P})$. The entropy rate  (differential) is defined by 
\begin{align}
H_R(X^\infty) \tri  \lim_{n \longrightarrow \infty} \frac{1}{n} H(X_1, X_2, \ldots, X_n)
\end{align}
when the limit exists.\\
\end{definition}

 The next theorem quantifies the existence of entropy rates of stationary Gaussian processes \cite{caines1988}.\\

\begin{theorem} The entropy rate of stationary zero mean full rank Gaussian process \cite{caines1988}\\
\label{thm_er_st_G}
Let $X_t: \Omega \rar {\mathbb R}^{n_x}, n_x\in Z_+, \forall t \in {\mathbb Z}_+$ be a stationary Gaussian process, with  zero mean, and  full rank covariance of $ {\bf X}^n$. Let ${\cal  H}_t^X$ denote the Hilbert space of RVs generated by $\{X_t: s\leq t, s, t \in Z_+\}$, and  
define the innovations process by 
\begin{align}
\Sigma_t \tri {\bf E}\Big\{ \Big(X_t - {\bf E}\Big\{X_t\Big|{\cal H}_{t-1}^X \Big\}\Big) \Big(X_t - {\bf E}\Big\{X_t\Big|{\cal H}_{t-1}^X \Big\}\Big)^T\Big\}\succ 0
\end{align}
and its limit 
\begin{align}
\Sigma \tri \lim_{n \longrightarrow \infty} \Sigma_n
\end{align}
Then the entropy rate  is given by 
\begin{align}
H_R(X^\infty) =& \frac{n_x}{2}\log \big(2\pi e\big) +  \frac{1}{2}\lim_{n \longrightarrow \infty} \frac{1}{n} \sum_{t=1}^n \log |\Sigma_t|\\
=&\frac{n_x}{2\pi}\log \big(2\pi e\big) + \frac{1}{2}\log |\Sigma| \label{er_1}
\end{align} 
when it exists. \\
\end{theorem}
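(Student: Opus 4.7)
The plan is to proceed via the chain rule for differential entropy together with a Cesàro mean argument, exploiting the Gaussian structure to reduce the problem to the convergence of a scalar sequence.

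First I would expand the joint differential entropy by the chain rule,
\begin{align*}
H(X_1,\ldots,X_n) = H(X_1) + \sum_{t=2}^{n} H(X_t \mid X^{t-1}).
\end{align*}
Because $(X_1,\ldots,X_n)$ is jointly Gaussian with full-rank covariance, the conditional distribution of $X_t$ given $X^{t-1}$ is Gaussian with a nonrandom conditional covariance equal to $\Sigma_t$ as defined in the statement (the innovations covariance is precisely the conditional covariance in the Gaussian case, and by the full-rank hypothesis all $\Sigma_t \succ 0$). Applying the closed-form entropy formula for a non-degenerate Gaussian vector gives
\begin{align*}
H(X_t \mid X^{t-1}) = \frac{n_x}{2}\log(2\pi e) + \frac{1}{2}\log |\Sigma_t|,
\end{align*}
so that
\begin{align*}
\frac{1}{n}H(X_1,\ldots,X_n) = \frac{n_x}{2}\log(2\pi e) + \frac{1}{2n}\sum_{t=1}^{n}\log |\Sigma_t|.
\end{align*}

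Next I would establish the convergence $\Sigma_t \to \Sigma$ as $t\to\infty$. By stationarity, the one-step prediction errors form a monotonically nonincreasing sequence in the Loewner order as the conditioning set grows: since $\{X_s : s \le t-1\} \subset \{X_s : s \le t\}$ (after shifting indices to exploit stationarity), $\Sigma_t$ is the conditional covariance based on an enlarging information set, hence $\Sigma_{t+1} \preceq \Sigma_t$. Bounded below by $0$, the sequence converges to the asserted limit $\Sigma \succeq 0$; the full-rank hypothesis together with $\Sigma_t \succ 0$ ensures $\Sigma \succ 0$ as well, so $\log |\Sigma_t| \to \log |\Sigma|$ by continuity of the log-determinant on the positive-definite cone. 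A standard Cesàro mean (Stolz--Cesàro) argument then yields
\begin{align*}
\lim_{n \to \infty}\frac{1}{n}\sum_{t=1}^{n}\log|\Sigma_t| = \log|\Sigma|,
\end{align*}
which combined with the previous display gives the claimed entropy rate formula \eq{er_1}.

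The main obstacle I foresee is the rigorous justification of $\Sigma_t \to \Sigma \succ 0$: one needs stationarity plus full rank of the joint covariance of $\mathbf{X}^n$ for each finite $n$ to exclude degenerate limits, and to pass from the intuitive ``nested conditioning'' monotonicity to a genuine matrix-monotone convergence argument (this is essentially the Szeg\H{o}--Kolmogorov result on prediction error for stationary full-rank Gaussian processes). Once this convergence is in hand, the rest of the argument is the routine chain rule plus continuity of $\log|\cdot|$ plus Cesàro averaging described above.
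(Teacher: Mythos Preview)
The paper does not give its own proof of this theorem: it is stated as a known result and attributed to \cite{caines1988}. So there is no paper proof to compare against; you are supplying the standard textbook argument, and your outline (chain rule, Gaussian conditional entropy formula, monotonicity of $\Sigma_t$ by stationarity, Ces\`aro averaging) is indeed the usual route and is essentially correct.

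One small point: your claim that ``the full-rank hypothesis together with $\Sigma_t \succ 0$ ensures $\Sigma \succ 0$ as well'' is not justified by what you wrote. Full rank of every finite block covariance guarantees each $\Sigma_t \succ 0$, but a decreasing sequence of positive-definite matrices can have a singular limit (this is exactly the deterministic/nondeterministic dichotomy in the Wold decomposition). The statement sidesteps this by the trailing qualifier ``when it exists,'' which you should read as assuming the limit $\Sigma$ exists and is nonsingular; with that hypothesis in hand, continuity of $\log|\cdot|$ and Ces\`aro averaging finish the job exactly as you describe.
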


An application of Theorem~\ref{thm_er_st_G} is given in  the next proposition \cite{ihara1993}.\\

\begin{proposition} Entropy  rate of Gaussian process described by PSD (\ref{PSD_G}) \\
\label{prop_entr}
Let $V_t, \forall t \in {\mathbb Z}_+$ be a real,  scalar-valued,  stationary Gaussian noise with PSD   (\ref{PSD_G}),  with a corresponding time-invariant stationary realization (similar to Definition~\ref{def_nr_2}). 
Then the entropy rate is given by 
\begin{align}
H_R(V^\infty) =\frac{1}{2}\log \big(2\pi e K_W\big). \label{arma}
\end{align}
\end{proposition}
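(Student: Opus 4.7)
The plan is to invoke Theorem~\ref{thm_er_st_G}, which reduces the claim to evaluating the limiting one-step prediction error variance $\Sigma = \lim_{n\to\infty}\Sigma_n$ of the scalar ($n_x=1$) stationary Gaussian process $V_t$ with PSD given by (\ref{PSD_G}). Once $\Sigma = K_W$ is established, substituting into (\ref{er_1}) immediately yields $H_R(V^\infty) = \tfrac{1}{2}\log(2\pi e) + \tfrac{1}{2}\log K_W = \tfrac{1}{2}\log(2\pi e K_W)$.

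To compute $\Sigma$, I would use the Kolmogorov--Szeg\H{o} formula
\begin{equation*}
\Sigma \;=\; \exp\Bigl\{\tfrac{1}{2\pi}\textstyle\int_{-\pi}^{\pi}\log S_V(e^{j\theta})\,d\theta\Bigr\},
\end{equation*}
together with the factorization $S_V(e^{j\theta}) = K_W\,|A(e^{j\theta})|^2/|C(e^{j\theta})|^2$, where $A(z)\tri 1-\sum_{k=1}^L a(k)z^k$ and $C(z)\tri 1-\sum_{k=1}^L c(k)z^k$. Assuming (as is standard for the ARMA representation to be stable and invertible) that $A$ and $C$ have no roots inside the closed unit disc, $\log A(z)$ and $\log C(z)$ are analytic in a neighborhood of $|z|\le 1$, so Jensen's formula gives
\begin{equation*}
\tfrac{1}{2\pi}\textstyle\int_{-\pi}^{\pi}\log|A(e^{j\theta})|^2 d\theta \;=\; 2\log|A(0)| \;=\; 0,
\end{equation*}
and analogously for $C$. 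Hence $\log \Sigma = \log K_W$, i.e., $\Sigma = K_W$.

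As a self-contained consistency check that uses only the paper's own machinery, I would specialize the generalized DRE~(\ref{dre_1}) to the scalar ARMA state-space realization of Example~\ref{ex_1_1_n} (with $A_t=c$, $B_t=1$, $C_t=c-a$, $N_t=1$). A direct substitution shows that $\Sigma=0$ is a fixed point of the DRE, and a short computation of the derivative of the DRE map at this fixed point gives the value $a^2$. Under the stability/invertibility assumption $|a|<1$, $|c|<1$, the DRE therefore contracts to $\Sigma_\infty=0$, whence (\ref{cov_in_noise}) yields $K_{\hat I_\infty} = (c-a)^2\Sigma_\infty + K_W = K_W$, in agreement with the spectral computation and again giving $H_R(V^\infty)=\tfrac{1}{2}\log(2\pi e K_W)$ via (\ref{er_1}).

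The main obstacle is a hypothesis-pinning issue rather than a computational one: the coefficient bounds $|c(k)|<1,\,|a(k)|<1$ listed in (\ref{PSD_G}) do not, by themselves, imply that the polynomials $A(z), C(z)$ have all roots outside the closed unit disc when $L\ge 2$, yet this minimum-phase/invertibility condition is exactly what is needed both for Jensen's formula and for the innovations representation of $V_t$ to be canonical. I would therefore open the proof by making this hypothesis on the spectral factors $(A,C)$ explicit, noting that it is tacit in the cited ARMA literature and automatic in the scalar case $L=1$ used in Example~\ref{ex_1_1_n}.
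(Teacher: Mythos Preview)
Your proposal is correct and follows essentially the same route as the paper, which simply cites \cite{ihara1993} and names the Szeg\H{o} formula and Poisson's integral formula as the tools; you have filled in those details (your Jensen/mean-value step is the Poisson argument in disguise, since $A(0)=C(0)=1$ and there are no zeros in the closed disc). Your DRE-based consistency check for $L=1$ and your explicit flag on the minimum-phase hypothesis for $L\ge 2$ are additions beyond what the paper provides.
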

\begin{proof}  
This is shown in \cite{ihara1993} by using  the Szego formula and Poisson's integral formula.
\end{proof}

The next remark is trivial; it is introduced for subsequent comparison. \\

\begin{remark}
\label{ent_rate_1}
 Let $V_t, \forall t \in {\mathbb Z}_+$ be the nonstationary   ARMA$(a,c), a \in (-\infty,\infty), c\in (-\infty,\infty)$ noise of Example~\ref{ex_1_1_n}. Then   the conditional entropy of $V^n$ for fixed initial state  $S_1=S_1^s=s$, is given by 
\begin{align}
H_R(V^\infty|s)\tri \lim_{n\longrightarrow \infty}  \frac{1}{n} H(V^n|s)= \lim_{n\longrightarrow \infty}  \frac{1}{n}
 \sum_{t=1}^n\frac{1}{2}\log\big(2\pi e K_{W}\big)=\frac{1}{2}\log\big(2\pi e K_{W}\big). \label{arma_tv}
\end{align}
\end{remark}

The next lemma  identifies   fundamental  conditions for the   existence of the entropy rate of the time-varying PO-SS noise realization of Definition~\ref{def_nr_2} (if $S_1=S_1^s=s$ is not fixed),  and includes the entropy rate $H_R(V^\infty)$ of the nonstationary   ARMA$(a,c), a \in (-\infty,\infty), c\in (-\infty,\infty)$ noise of Remark~\ref{ent_rate_1}.   \\

\begin{lemma} Entropy rate of the time-varying PO-SS noise realization of Definition~\ref{def_nr_2}\\
\label{lem_entr_in}
Consider the time-varying PO-SS noise realization of Definition~\ref{def_nr_2}. Then the following hold.\\ 
(a) The joint entropy of $V^n$, when it exists,  is given by 
\begin{align}
H(V^n)=\sum_{t=1}^n H(\hat{I}_t)=\frac{1}{2}\sum_{t=1}^n\log\big(2\pi e K_{\hat{I}_t}\big) \label{entropy_0}
\end{align}
where $\hat{I}_t, t=1, \ldots, n$ is a zero mean covariance $K_{\hat{I}_t}\tri cov(\hat{I}_t, \hat{I}_t)$, Gaussian orthogonal  innovations process of $V^n$,  defined by
\begin{align}
\hat{I}_t \tri & \; V_t-{\bf E}\Big\{V_t\Big|V^{t-1}\Big\},  \hst t=1, \ldots, n \label{entr_n_2}
\end{align}
that is, $\hat{I}_t$ is independent of $\hat{I}_k, \forall k\neq t$. \\
(b) Suppose the sequence $K_{\hat{I}_t}, t=1,2, \ldots, n$, is such that
\begin{align}
\lim_{n \longrightarrow \infty}K_{\hat{I}_n}= K_{\hat{I}}^\infty>0. \label{conv_1}
\end{align}
Then the entropy rate of $V_t, \forall t\in {\mathbb Z}_+$, is given by 
\begin{align}
H_R(V^\infty) =\lim_{n \longrightarrow \infty}\frac{1}{n} \sum_{t=1}^n H(\hat{I}_t)=   \frac{1}{2}\log \big(2\pi e K_{\hat{I}}^\infty \big).\label{in_entr_new}
\end{align} 
\end{lemma}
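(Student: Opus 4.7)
My plan is to obtain part (a) by a direct application of the chain rule of entropy together with the orthogonality of the Gaussian innovations, and then obtain part (b) as a Ces\`aro-averaging consequence of the hypothesis on $K_{\hat I_n}$. Part (a) is essentially Lemma~\ref{lemma_POSS}(b)(v) restated without explicit reference to the PO-SS realization, and its derivation can be reused.

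First I would prove part (a). By the chain rule,
\begin{align*}
H(V^n) = \sum_{t=1}^n H(V_t \mid V^{t-1}).
\end{align*}
For jointly Gaussian $V^n$, subtracting the conditional mean does not change conditional entropy, so
\begin{align*}
H(V_t \mid V^{t-1}) = H\!\big(V_t - \mathbf{E}\{V_t \mid V^{t-1}\} \,\big|\, V^{t-1}\big) = H(\hat I_t \mid V^{t-1}).
\end{align*}
Because $V^n$ is jointly Gaussian, the innovation $\hat I_t$ is uncorrelated with, hence independent of, $V^{t-1}$, so $H(\hat I_t \mid V^{t-1}) = H(\hat I_t)$. Finally, applying the standard Gaussian entropy formula to the scalar RV $\hat I_t \in N(0,K_{\hat I_t})$ yields $H(\hat I_t) = \tfrac{1}{2}\log(2\pi e K_{\hat I_t})$, which gives (\ref{entropy_0}). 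This reproduces the derivation already outlined inside the proof of Lemma~\ref{lemma_POSS}.

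For part (b), I would write
\begin{align*}
\frac{1}{n} H(V^n) = \frac{1}{n}\sum_{t=1}^n a_t, \qquad a_t \triangleq \tfrac{1}{2}\log\big(2\pi e K_{\hat I_t}\big),
\end{align*}
and invoke the Ces\`aro (or Stolz--Ces\`aro) theorem: if the sequence $\{a_t\}$ converges to a finite limit $a$, then its arithmetic means converge to the same limit. Under the hypothesis $\lim_{n\to\infty}K_{\hat I_n}=K_{\hat I}^\infty>0$, the map $x\mapsto \tfrac{1}{2}\log(2\pi e x)$ is continuous on $(0,\infty)$, so $a_t \to \tfrac{1}{2}\log(2\pi e K_{\hat I}^\infty)$ and therefore the Ces\`aro averages converge to the same value, giving (\ref{in_entr_new}).

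There is no real obstacle in the argument; the content of the lemma is that the entropy rate exists and has the simple closed form $\tfrac{1}{2}\log(2\pi eK_{\hat I}^\infty)$ whenever the innovations covariance converges, \emph{without} any stationarity or ergodicity assumption on $V^n$. The only technical caveat is the positivity requirement $K_{\hat I}^\infty>0$, which is needed to keep $\log$ continuous at the limit and to exclude the degenerate case $-\infty$; if $K_{\hat I_t}$ were allowed to vanish in the limit one would have to argue separately (e.g.\ show $\tfrac{1}{n}\sum_t \log K_{\hat I_t}\to -\infty$). This lemma then serves as the baseline against which the entropy rate of the output process $Y^n$, governed by the coupled generalized DREs $\Sigma_t,K_t$ of Theorem~\ref{thm_SS}, will be compared in the subsequent asymptotic analysis.
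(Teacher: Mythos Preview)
Your proposal is correct and matches the paper's own proof essentially line for line: part (a) is referred back to Lemma~\ref{lemma_POSS}(v) (whose derivation is exactly the chain-rule plus innovations-orthogonality argument you wrote out), and part (b) is obtained by the continuity of $\log(\cdot)$ together with the Ces\`aro-mean fact that $a_t\to a$ implies $\frac{1}{n}\sum_{t=1}^n a_t\to a$. There is nothing to add.
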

\begin{proof} 
See Appendix~\ref{sect:app_lem_entr_in}. 
\end{proof}

\ \

%

\begin{remark} Entropy rate of nonstationary Gaussian noise\\
By  Lemma~\ref{lem_entr_in}, a necessary condition for existence of  the entropy rate of nonstationary Gaussian process $V^n$ 
is the convergence of the covariance of the Gaussian orthogonal  innovations process of $V^n$, i.e., of  $K_{\hat{I}_t}\tri cov(\hat{I}_t, \hat{I}_t)$, since $\lim_{n \rightarrow \infty}\frac{1}{n} H(V^n)=\lim_{n \rightarrow \infty}\frac{1}{n}\sum_{t=1}^n H(\hat{I}_t)$.  We can determine such necessary and/or sufficient  conditions from   the convergence properties of the  {\it Generalized Kalman-filter equations} \cite{caines1988,kailath-sayed-hassibi} of   Lemma~\ref{lemma_POSS}.
\end{remark}

%


\subsection{Convergence Properties of Generalized Matrix DREs to AREs}
To address  the  asymptotic properties of estimation errors generated by the  recursions of  Generalized Kalman-filters, such as,  $\widehat{E}_t, t=1,2,\ldots$ of   Theorem~\ref{thm_SS}, generated by  (\ref{kf_m_3a}), we need to  introduce the stabilizing solutions of generalized AREs. The next definition is useful in this respect.\\

\begin{definition} Stabilizing solutions of  generalized matrix AREs \\
\label{def_gmare}
Let $(A, G, Q, S, R, C) \in  {\mathbb R}^{q \times q} \times {\mathbb R}^{q \times k} \times  {\mathbb R}^{k \times k} \times {\mb R}^{k \times p}\times {\mb R}^{p \times p}\times {\mb R}^{p \times q}$.\\
Define the generalized time-invariant matrix DRE 
\begin{align}
&P_{t+1}= A P_t A^T  + G QG^T -\Big(A P_t C^T+GS  \Big) \Big(R+C  P_t C^T\Big)^{-1} . \Big( AP_tC^T+ GS  \Big)^T, \hso P_1=\mbox{given}, \label{gdre_g} \\
&P_t \in {\mathbb S}_+^{q \times q},\: t=1, \ldots, \hso R=R^T \succ 0, \nonumber\\
&F^{CL}(P) \tri A- \Big( A P C^T+GQG^T\Big)\Big(R+ C P C^T \Big)^{-1} C. \nonumber
\end{align}
Define also the corresponding  generalized matrix ARE 
\begin{align}
P= &A PA^T  + G QG^T -\Big(A P C^T+GS  \Big) \Big(R+C  P C^T\Big)^{-1} . \Big( APC^T+ GS  \Big)^T, \hso P \in {\mathbb S}_+^{q \times q}. \label{gare_g}
\end{align}
A solution $P=P^T\succeq  0$ to the generalized matrix ARE (\ref{gare_g}), assuming it exists, is called stabilizing if $spec\big(F^{CL}(P)\big)\in {\mathbb D}_o$. In this case, we  say $F^{CL}(P)$ is asymptotically stable, that is, the eigevalues of $F^{CL}(P)$ are stable.
\end{definition}

With respect to any of the above  generalized matrix  DRE and ARE, we  introduce the important notions of detectability, unit circle controllability, and stabilizability. We use these notions to characterize the convergence properties of solutions of  generalized matrix DREs, $P_n$,  as $n\longrightarrow \infty$,  to a unique symmetric, nonnegative, stabilizing solution $P$ of the generalized matrix ARE. These notions are used to identify necessary and/or sufficient conditions for  the error recursions of  generalized Kalman-filters, such as,  $\widehat{E}_t, t=1,2,\ldots$ of   Theorem~\ref{thm_SS}, generated by  (\ref{kf_m_3a}), to converge in mean-square sense,  to a unique limit.   \\

\begin{definition} Detectability, Stabilizability, Unit Circle controllability \\
\label{def:det-stab}
Consider the generalized matrix ARE of Definition~\ref{def_gmare}, and introduce the matrices
\begin{align}
A^*\tri A- GS R^{-1} C, \hst B^* \tri Q- SR^{-1} S^T, \hst B^*= B^{*, \frac{1}{2}} \big(B^{*, \frac{1}{2}}\big)^T.
\end{align}
(a) The pair  $\big\{A,C\big\}$ is called detectable if there exists a matrix $K \in {\mathbb R}^{q\times p}$ such that  $spec\big(A- K C\big) \in {\mathbb D}_o$, i.e., the eigenvalues $\lambda$ of $A-K C$ lie in ${\mb D}_o$  (stable).\\
(b) The pair $ \big\{A^*, G B^{*,\frac{1}{2}}\big\}$ is called unit circle controllable if  there exists a $K \in {\mathbb R}^{k\times q }$ such that   $spec\big(A^*- G B^{*,\frac{1}{2}}K \big) \notin \{c \in {\mathbb C}: |c|=1\}$, i.e., all eigenvalues $\lambda$ of $A^*- G B^{*,\frac{1}{2}}K$ are  such that $|\lambda|\neq 1$.\\
(c) The pair $ \big\{A^*, G B^{*,\frac{1}{2}}\big\}$ is called stabilizable if  there exists a $K \in {\mathbb R}^{k\times q}$ such that  $spec\big(A^*- G B^{*,\frac{1}{2}}K \big)  \in {\mathbb D}_o$, i.e., all all eigenvalues $\lambda$ of $A^*- G B^{*,\frac{1}{2}}K$ lie in ${\mb D}_o$.\\
(d) The pair  $\big\{A,C\big\}$ is called observable if the rank condition holds, 
\bea
\mbox{rank}\big({\cal O}\big)=q, \hso {\cal O}\tri \left[ \begin{array}{c} C\\ CA \\ \vdots \\ CA^{q-1}\end{array} \right] .
\eea
(e) The pair $ \big\{A^*, G B^{*,\frac{1}{2}}\big\}$ is called controllable if the rank  condition holds, 
\bea
\mbox{rank}\big({\cal C}\big)=q, \hso {\cal O}\tri \left[ \begin{array}{cccc} G B^{*,\frac{1}{2}} & A^* G B^{*,\frac{1}{2}} & \ldots & \big(A^*\big)^{q-1}G B^{*,\frac{1}{2}} \end{array} \right].
\eea
\end{definition}

\begin{remark} The following are 
 well-known \cite{caines1988}. If  the pair $ \big\{A,C\big\}$  is observable  then it is stabilizable, and if the pair $ \big\{A^*, G B^{*,\frac{1}{2}}\big\}$  is controllable then it is stabilizable. 
\end{remark}

The next theorem characterizes, detectability,   unit circle controllability, and stabilizability \cite{kailath-sayed-hassibi,vanschuppen2010}.\\
 
\begin{lemma}\cite{kailath-sayed-hassibi,vanschuppen2010}  Necessary and sufficient conditions for detectability, unit circle controllability, stabilizability\\
(a) The pair $\big\{A,C\big\}$ is detectable  if and only if  there exists no eigenvalue and eigenvector $\{\lambda,x\}$, $Ax=\lambda x$, such that   $|\lambda|\geq 1$ and such that    $Cx =0$. \\
(b) The pair $\big\{A^*, GB^{*,\frac{1}{2}}\big\}$ is unit circle controllable if and only if 
there exists no eigenvalue and eigenvector $\{\lambda,x\}$, $x^T \big(A^*\big)^T=x^T\lambda$, such that   $|\lambda|= 1$ and such that $x^T G B^{*,\frac{1}{2}} =0$. \\
(c) The pair $\big\{A^*, G B^{*,\frac{1}{2}}\big\}$ is stabilizable  if and only there exists no eigenvalue and eigenvector $\{\lambda,x\}$, $x^T \big(A^*\big)^T=x^T\lambda$, such that   $|\lambda|\geq 1$ and such that $x^T G B^{*,\frac{1}{2}} =0$.
\end{lemma}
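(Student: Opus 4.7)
The three statements are the classical Popov--Belevitch--Hautus (PBH) rank tests, so the plan is to argue (a) directly from the definition of detectability via the Kalman observability decomposition, and then obtain (b), (c) by duality (transposition) applied to the pair $\{(A^*)^T, (GB^{*,1/2})^T\}$.

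For part (a), the \emph{only if} direction is immediate: if $\{A,C\}$ is detectable there exists $K$ with $\mathrm{spec}(A-KC)\subset \mathbb{D}_o$; but any candidate eigenpair $(\lambda, x)$ satisfying $Ax=\lambda x$ and $Cx=0$ would also satisfy $(A-KC)x=\lambda x$, forcing $|\lambda|<1$. Hence no unstable unobservable eigenvector can exist. For the \emph{if} direction, I would invoke the Kalman observability decomposition: there exists a nonsingular $T$ such that
\begin{equation*}
T^{-1}AT=\begin{bmatrix} A_{o} & 0 \\ A_{21} & A_{\bar o}\end{bmatrix},\qquad CT=\begin{bmatrix} C_{o} & 0\end{bmatrix},
\end{equation*}
where $\{A_{o},C_{o}\}$ is observable and the block $A_{\bar o}$ carries the unobservable modes. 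The PBH hypothesis---no eigenvector $x$ with $Ax=\lambda x$, $|\lambda|\geq 1$, $Cx=0$---forces $\mathrm{spec}(A_{\bar o})\subset \mathbb{D}_o$, since any unobservable eigenvalue of $A$ lifts to an eigenvalue of $A_{\bar o}$ whose eigenvector lies in the unobservable subspace. Now $\{A_{o},C_{o}\}$ being observable is dual to controllability, and pole placement gives $K_{o}$ with $\mathrm{spec}(A_{o}-K_{o}C_{o})\subset \mathbb{D}_o$. Setting $K=T\,[\,K_o^T\;\;0\,]^T$ in the decomposed coordinates yields a block-lower-triangular $A-KC$ whose spectrum is the union of the stable sets $\mathrm{spec}(A_{o}-K_{o}C_{o})$ and $\mathrm{spec}(A_{\bar o})$, proving detectability.

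For parts (b) and (c), the key observation is the duality identity: $\{A^*, GB^{*,1/2}\}$ is stabilizable (respectively, unit circle controllable) if and only if $\{(A^*)^T,(GB^{*,1/2})^T\}$ is detectable (respectively, has no eigenvalue on the unit circle that is unobservable). Writing $F:=A^*-GB^{*,1/2}K$, we have $\mathrm{spec}(F)=\mathrm{spec}(F^T)$ and $F^T=(A^*)^T-K^T(GB^{*,1/2})^T$, so the PBH test from part (a) applied to the transposed pair transforms the right-eigenvector condition $Ax=\lambda x$, $Cx=0$ into the left-eigenvector condition $x^T(A^*)^T=\lambda x^T$, $x^T GB^{*,1/2}=0$. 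Part (c) is then exactly the transposed version of part (a). Part (b) is obtained by replacing ``$|\lambda|\geq 1$'' with ``$|\lambda|=1$'' throughout: one direction follows from the existence of $K$ moving unit-circle modes off the circle, and the converse follows by the same decomposition argument restricted to the modes on $\{|\lambda|=1\}$.

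The only genuinely technical step is the observability decomposition together with the pole-placement assertion on $\{A_o,C_o\}$; everything else is a linear-algebra manipulation of eigenvalue equations. The main obstacle---conceptually---is keeping straight that right eigenvectors appear in part (a) while left eigenvectors appear in (b), (c), which is resolved uniformly by taking transposes and applying (a) once.
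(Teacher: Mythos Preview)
Your proof is correct and follows the standard PBH/Kalman-decomposition route. Note, however, that the paper does not actually prove this lemma: it is stated with a citation to \cite{kailath-sayed-hassibi,vanschuppen2010} and no proof is given, so there is no ``paper's own proof'' to compare against. Your argument is precisely the textbook derivation one finds in those references, and the observability decomposition you wrote down (with the unobservable block $A_{\bar o}$ in the lower-right and zero in the upper-right) is the correct form since the unobservable subspace is $A$-invariant.
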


In the next theorem we summarize known results on sufficient and/or necessary conditions for convergence of solutions $\{P_t, t=1, 2, \ldots, n\}$ of the generalized time-invariant DRE (\ref{gdre_g}), as $n \longrightarrow \infty$,  to a symmetric, nonnegative  $P\succeq 0$,  which is the unique stabilizing solution of a corresponding generalized ARE (\ref{gare_g}).    \\

\begin{theorem}\cite{kailath-sayed-hassibi,caines1988} Convergence of time-invariant generalized DRE\\
\label{thm_ric}
Let  $\{P_t, t=1, 2, \ldots, n\}$ denote a sequence that satisfies the time-invariant  generalized DRE (\ref{gdre_g}) with arbitrary initial condition $P_1\geq 0$. The following hold.\\
(1) Consider the generalized DRE (\ref{gdre_g})  with zero initial condition, i.e., $P_{1}=0$, and assume,  the pair $\big\{A,C\big\}$ is detectable, and  the pair $\big\{A^*, G B^{*,\frac{1}{2}}\big\}$ is unit circle controllable.\\
Then the  sequence $\{P_{t}: t=1, 2, \ldots, n\}$ that satisfies the generalized DRE (\ref{gdre_g}),  with zero initial condition $P_{1}=0$,  converges to $P$, i.e., $\lim_{n \longrightarrow \infty} P_{n} =P$, where  $P$ satisfies the generalized matrix ARE (\ref{gare_g})
 if and only if the pair $\big\{A^*, G B^{*,\frac{1}{2}}\big\}$ is stabilizable.\\
(2) Assume,  the pair $\big\{A,C\big\}$ is detectable, and  the pair $\big\{A^*, B^{*,\frac{1}{2}}\big\}$ is unit circle controllable.  Then there exists a unique stabilizing solution $P\succeq 0$ to the generalized ARE (\ref{gare_g}), i.e.,  such that,  $spec\big(F^{CL}(P)\big) \in {\mathbb D}_o$, if and only if  $\{A^*, G B^{*,\frac{1}{2}}\}$ is stabilizable.\\
(3) If $\{A, C\}$ is detectable and $\{A^*, G B^{*,\frac{1}{2}}\}$ is stabilizable,  then any solution $P_{t}, t=1, 2, \ldots,n$ to the generalized matrix DRE (\ref{gdre_g})  with arbitrary  initial condition, $P_{1}\succeq 0$ is such that $\lim_{n \longrightarrow \infty} P_{n} =P$, where $P\succeq  0$ is the  unique solution of  the generalized matrix ARE (\ref{gare_g}) with $spec\big(F^{CL}(P)\big) \in {\mathbb D}_o$  i.e., it is stabilizing.
\end{theorem}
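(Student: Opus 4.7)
The plan is to reduce the generalized matrix DRE (\ref{gdre_g}) and ARE (\ref{gare_g}) with cross term $S$ to the standard Kalman-filter form without cross term, and then invoke the classical monotonicity and convergence theory on the reduced system. First I would apply the change of variables from Definition~\ref{def:det-stab}, namely $A^* = A - GSR^{-1}C$ and $B^* = Q - SR^{-1}S^T$, to rewrite (\ref{gdre_g}) equivalently as the cross-term-free Riccati recursion
\begin{equation}
P_{t+1} = A^* P_t (A^*)^T + G B^* G^T - A^* P_t C^T \big(R + C P_t C^T\big)^{-1} C P_t (A^*)^T,
\end{equation}
with steady-state closed-loop operator $F^{CL}(P) = A^* - A^* P C^T(R + CPC^T)^{-1}C$. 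Because $A$ and $A^*$ differ only by an output injection, detectability of $\{A,C\}$ is equivalent to detectability of $\{A^*,C\}$; the reduced problem then carries the standard Kalman-filtering hypotheses.

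For parts (1) and (2), starting from $P_1 = 0$, a standard induction exploiting monotonicity of the reduced Riccati map on the cone $\mathbb{S}_+^{q\times q}$ shows $P_{t+1} \succeq P_t$ for all $t$, since $P_2 \succeq P_1 = 0$. Detectability of $\{A^*,C\}$ furnishes a constant-gain suboptimal filter whose error covariance is bounded by the solution of a Stein equation, giving a uniform upper bound on $\{P_t\}$. Monotonicity plus boundedness yield $P_t \to P_\infty$, with $P_\infty$ a symmetric nonnegative fixed point of the ARE. To characterize $\mathrm{spec}(F^{CL}(P_\infty))$, unit circle controllability excludes eigenvalues of modulus one, and stabilizability of $\{A^*, GB^{*,1/2}\}$ rules out unstable eigenvalues through the PBH criterion; conversely, an uncontrollable unstable mode of $\{A^*, GB^{*,1/2}\}$ would persist in $F^{CL}(P_\infty)$ regardless of the Riccati gain, so stabilizability is also necessary. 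Uniqueness in part (2) follows from a Stein-equation comparison: the difference $\Delta = P - P'$ of two stabilizing solutions satisfies a Stein equation with both closed-loop operators stable, forcing $\Delta = 0$.

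For part (3), with arbitrary initial condition $P_1 \succeq 0$, I would compare $\{P_t\}$ to the stabilizing solution $P$ produced in parts (1)--(2). The difference $\Delta_t \triangleq P_t - P$ satisfies an inhomogeneous linear recursion whose asymptotic dynamics are governed by $F^{CL}(P)$ with $\mathrm{spec}(F^{CL}(P)) \subset \mathbb{D}_o$. Combined with boundedness of $\{P_t\}$ (from detectability) and a standard contraction argument in the weighted norm induced by the unique $Q \succ 0$ solving $Q - F^{CL}(P) Q (F^{CL}(P))^T \succeq I$, this yields $\Delta_t \to 0$, i.e.\ $P_t \to P$.

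The main obstacle I anticipate is verifying that stabilizability of $\{A^*, GB^{*,1/2}\}$ is \emph{necessary} (not merely sufficient) for existence of a stabilizing ARE solution under the detectability and unit-circle-controllability hypotheses. The argument proceeds by contradiction using the PBH test: a failure of stabilizability produces a left eigenvector $x$ of $A^*$ with $|\lambda| \geq 1$ satisfying $x^T GB^{*,1/2} = 0$, and one shows this $x$ is also a left eigenvector of every $F^{CL}(P)$ with eigenvalue $\lambda$, obstructing stability. A secondary delicate point is that the monotonicity argument, although classical, requires care when $GB^*G^T$ is singular, in which case the uniform upper bound on $\{P_t\}$ rests crucially on the detectability hypothesis.
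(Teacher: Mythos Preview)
The paper does not supply its own proof of this theorem: the result is quoted directly from the cited references \cite{kailath-sayed-hassibi,caines1988} as a known tool, with no argument given in the body or appendices. So there is no in-paper proof to compare against.

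That said, your outline is exactly the classical route taken in those references. The reduction via $A^* = A - GSR^{-1}C$, $B^* = Q - SR^{-1}S^T$ to a cross-term-free Riccati recursion is the standard first move; the observation that $\{A,C\}$ and $\{A^*,C\}$ share detectability because they differ by output injection is correct; and the monotonicity-from-zero plus detectability-bounded arguments for parts (1)--(2), together with the PBH-based necessity direction and the Stein-equation uniqueness, are precisely what one finds in Caines or Kailath--Sayed--Hassibi. Your treatment of part (3) via the closed-loop-stable difference recursion is also the textbook argument. The two points you flag as delicate (necessity of stabilizability, and handling singular $GB^*G^T$) are real but well-handled by the PBH contradiction you sketch and by the fact that the detectability-based upper bound does not require any rank condition on the process noise. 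In short: your proposal is correct and matches the standard proofs in the cited sources; there is nothing further in the paper itself to reconcile with.
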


\subsection{Feedback Rates} 
Now, we return to the feedback rates of Definition~\ref{def_limit_1}. The  next corollary is an application of Theorem~\ref{thm_ric} to the generalized Kalman-filter of Lemma~\ref{lemma_POSS} (for the time-invariant PO-SS realization);  it identifies conditions for existence of the entropy rate $H_R(V^\infty)$, irrespectively of whether the noise is stable or unstable.  \\

\begin{corollary} The entropy rate of PO-SS noise realization based on the  generalized Kalman-filter\\
\label{cor_POSS_IH}
Let  $\Sigma_t^o=\Sigma_t, t=1, 2, \ldots $ denote the solution of the generalized matrix DRE  (\ref{dre_1}) of the generalized Kalman-filter of Lemma~\ref{lemma_POSS}  of the  time-invariant PO-SS realization of $V^n$ of  Definition~\ref{def_nr_2},  i.e.,  $(A_t, B_t, C_t, N_t, K_{W_t})=(A, B, C, N, K_{W}), \forall t$, generated by 
\begin{align}
\Sigma_{t+1}^o= &A \Sigma_{t}^oA^T  + B K_{W}B^T -\Big(A  \Sigma_{t}^oC^T+BK_{W}N^T  \Big) \Big(N K_{W} N^T+C  \Sigma_{t}^o C^T\Big)^{-1}\nonumber \\
& \hst . \Big( A  \Sigma_{t}^oC^T+ B K_{W}N^T  \Big)^T, \hso \Sigma_t^o \succeq 0,    \hso t=1, \ldots, n, \hso \Sigma_{1}^o=K_{S_1}\succeq 0. \label{dre_1_TI}\\
M^{CL}(\Sigma^o)\tri & A- M(\Sigma^o)C, \hso M(\Sigma^o) \tri \Big( A  \Sigma^o C^T+B_{t} K_{W}N^T\Big)\Big(N K_{W}N^T+ C \Sigma^o C^T \Big)^{-1}. \label{dre_1_TI_a}
\end{align}
Let $\Sigma^\infty =\Sigma^{\infty,T} \succeq 0
$ be a solution of the corresponding generalized ARE  
\begin{align}
\Sigma^\infty= &A \Sigma^\infty A^T  + B K_{W}B^T -\Big(A \Sigma^\infty C^T+B K_{W}N^T  \Big) \Big(N K_{W} N^T+C  \Sigma^\infty C^T\Big)^{-1} \Big( A  \Sigma^\infty C^T+ B K_{W}N^T  \Big)^T.  \label{dre_1_SS}
\end{align}
Define the matrices
\begin{align}
&GQG^T \tri B K_W B^T, \hso G S\tri B K_W N^T,  \hso  R\tri N K_W N^T \hso \Longrightarrow  \hso G \tri B, \hso Q\tri K_W, \hso  S\tri K_W N^T, \label{map_1}\\
&A^*\tri A- B K_W N^T \big(N K_W N^T\big)^{-1} C, \hst B^* \tri K_W- K_W N^T \Big(N K_W N^T\Big)^{-1} \Big(K_W N^T\Big)^T. \label{noise_st}
\end{align}
(a) All statements of Theorem~\ref{thm_ric} hold with $(G, Q,S, R)$ as defined by (\ref{map_1}), (\ref{noise_st}).\\
In particular, suppose  \\
(i) $\{A, C\}$ is detectable,  and\\
(ii) $\{A^*, G B^{*,\frac{1}{2}}\}$ is stabilizable.\\
Then any solution $\Sigma_{t}^o, t=1, 2, \ldots,n$ to the generalized matrix DRE (\ref{dre_1_TI})  with arbitrary  initial condition, $\Sigma_{1}^o\succeq 0$ is such that $\lim_{n \longrightarrow \infty} \Sigma_{n}^o =\Sigma^\infty$, where $\Sigma^\infty\succeq  0$ is the  unique solution of  the generalized matrix ARE (\ref{dre_1_SS}) with $spec\big(M^{CL}(\Sigma^\infty)\big) \in {\mathbb D}_o$  i.e., it is stabilizing. \\
(b) The entropy  rate  of $V^n$ is given by 
\begin{align}
H_R(V^\infty)=&\lim_{n\longrightarrow \infty}\frac{1}{2n}\sum_{t=1}^n\log\Big(2\pi e \Big[C \Sigma_t^o C^T + N K_{W} N^T\Big] \Big)\\
=&  H(\hat{I}_t^\infty)\tri \frac{1}{2}\log\Big(2\pi e \Big[C \Sigma^\infty C^T + N K_{W} N^T\Big] \Big), \hso  \forall \Sigma_1^o \succeq 0, \hso    \forall t  \label{ti_inno_11}
\end{align}
where 
\begin{align}
\hat{I}_t^\infty \tri C\big(S_t-\hat{S}_t^\infty\big)+ N W_t \in N(0, C \Sigma^\infty C^T + N K_{W} N^T),\hso  t=1,2, \ldots,\label{ti_inno_1}
\end{align}
 is the stationary Gaussian innovations process, i.e.,  with  $\Sigma_t^o$ replaced by $\Sigma^\infty$, and the entropy rate $H_R(V^\infty)$ is  
   independent of the initial data $\Sigma_1^o\succeq 0$. 
\end{corollary}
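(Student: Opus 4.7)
The plan is to dispatch part (a) by pure pattern matching against Theorem~\ref{thm_ric}, and then to derive part (b) from Lemma~\ref{lemma_POSS} by a Ces\`aro-type passage to the limit using the convergence established in (a).

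For part (a), I would first verify that the generalized DRE (\ref{dre_1_TI}) is literally an instance of the canonical DRE (\ref{gdre_g}) in Definition~\ref{def_gmare} under the correspondence $G=B$, $Q=K_W$, $S=K_W N^T$, $R=N K_W N^T$: indeed $GQG^T=BK_W B^T$, $GS=BK_W N^T$, and $R\succ 0$ by the assumption $R_t\tri N_tK_{W_t}N_t^T\succ 0$ inherited from Definition~\ref{def_nr_2}. With this identification, the matrices $A^*=A-BK_W N^T(NK_W N^T)^{-1}C$ and $B^*=K_W - K_W N^T(NK_W N^T)^{-1}(K_W N^T)^T$ in (\ref{noise_st}) coincide with those in Definition~\ref{def:det-stab}, and $M^{CL}(\Sigma^o)=A-M(\Sigma^o)C$ in (\ref{dre_1_TI_a}) plays the role of $F^{CL}(P)$ in Definition~\ref{def_gmare}. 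Theorem~\ref{thm_ric} then applies verbatim; in particular, under detectability of $\{A,C\}$ together with stabilizability of $\{A^*,GB^{*,1/2}\}$, part (3) of Theorem~\ref{thm_ric} yields the existence of a unique symmetric $\Sigma^\infty\succeq 0$ solving the ARE (\ref{dre_1_SS}) with $spec\big(M^{CL}(\Sigma^\infty)\big)\in {\mathbb D}_o$, and $\lim_{n\to\infty}\Sigma_n^o=\Sigma^\infty$ for every initial condition $\Sigma_1^o\succeq 0$.

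For part (b), I would invoke Lemma~\ref{lemma_POSS}(b)(v), which in the time-invariant case reads
\begin{align*}
H(V^n)=\sum_{t=1}^n H(\hat I_t)=\frac{1}{2}\sum_{t=1}^n \log\Big(2\pi e\big[C\Sigma_t^o C^T + N K_W N^T\big]\Big).
\end{align*}
Dividing by $n$, the entropy rate is the Ces\`aro mean of the sequence $a_t\tri \tfrac{1}{2}\log(2\pi e[C\Sigma_t^o C^T+NK_W N^T])$. By part (a), $\Sigma_t^o\to \Sigma^\infty$; since $NK_W N^T\succ 0$ the map $\Sigma\mapsto \log\det(C\Sigma C^T+NK_W N^T)$ is continuous at $\Sigma^\infty$, so $a_t$ converges to $a_\infty\tri \tfrac{1}{2}\log(2\pi e[C\Sigma^\infty C^T+NK_W N^T])$. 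A standard Ces\`aro lemma then gives $\tfrac{1}{n}\sum_{t=1}^n a_t\to a_\infty$, which is precisely (\ref{ti_inno_11}). The interpretation in (\ref{ti_inno_1}) as the entropy of the asymptotically stationary innovations process $\hat I_t^\infty=C(S_t-\hat S_t^\infty)+N W_t$ follows from substituting $\Sigma^\infty$ into the covariance expression (\ref{cov_in_noise}) and noting that the closed-loop generator $M^{CL}(\Sigma^\infty)$ is stable, so the error recursion (\ref{dre_1_a}) has a unique stationary Gaussian solution with covariance $\Sigma^\infty$, rendering $\hat I_t^\infty$ stationary with covariance $C\Sigma^\infty C^T+NK_W N^T$; in particular the rate does not depend on $\Sigma_1^o$.

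The only non-trivial point is justifying the Ces\`aro argument rigorously, but this is easy because $NK_W N^T\succ 0$ uniformly bounds $C\Sigma_t^o C^T+NK_W N^T$ away from singular matrices, and the convergence of $\Sigma_t^o$ to $\Sigma^\infty$ transfers to convergence of $a_t$ by continuity of $\log\det$ on the open cone of positive-definite matrices. The main conceptual obstacle is really conceptual rather than technical: recognizing that the formula (\ref{ti_inno_11}) is independent of whether $A$ is stable or unstable, because detectability/stabilizability in the Kalman-filter sense is weaker than asymptotic stationarity of the underlying process $V^n$, which is the point emphasized throughout the paper.
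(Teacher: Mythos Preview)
Your proposal is correct and follows essentially the same approach as the paper: part (a) is a direct application of Theorem~\ref{thm_ric} under the identification $(G,Q,S,R)=(B,K_W,K_WN^T,NK_WN^T)$, and part (b) is the Ces\`aro-mean argument via continuity of $\log$, which is exactly the content of Lemma~\ref{lem_entr_in} that the paper cites. The only difference is that you spell out explicitly the Ces\`aro passage and the continuity argument that the paper packages into Lemma~\ref{lem_entr_in}.
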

\begin{proof}  This is a direct application of Theorem~\ref{thm_ric}. The last part follows from Lemma~\ref{lem_entr_in}.
\end{proof}

Next we apply Corollary~\ref{cor_POSS_IH} to the nonstationary AR$(a,c), a \in (-\infty,\infty),c \in (-\infty,\infty)$ noise.  \\

\begin{lemma}  Properties of solutions of DREs and AREs of AR$(a,c), a \in (-\infty,\infty),c \in (-\infty,\infty)$ noise and entropy rate $H_R(V^\infty)$  \\
\label{lem_pr_are_ar}
Consider the AR$(a,c), a \in (-\infty,\infty),c \in (-\infty,\infty)$ noise of Example~\ref{ex_1_1_n}.(a), and 
the DRE $\Sigma_t^o\tri \Sigma_t, t=1, \ldots, n$,  generated by Corollary~\ref{cor_ex_1}.(a), i.e., 
\begin{align}
&\Sigma_{t+1}^o= \big(c\big)^2 \Sigma_{t}^o  + K_{W} -\Big(c  \Sigma_{t}^o\big(c-a\big)+K_{W}  \Big)^2 \Big(K_{W} +\big(c-a\big)^2  \Sigma_{t}^o \Big)^{-1}, \hso t=1, \ldots, n, \label{ar_ac,DRE} \\
&\Sigma_{1}^o=K_{S_1}=\frac{\big(c_0\big)^2 K_{S_0} +\big(a_0\big)^2 K_{W_0}}{\Big(c_0-a_0\Big)^2}\geq 0. \label{ar_ac,DRE_a}
\end{align}
where $K_W>0, c\neq a$, $K_{S_0}\geq 0, K_{W_0}\geq 0$.  Let $\Sigma^\infty\geq 0$ be a solution of the corresponding  generalized ARE 
\begin{align}
&\Sigma^\infty= \big(c\big)^2 \Sigma^\infty  + K_{W} -\Big(c  \Sigma^\infty\big(c-a\big)+K_{W}  \Big)^2 \Big(K_{W} +\big(c-a\big)^2  \Sigma^\infty \Big)^{-1}. \label{ar_ac_ARE}
\end{align}
Then the  detectability and stabilizability pairs are 
\begin{align}
\{A, C\}=\{c, c-a\}, \hst \{A^*, G B^{*,\frac{1}{2}}\}=\{a, 0\}.   \label{ar_ac,DRE_c}
\end{align}
and the following hold.\\
(1)  The pair $\{A, C\}=\{c, c-a\}$ is  detectable $\forall c \in (-\infty,\infty), a\in (-\infty,\infty)$ (the restriction  $c\neq a$ is always assumed).   \\
(2) The pair $\{A^*, G B^{*,\frac{1}{2}}\}=\{a, 0\}$ is unit circle controllable if and only if $|a|\neq 1$ ($\forall c\in (-\infty,\infty)$). \\
(3) The pair $\{A^*, G B^{*,\frac{1}{2}}\}=\{a, 0\}$ is stabilizable  if and only if $a \in (-1,1)$ ($\forall c\in (-\infty,\infty)$).\\
(4) Suppose $c\in (-\infty,\infty)$ and  $|a|\neq 1$. The sequence    $\{\Sigma_t^o, t=1, 2, \ldots, n\}$ that satisfies  the generalized DRE with zero initial condition, $\Sigma_1^o=0$ 
converges to $\Sigma^\infty$, i.e., $\lim_{n \longrightarrow \infty} \Sigma_n^o=\Sigma^\infty$, where $\Sigma^\infty\geq 0$ satisfies the ARE (\ref{ar_ac_ARE}) 
 if and only if the $\{A^*, GB^{*,\frac{1}{2}}\}=\{a,0\}$ is stabilizable, equivalently, $|a|<1$. Moreover, 
 the  two solutions of the quadratic equation (\ref{ar_ac_ARE}), without imposing $\Sigma^\infty \geq 0$ are 
\begin{align}
\Sigma^\infty= \left\{   \begin{array}{ll}  0 & \mbox{the unique, stabilizing, $\Sigma^\infty \geq 0$  solution of (\ref{ar_ac_ARE}) for $c\in (-\infty,\infty), |a|<1$} \\
\frac{K_W\big(a^2-1\big)}{\big(c-a\big)^2}<0 & \mbox{the non-stabilizing, $\Sigma^\infty<0$ solution of (\ref{ar_ac_ARE}) for $c\in (-\infty,\infty), |a|<1$.}
\end{array} \right.\label{are_2_sol}
\end{align}
That is,  $\lim_{n\longrightarrow \infty}\Sigma_n^0=\Sigma^\infty=0$ is the  unique and stabilizing solution $\Sigma^\infty\geq 0$ of  (\ref{ar_ac_ARE}), i.e., such that $|M^{CL}(\Sigma^\infty)|<1$, if and only if $|a|<1$.\\
(5) Suppose $c\in (-\infty,\infty)$ and $|a|<1$.   Then any solution $\Sigma_{t}^o, t=1, 2, \ldots,n$ to the generalized  DRE (\ref{ar_ac,DRE})  with arbitrary  initial condition, $\Sigma_{1}^o\geq  0$ is such that $\lim_{n \longrightarrow \infty} \Sigma_{n}^o =\Sigma^\infty$, where $\Sigma^\infty \geq  0$ is the  unique solution of  the generalized  ARE (\ref{ar_ac_ARE}) with $M^{CL}(\Sigma^\infty)\in (-1,1)$  i.e., it is stabilizing, and moreover $\Sigma^\infty=0$.    \\
(6) Suppose $c\in (-\infty,\infty)$ and $|a|<1$.  The entropy rate of $V_t, \forall t\in {\mathbb Z}_+$, is given by 
\begin{align}
H_R(V^\infty) =\lim_{n \longrightarrow \infty}\frac{1}{n} \sum_{t=1}^n \frac{1}{2}\log \Big(2\pi e\big[(c-a)^2 \Sigma_t^o+K_W\big] \Big)=   \frac{1}{2}\log \big(2\pi e K_W \big), \hso \forall \Sigma_1^o\geq 0. \label{en_r_ar_ac}
\end{align} 
 \end{lemma}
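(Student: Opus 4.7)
My plan is to treat parts (1)--(6) as specializations of the general detectability/stabilizability framework in Theorem~\ref{thm_ric}, applied to the scalar data of the AR$(a,c)$ realization. I will first verify the identification (\ref{ar_ac,DRE_c}) by substituting $A=c$, $B=1$, $C=c-a$, $N=1$ (via the mapping (\ref{trans})) into (\ref{map_1})--(\ref{noise_st}), obtaining $A^*=c-K_W(K_W)^{-1}(c-a)=a$ and $B^*=K_W-K_W(K_W)^{-1}K_W=0$, so that $GB^{*,1/2}=0$ as claimed.

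Parts (1)--(3) will then be immediate from the PBH-type criteria stated just before Theorem~\ref{thm_ric}. Detectability of $\{c,c-a\}$ holds unconditionally, since the eigenvector equation $Ax=\lambda x$ has only $\lambda=c$, and the obstruction $Cx=(c-a)x$ vanishes only if $x=0$, which is ruled out by the standing hypothesis $c\neq a$. For both unit circle controllability and stabilizability of $\{a,0\}$, the condition $x^T(GB^{*,1/2})=x^T\cdot 0=0$ is trivially satisfied for every $x$, so the only obstruction is the location of the unique eigenvalue $a$ of $A^*$: unit circle controllability fails precisely when $|a|=1$, and stabilizability fails precisely when $|a|\geq 1$.

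For part (4), I will invoke Theorem~\ref{thm_ric}.(1): with detectability and unit circle controllability in hand, convergence of the DRE $\Sigma_t^o$ (from $\Sigma_1^o=0$) to the stabilizing $\Sigma^\infty\geq 0$ of (\ref{ar_ac_ARE}) is equivalent to stabilizability of $\{a,0\}$, i.e.\ to $|a|<1$ by (3). To list both ARE roots, I will clear the denominator in (\ref{ar_ac_ARE}) by multiplying through by $K_W+(c-a)^2 u$ (with $u=\Sigma^\infty$), expand $\big(c(c-a)u+K_W\big)^2$, and collect terms; the cross-terms in $K_W u$ collapse via the identity $c^2+(c-a)^2-2c(c-a)=[c-(c-a)]^2=a^2$, leaving the scalar factorization $u\big[(c-a)^2 u-K_W(a^2-1)\big]=0$, which gives the two roots in (\ref{are_2_sol}). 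The nonnegative/stabilizing classification is then finished by a direct check: at $u=0$ one computes $M(0)=1$ and $M^{CL}(0)=c-(c-a)=a$, so the root $u=0$ is always admissible ($\Sigma^\infty\geq 0$) and is stabilizing precisely when $|a|<1$, while the alternate root $K_W(a^2-1)/(c-a)^2$ is strictly negative for $|a|<1$ and hence inadmissible.

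Part (5) will follow from Theorem~\ref{thm_ric}.(3): under $|a|<1$ both detectability and stabilizability hold, so the unique nonnegative stabilizing solution $\Sigma^\infty=0$ identified in (4) attracts $\Sigma_t^o$ for any $\Sigma_1^o\geq 0$. Part (6) is then a one-line evaluation of the limiting entropy formula in Corollary~\ref{cor_POSS_IH}.(b): with $\Sigma^\infty=0$, $C=c-a$, $N=1$, the stationary innovations variance reduces to $(c-a)^2\cdot 0+K_W=K_W$, yielding $H_R(V^\infty)=\tfrac12\log(2\pi e K_W)$, independent of $\Sigma_1^o\geq 0$. The only mildly non-routine step in the whole argument is the algebraic manipulation in (4) that exhibits the factored form of the scalar ARE, and I expect that to be the main (modest) obstacle; everything else is direct verification from the definitions and the general theorems invoked.
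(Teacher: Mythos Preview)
Your proposal is correct and follows essentially the same approach as the paper's own proof: identify the scalar data $(A,C,A^*,GB^{*,1/2})=(c,c-a,a,0)$ via the mapping in Corollary~\ref{cor_POSS_IH}, read off parts (1)--(3) from the PBH-type criteria in Definition~\ref{def:det-stab} and the lemma preceding Theorem~\ref{thm_ric}, invoke Theorem~\ref{thm_ric}.(1) and (3) for parts (4) and (5), and specialize the entropy-rate formula for part (6). The only difference is cosmetic: you spell out the algebraic factorization of the scalar ARE in (4) and cite Corollary~\ref{cor_POSS_IH}.(b) for (6), whereas the paper simply asserts the quadratic has the two stated roots and cites Lemma~\ref{lem_entr_in}.(b) directly.
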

\begin{proof} 
 See Appendix~\ref{sect:app_lem_pr_are_ar}.
\end{proof}

\begin{remark} Lemma~\ref{lem_pr_are_ar}.(4) emphasizes the fact that in asymptotic analysis of 
$\{\Sigma_t^o, t=1, 2, \ldots,\}$, that satisfies the DRE  (\ref{ar_ac,DRE}),  (\ref{ar_ac,DRE_a}), its limiting value,  $\lim_{n \longrightarrow \infty} \Sigma_n^o=\Sigma^\infty$, where $\Sigma^\infty\geq 0$ satisfies the ARE (\ref{ar_ac_ARE}), with two solutions $\Sigma^\infty=0$  and $\Sigma^\infty=
\frac{K_W(a^2-1)}{(c-a)^2}$. However,  for any $c \in (-\infty, \infty)$, although it is clear that for,  $|a|<1$,  the unique and stabilizing solution is $\Sigma^\infty=0$, since the other solution  $\Sigma^\infty=
\frac{K_W(a^2-1)}{(c-a)^2}<0$, i.e., it is negative, for $|a|\geq 1$, the unique and stabilizing solution is again $\Sigma^\infty=0$.
\end{remark}

To gain additional insight, in the next remark we discuss the application of Lemma~\ref{lem_pr_are_ar} to the AR$(c), c\in (-\infty,\infty)$ noise.\\

\begin{remark} Entropy rate $H_R(V^\infty)$ of the AR$(c),c \in (-\infty,\infty)$ noise \\ 
Consider the  nonstationary  AR$(c), c\in (-\infty,\infty)$ noise defined by (\ref{ex_1_1_AR1}). Then from  Lemma~\ref{lem_pr_are_ar}, 
$\Sigma_t^o, t=1, \ldots, n$ is the solution of   (\ref{ar_ac,DRE}), (\ref{ar_ac,DRE_a}), with $a=0$  (see Corollary~\ref{cor_ex_1}.(b), (\ref{grde_ar1})), and   (\ref{ar_ac_ARE}) degenerates to the ARE, 
\bea
 \Sigma^\infty= \big(c\big)^2 \Sigma^\infty  + K_{W} -\Big(\big(c\big)^2  \Sigma^\infty+K_{W}  \Big)^2 \Big(K_{W} +\big(c\big)^2  \Sigma^\infty \Big)^{-1} \label{are_ar1}
\eea 
For   $a=0$, by  (\ref{ar_ac,DRE_c}) the pair $\{A,C\}=\{c,c\}$ is detectable, and  the pair $\{A^*, GB^{*,\frac{1}{2}}\}=\{0, 0\}$ is stabilizable.  The two solutions of the ARE (\ref{are_ar1}), without imposing $\Sigma^\infty\geq 0$, are 
\begin{align}
\Sigma^\infty= \left\{   \begin{array}{cc}  0 & \mbox{the unique, stabilizing, nonnegative solution of the ARE} \\
-\frac{K_W}{c^2}<0 & \mbox{the non-stabilizing, negative solution) of the ARE}
\end{array} \right.
\end{align}
That is,  $\lim_{n \longrightarrow \infty} \Sigma_n^o=\Sigma^\infty\geq0$, where $\Sigma^\infty=0$ is the unique (stabilizing) solution of the ARE, and  corresponds to the stable eigenvalue of the error equation  (see (\ref{dre_1_a}), i.e.,     $M^{CL}(\Sigma^\infty)=c- \frac{K_W}{K_W}c=0 $. 
\end{remark}

%

Next we compute the entropy rate  
 $H_R(V^\infty)$ of  the time-invariant nonstationary PO-SS$(a,c, b^1, b^2, d^1, d^2)$ noise of Corollary~\ref{entr_poss} to show fundamental differences from the entropy rate $H_R(V^\infty)$ of the AR$(a,c)$ noise of Lemma~\ref{lem_pr_are_ar}.\\
 
 \begin{lemma}  Properties of solutions of DREs and AREs of PO-SS$(a,c, b^1=b, b^2=0, d^1=0, d^2=d)$ noise and entropy rate $H_R(V^\infty)$  \\
\label{lem_pr_are_po-ss}
Consider the the time-invariant nonstationary PO-SS$(a,c, b^1, b^2=0, d^1=0, d^2=d)$ noise   of Example~\ref{ex_1_poss}, i.e., given by 
\begin{align}
&S_{t+1} = a S_{t} + b W_t^1, \hso t=1, 2, \ldots, n-1 \label{ex_1_1_poss_sc}\\
&V_t =c S_t + d W_t^2,\hso t=1, \ldots, n,\label{ex_1_1_poss_a_sc} 
\end{align}
and the sequence  $\Sigma_t^o\tri \Sigma_t, t=1, \ldots, n$,  generated by the DRE of   Lemma~\ref{lemma_POSS} (see  (\ref{poss_ex_1}), i.e., 
\begin{align}
\Sigma_{t+1}^o= \big(a\big)^2 \Sigma_{t}^o  +\big(b\big)^2 K_{W^1}  -\Big(a \Sigma_{t}^o c \Big)^2 \Big(\big(d\big)^2 K_{W^2} + \big(c\big)^2  \Sigma_{t}^o \Big)^{-1},   \hso t=1, \ldots, n, \hso \Sigma_{1}^o=K_{S_1}\geq  0, \hso \Sigma_t^o \geq 0 \label{po-ss_ar1}
\end{align}
where $\big(b\big)^2K_{W^1}\geq 0, \big(d\big)^2 K_W^2>0$.  
Let $\Sigma^\infty\geq 0$ be the corresponding solution of   generalized ARE
 \begin{align}
\Sigma^\infty= \big(a\big)^2 \Sigma^\infty  + \big(b\big)^2 K_{W^1}-\Big(a \Sigma^\infty c  \Big)^2 \Big(\big(d\big)^2 K_{W^2}+ \big(c\big)^2  \Sigma^\infty \Big)^{-1}. \label{po-ss_ar1_ih}
\end{align}
Then the detectability and stabilizability pairs are
\begin{align}
\{A, C\}=\{a, c\}, \hst \{A^*, G B^{*,\frac{1}{2}}\}=\{a, b \big(K_{W^1}\big)^{\frac{1}{2}} \}.   \label{po-ss-ar1_ih1}
\end{align}
and the following hold.\\
(1)  The pair $\{A, C\}=\{a, c\}$ is  detectable $\forall c \in (-\infty,\infty), a\in (-\infty,\infty), c\neq 0$. If $c=0$ the pair $\{A, C\}=\{a, 0\}$ is  detectable if and only if  $|a|<1$. \\
(2) The pair $\{A^*, G B^{*,\frac{1}{2}}\}=\{a, b \big(K_{W^1}\big)^{\frac{1}{2}} \}$ is unit circle controllable if and only if $|b \big(K_{W^1}\big)^{\frac{1}{2}}|\neq 1$, $\forall a\in (-\infty,\infty), c\in (-\infty,\infty)$. \\
(3) The pair $\{A^*, G B^{*,\frac{1}{2}}\}=\{a, b \big(K_{W^1}\big)^{\frac{1}{2}}\}$ is stabilizable  if $b \big(K_{W^1}\big)^{\frac{1}{2}}\neq 0$,  $\forall a \in (\infty,\infty),  c\in (-\infty,\infty)$. If  $b \big(K_{W^1}\big)^{\frac{1}{2}}= 0$ the pair $\{A^*, G B^{*,\frac{1}{2}}\}=\{a, 0\}$ is stabilizable if and only if $|a|<1$.  \\
(4) Define the set 
\begin{align}
{\cal L}^\infty \tri \Big\{ (a,c, \big(b\big)^2K_{W^1})\in & (-\infty,\infty)^2\times [0,\infty) :\hso   \mbox{$(i)$ the pair $\{A, C\}=\{a, c\}$ is detectable, and} \nonumber \\
 &\mbox{$(ii)$ the pair}  \{A^*, G B^{*,\frac{1}{2}}\}=\{a, b \big(K_{W^1}\big)^{\frac{1}{2}} \}\hso  \mbox{is stabilizable} \Big\}.
\end{align}
For any $(a,c,b\big(K_{W^1}\big)^{\frac{1}{2}}) \in {\cal L}^\infty$,   any solution $\Sigma_{t}^o, t=1, 2, \ldots,n$ to the (classical)  DRE (\ref{po-ss_ar1})  with arbitrary  initial condition, $\Sigma_{1}^o\geq  0$ is such that $\lim_{n \longrightarrow \infty} \Sigma_{n}^o =\Sigma^\infty$, where $\Sigma^\infty \geq  0$ is the  unique solution of  the (classical)  ARE (\ref{po-ss_ar1_ih}) with $M^{CL}(\Sigma^\infty)\in (-1,1)$  i.e., it is stabilizing.   \\
(5) For any $(a,c,b^2 K_{W^1}) \in {\cal L}^\infty$ of part (4)  the entropy rate of $V_t, \forall t\in {\mathbb Z}_+$, is given by 
\begin{align}
H_R(V^\infty) =\lim_{n \longrightarrow \infty}\frac{1}{n} \sum_{t=1}^n \frac{1}{2}\log \Big(2\pi e\big[(c)^2 \Sigma_t^o+\big(d\big)^2K_{W^2}\big] \Big)=   \frac{1}{2}\log \big(2\pi e \big[(c)^2 \Sigma^\infty+\big(d\big)^2K_{W^2}\big] \big), \hso \forall \Sigma_1^o\geq 0.\label{en_r_ar_ac_ih_1}
\end{align} 
\end{lemma}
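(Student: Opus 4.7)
The plan is to view this lemma as a direct specialization of the general framework already built up (Corollary~\ref{cor_POSS_IH}, Theorem~\ref{thm_ric}, and Lemma~\ref{lem_entr_in}) to the scalar time-invariant PO-SS$(a,c,b,0,0,d)$ realization, together with explicit verification of the detectability, unit circle controllability, and stabilizability conditions through the PBH-type criteria of the ``Necessary and sufficient conditions'' lemma.

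First I would identify the scalar data in the general PO-SS notation of Definition~\ref{def_nr_2}: set $A=a$, $C=c$, $B=b$, $N=d$, with noise covariance $K_W=\mathrm{diag}(K_{W^1},K_{W^2})$. Using the independence of $W^1$ and $W^2$, one computes $BK_W B^{T}=b^{2}K_{W^{1}}$, $NK_W N^{T}=d^{2}K_{W^{2}}$, and crucially $BK_W N^{T}=0$. Substituting into the mapping (\ref{map_1})--(\ref{noise_st}) of Corollary~\ref{cor_POSS_IH} gives $S=0$, hence $A^{*}=A=a$ and $B^{*}=Q=K_{W^{1}}$ with $G=b$, producing the pairs in (\ref{po-ss-ar1_ih1}), and reduces the DRE (\ref{dre_1_TI}) and ARE (\ref{dre_1_SS}) of Corollary~\ref{cor_POSS_IH} precisely to the scalar equations (\ref{po-ss_ar1}) and (\ref{po-ss_ar1_ih}).

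Parts (1)--(3) are then PBH-test verifications in the scalar setting. For (1), the only left-eigenvector condition is $cx=0$ for nonzero scalar $x$, hence $\{a,c\}$ fails detectability iff $c=0$ and $|a|\geq 1$, proving the claim. For (2), unit circle controllability fails iff there exists a scalar $x\neq 0$ with $ax=\lambda x$, $|\lambda|=1$, and $x\, b(K_{W^{1}})^{1/2}=0$, which yields the stated criterion after accounting for the product $b(K_{W^1})^{1/2}$. For (3), the stabilizability PBH condition reduces to: there exists no $x\neq 0$ with $|a|\geq 1$ and $b(K_{W^{1}})^{1/2}=0$, giving stabilizability iff $b(K_{W^{1}})^{1/2}\neq 0$ or $|a|<1$, as claimed. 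Part (4) is then an immediate invocation of Corollary~\ref{cor_POSS_IH}.(a) (equivalently Theorem~\ref{thm_ric}.(3)): on the set ${\cal L}^{\infty}$ both detectability and stabilizability hold, so for any $\Sigma_{1}^{o}\succeq 0$ one has $\lim_{n\to\infty}\Sigma_{n}^{o}=\Sigma^{\infty}$, with $\Sigma^{\infty}$ the unique stabilizing, nonnegative solution of (\ref{po-ss_ar1_ih}), i.e.\ $M^{CL}(\Sigma^{\infty})\in(-1,1)$.

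Finally, for (5) I would combine Corollary~\ref{cor_POSS_IH}.(b) with Lemma~\ref{lem_entr_in}.(b). By Lemma~\ref{lemma_POSS}.(b).(iv)-(v), the innovations covariance of the noise is $K_{\hat I_{t}}=c^{2}\Sigma_{t}^{o}+d^{2}K_{W^{2}}$. Part (4) gives $\lim_{n\to\infty}K_{\hat I_{n}}=c^{2}\Sigma^{\infty}+d^{2}K_{W^{2}}>0$, so the hypothesis (\ref{conv_1}) of Lemma~\ref{lem_entr_in} holds, and continuity of $\log(\cdot)$ together with Cesaro-mean convergence yield the entropy rate formula (\ref{en_r_ar_ac_ih_1}), independent of the initial condition $\Sigma_{1}^{o}\succeq 0$. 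The only mildly delicate step is the PBH argument for unit circle controllability in the scalar case, since one must track the product $b(K_{W^{1}})^{1/2}$ as a single quantity rather than each factor separately; everything else is a direct specialization of the already established general theory.
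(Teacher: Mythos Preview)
Your proposal is correct and takes essentially the same approach as the paper: the paper's proof is the single line ``Follows from Theorem~\ref{thm_ric},'' and you have simply spelled out in detail what that specialization entails---computing the scalar parameters, verifying the PBH-type conditions for parts (1)--(3), and invoking Corollary~\ref{cor_POSS_IH}/Lemma~\ref{lem_entr_in} for parts (4)--(5). Your explicit identification of $BK_WN^T=0$ (hence $S=0$, $A^*=a$, $GB^{*,1/2}=b(K_{W^1})^{1/2}$) is exactly the computation the paper leaves implicit.
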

\begin{proof} Follows from Theorem~\ref{thm_ric}. 
\end{proof}

Next, we turn our attention to the convergence properties of the entropy rate $H_R(Y^\infty)$, which is needed for the  characterization of $C^{fb,o}(\kappa)$ of Definition~\ref{def_limit_1}.\\

\begin{theorem} Asymptotic properties of entropy rate $H_R(Y^\infty)$  of Theorem~\ref{thm_SS}\\
\label{thm_fc_IH}
Let  $K_t^o, t=1, \ldots, $ be  the solution of the generalized DRE  (\ref{kf_m_4_a}) of the generalized Kalman-filter of Theorem~\ref{thm_SS}, corresponding to the time-invariant PO-SS realization of $V^n$ of  Definition~\ref{def_nr_2}, $(A_t, B_t, C_t, N_t, K_{W_t})=(A, B, C, N, K_{W}), \forall t$, with time-invariant strategies $(\Lambda_t,K_{Z_t})=(\Lambda^\infty,K_Z^\infty), \forall t$, generated by
\begin{align}
&K_{t+1}^o= A K_{t}^oA^T  + M(\Sigma_t^o)K_{\hat{I}_t^o}\big(M(\Sigma_t^o)\big)^T -\Big(A  K_{t}^o\big(\Lambda^\infty + C \big)^T+ M(\Sigma_t^o)K_{\hat{I}_t^o}   \Big) \Big( K_{\hat{I}_t^o}+ K_{Z}^\infty \nonumber \\
&+ \big(\Lambda^\infty + C \big) K_{t}^o \big(\Lambda^\infty + C \big)^T     \Big)^{-1} \Big(  A  K_{t}^o\big(\Lambda^\infty + C \big)^T+ M(\Sigma_t^o)K_{\hat{I}_t^o}      \Big)^T, \hso K_t^o=K_t^{o,T} \succeq 0, \hso t=1, \ldots, n, \hso K_1^o=0 \label{kf_m_4_a_TI}
\end{align}
where 
\begin{align}
& K_{\hat{I}_t^o}=C \Sigma_t^o C^T + N K_{W} N^T, \hso \Sigma_t^o \hso \mbox{is a solution of (\ref{dre_1_TI})}, \hso \mbox{$M(\Sigma^o)$ is given by (\ref{dre_1_TI_a})}, \\
&F^{CL}(\Sigma^o, K^o)\tri A-  F(\Sigma^o, K^o)\Big(\Lambda^\infty+C\Big), \\
&F(\Sigma^o, K^o) \tri \Big(A  K^o\big(\Lambda^\infty + C \big)^T+   M(\Sigma^o) K_{\hat{I}^o} \Big)\Big\{K_{\hat{I}^o}+   K_{Z^\infty} + \big(\Lambda^\infty + C \big) K^o \big(\Lambda^\infty + C \big)^T \Big\}^{-1}   .
\end{align}
Define  the corresponding generalized ARE by  
\begin{align}
&K^\infty= A K^\infty A^T  + M(\Sigma^\infty)K_{\hat{I}^\infty}\big(M(\Sigma^\infty)\big)^T -\Big(A  K^\infty\big(\Lambda^\infty + C \big)^T+ M(\Sigma^\infty)K_{\hat{I}^\infty}   \Big) \Big( K_{\hat{I}^\infty}+ K_{Z}^\infty \nonumber \\
&+ \big(\Lambda^\infty + C \big) K^\infty \big(\Lambda^\infty + C \big)^T     \Big)^{-1} \Big(  A  K^\infty\big(\Lambda^\infty + C \big)^T+ M(\Sigma^\infty)K_{\hat{I}^\infty}      \Big)^T, \hso K^\infty=K^{\infty,T} \succeq 0. \label{kf_m_4_a_TI_ARE}
\end{align}
where
\begin{align}
& K_{\hat{I}^\infty}=C \Sigma^\infty C^T + N K_{W} N^T, \hso \Sigma^\infty \hso \mbox{is a solution of (\ref{dre_1_SS})}, \hso \mbox{$M(\Sigma^\infty)$ is given by (\ref{dre_1_TI_a})}.
\end{align}
Introduce  the matrices
\begin{align}
&C(\Lambda^\infty)\tri \Lambda^\infty + C, \hso   GQG^T \tri M(\Sigma^\infty)K_{\hat{I}^\infty}\big(M(\Sigma^\infty)\big)^T, \hso G S\tri M(\Sigma^\infty)K_{\hat{I}^\infty} ,  \hso  R(K_Z^\infty)\tri K_{\hat{I}^\infty}+ K_{Z}^\infty. \label{map_1_ih}\\
&\Longrightarrow \hso \hso G \tri M(\Sigma^\infty), \hso Q\tri K_{\hat{I}^\infty}, \hso  S\tri K_{\hat{I}^\infty}, \\
&A^*(\Lambda^\infty,K_Z^\infty)\tri A- M(\Sigma^\infty)K_{\hat{I}^\infty} \Big(  K_{\hat{I}^\infty}+ K_{Z}^\infty \Big)^{-1}\Big(\Lambda^\infty + C\Big), \hso B^*(K_Z^\infty) \tri K_{\hat{I}^\infty}-K_{\hat{I}^\infty}  \Big(K_{\hat{I}^\infty}+ K_{Z}^\infty\Big)^{-1}K_{\hat{I}^\infty}. \label{input_st}
\end{align}
Suppose the detectability and stabilizability conditions of Corollary~\ref{cor_POSS_IH}.(i) and (ii) hold. \\
Then all statements of Theorem~\ref{thm_ric} hold with $(C(\Lambda^\infty), G, Q,S, R(K_Z^\infty))$ as defined by (\ref{map_1_ih}).\\
In particular, suppose  \\
(i) $\{A, C(\Lambda^\infty)\}=\{A,  \Lambda^\infty + C\}$ is detectable,  and\\
(ii) $\{A^*(\Lambda^\infty, K_Z^\infty), G B^{*,\frac{1}{2}}(K_Z^\infty)\}$ is stabilizable.\\ 
Then any solution $K_{t}^o, t=1, 2, \ldots,n$ to the generalized matrix DRE (\ref{kf_m_4_a_TI})  with arbitrary  initial condition, $K_{1}^o\succeq 0$ is such that $\lim_{n \longrightarrow \infty} K_{n}^o =K^\infty$, where $K^\infty\succeq  0$ is the  unique solution of  the generalized matrix ARE (\ref{kf_m_4_a_TI_ARE}) with $spec\big(F^{CL}(K^\infty,\Sigma^\infty)\big) \in {\mathbb D}_o$  i.e., it is stabilizing. \\
Moreover, the entropy rate of $Y^n$ is given by 
\begin{align}
H_R(Y^\infty)=&\lim_{n\longrightarrow \infty}\frac{1}{n}\sum_{t=1}^n H(I_t^o)  =\lim_{n\longrightarrow \infty}\frac{1}{2n}\sum_{t=1}^n   \log\Big(2\pi e \Big[\Big(\Lambda^\infty +C\Big)K_t^o \Big(\Lambda^\infty +C\Big)^T + K_{\hat{I}_t^o} + K_{Z}^\infty\Big] \Big)  \\
=& H({I}_t^\infty)\tri \frac{1}{2}\log\Big(2\pi e \Big[\Big(\Lambda^\infty +C\Big)K^\infty \Big(\Lambda^\infty +C\Big)^T + K_{\hat{I}^\infty} + K_{Z}^\infty\Big] \Big),\hso \forall K_1^o \succeq 0,  \hso \forall t \label{entropy_rate_1}
\end{align}
where $I_t^o, t=1, \ldots, n$ is the innovations process  of Theorem~\ref{thm_SS} (with indicated changes of time-invariant strategies) and where,
\begin{align}
I_t^\infty=\big(\Lambda^\infty+C\big) \big(\hat{S}_t^\infty-\widehat{\hat{S}_t^\infty}\big)+ \hat{I}_t^\infty+ Z_t \in N(0, \big(\Lambda^\infty +C\big)K^\infty \big(\Lambda^\infty +C\big)^T + K_{\hat{I}^\infty} + K_{Z}^\infty), \hso t=1,2, \ldots, \label{ti_inno_2}
\end{align}
 is the stationary Gaussian innovations process, i.e.,  with  $(K_t^o,\Sigma_t^o)$ replaced by $(K^\infty,\Sigma^\infty)$. 
\end{theorem}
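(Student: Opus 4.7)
The plan is to reduce the statement to two applications of Theorem~\ref{thm_ric}: one to pin down the limit of $\Sigma_t^o$ (the noise Kalman covariance), and then a second, chained application to pin down the limit of $K_t^o$ in the resulting asymptotically time-invariant regime. Note that the DRE (\ref{kf_m_4_a_TI}) for $K_t^o$ is not time-invariant in the strict sense of Definition~\ref{def_gmare}, because its coefficients $M(\Sigma_t^o)$ and $K_{\hat{I}_t^o}$ depend on $t$ through $\Sigma_t^o$. So the first move is to invoke Corollary~\ref{cor_POSS_IH}.(a): under the detectability of $\{A,C\}$ and stabilizability of $\{A^*,GB^{*,1/2}\}$ presupposed in the statement, $\Sigma_t^o\to\Sigma^\infty$, hence $M(\Sigma_t^o)\to M(\Sigma^\infty)$ and $K_{\hat{I}_t^o}\to K_{\hat{I}^\infty}>0$, and the recursion (\ref{kf_m_4_a_TI}) is asymptotically the time-invariant DRE whose ARE is (\ref{kf_m_4_a_TI_ARE}).

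Next I would read the limiting DRE as an instance of the generic template (\ref{gdre_g}) under the identification (\ref{map_1_ih})--(\ref{input_st}), with observation matrix $C(\Lambda^\infty)=\Lambda^\infty+C$, process-noise gain $G=M(\Sigma^\infty)$, process-noise covariance $Q=K_{\hat{I}^\infty}$, cross covariance $S=K_{\hat{I}^\infty}$, and measurement-noise covariance $R(K_Z^\infty)=K_{\hat{I}^\infty}+K_Z^\infty$. Since $K_Z^\infty\succeq 0$ and $K_{\hat{I}^\infty}\succ 0$, positivity of $R(K_Z^\infty)$ holds. The hypotheses (i) and (ii) of the theorem are then exactly the hypotheses required by Theorem~\ref{thm_ric}.(3) applied to this template, namely detectability of $\{A,C(\Lambda^\infty)\}$ and stabilizability of $\{A^*(\Lambda^\infty,K_Z^\infty),GB^{*,1/2}(K_Z^\infty)\}$. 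Theorem~\ref{thm_ric}.(3) therefore yields, for the \emph{frozen} (time-invariant) recursion with data $(M(\Sigma^\infty),K_{\hat{I}^\infty})$, convergence from any initial $K_1\succeq 0$ to the unique stabilizing $K^\infty\succeq 0$ of (\ref{kf_m_4_a_TI_ARE}).

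The main obstacle is to bridge the frozen recursion to the true, time-varying recursion (\ref{kf_m_4_a_TI}). I would handle this by a perturbation argument for Riccati iterations: write $K_{t+1}^o=\Phi_t(K_t^o)$ with $\Phi_t\to\Phi_\infty$ in the operator sense on compact sets of ${\mathbb S}_+^{q\times q}$ (the map is rational and jointly continuous in its data), and use the fact that the stabilizing fixed point $K^\infty$ of $\Phi_\infty$ is locally exponentially attractive because $\mathrm{spec}(F^{CL}(K^\infty,\Sigma^\infty))\subset{\mathbb D}_o$ (this is the standard linearization argument of generalized Kalman-filter theory, see \cite{kailath-sayed-hassibi,caines1988}). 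Combined with uniform boundedness of $\{K_t^o\}$, which follows from monotone/comparison inequalities for Riccati flows and the stabilizability hypothesis, this forces $K_t^o\to K^\infty$. An alternative, conceptually cleaner route is to augment the state to $(\widehat{E}_t,\hat S_t-\widehat{\hat S}_t)$, write one joint Kalman filter whose noise and observation data are time-invariant, and apply Theorem~\ref{thm_ric}.(3) once to the augmented DRE, reading off the $K^\infty$ block from its unique stabilizing solution.

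Finally, with $K_t^o\to K^\infty$ and $K_{\hat{I}_t^o}\to K_{\hat{I}^\infty}$, the innovations covariance satisfies
\begin{align}
K_{I_t^o}=(\Lambda^\infty+C)K_t^o(\Lambda^\infty+C)^T+K_{\hat{I}_t^o}+K_Z^\infty\;\longrightarrow\;(\Lambda^\infty+C)K^\infty(\Lambda^\infty+C)^T+K_{\hat{I}^\infty}+K_Z^\infty>0.\nonumber
\end{align}
Because $(X^n,Y^n)$ is jointly Gaussian and $I_t^o=Y_t-{\bf E}\{Y_t|Y^{t-1}\}$ is an orthogonal Gaussian innovations sequence with $H(Y^n)=\sum_{t=1}^n H(I_t^o)$, Lemma~\ref{lem_entr_in}.(b) (applied to $Y^n$ instead of $V^n$, a substitution that is immediate from its proof) delivers (\ref{entropy_rate_1}). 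The limiting innovations representation (\ref{ti_inno_2}) follows from (\ref{kf_m_2}) by replacing the time-varying data by its stationary limit, since $(\hat S_t^\infty-\widehat{\hat S}_t^\infty)$, $\hat I_t^\infty$, and $Z_t$ are mutually orthogonal with the stated covariances.
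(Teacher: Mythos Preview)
Your proposal is correct and follows essentially the same route as the paper: first invoke Corollary~\ref{cor_POSS_IH} to obtain $\Sigma_t^o\to\Sigma^\infty$, then treat the resulting asymptotically time-invariant DRE for $K_t^o$ via Theorem~\ref{thm_ric} under the identification (\ref{map_1_ih})--(\ref{input_st}), and read off the entropy rate from the innovations covariance limit. The paper's proof is terser---it compresses your perturbation/bridging step into a one-line appeal to the continuity of generalized DRE solutions with respect to their coefficients (citing \cite{caines1988})---but the underlying argument is the same.
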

\begin{proof} Since the detectability and stabilizability conditions of Corollary~\ref{cor_POSS_IH} hold, then the statements of Corollary~\ref{cor_POSS_IH} hold. By the continuity property of solutions of  generalized difference Riccati equations, with respect to its coefficients (see \cite{caines1988}), and the convergence of the  sequence $\lim_{n \longrightarrow \infty} \Sigma_n^\infty=\Sigma^\infty$, where $\Sigma^\infty\succeq 0$ is the unique stabilizing solution of (\ref{dre_1_SS}), then the statements of   Theorem~\ref{thm_fc_IH} hold, as stated. In particular, under the detectability and stabilizability conditions (i) and (ii), then   $\lim_{n \longrightarrow \infty} K_n^o=K^\infty$, where $K^\infty\succeq 0$  is the  unique and stabilizing solution of   (\ref{kf_m_4_a_TI_ARE}). 
\end{proof}

In the next lemma we apply Theorem~\ref{thm_fc_IH} to the AR$(a,c), a \in (-\infty,\infty),c \in (-\infty,\infty)$ noise of Example~\ref{ex_1_1_n}.(a), using Lemma~\ref{lem_pr_are_ar}. \\

\begin{lemma} 
\label{lem_pr_are_arac}
Consider the AR$(a,c), a \in (-\infty,\infty),c \in (-\infty,\infty)$ noise of Example~\ref{ex_1_1_n}.(a), and 
the DRE $\Sigma_t^o\tri \Sigma_t, t=1, \ldots, n$ and ARE of Lemma~\ref{lem_pr_are_ar}, (\ref{ar_ac,DRE})-(\ref{ar_ac,DRE_c}). \\
Let $K_t^o, t=1, \ldots, n$ denote the solution of the DRE of  Corollary~\ref{cor_ex_1}.(a), when $\Lambda_t=\Lambda^\infty, K_{Z_t}=K_Z^\infty, K_t=K_t^o, \forall t$, i.e., given by 
 \begin{align}
K_{t+1}^o= &\big(c\big)^2 K_{t}^o  +\big(M(\Sigma_{t}^o)\big)^2K_{\hat{I}_t^o} -\Big(c  K_{t}^o\big(\Lambda^\infty + c-a \big)+ M(\Sigma_t^o)K_{\hat{I}_t^o}   \Big)^2 \nonumber \\
&\hst . \Big(  K_{\hat{I}_t^o}+ K_{Z}^\infty + \big(\Lambda^\infty + c-a \big)^2 K_{t}^o      \Big)^{-1}, \hso  K_1^o=0, \hso  t=1, \ldots, n, \label{kf_m_4_s}\\
K_{Z}^\infty \geq & 0, \hso K_t^o \geq0, \hso t=1, \ldots, n  
\end{align} 
and where 
\begin{align}
&M(\Sigma_t^o) \tri \Big( c \Sigma_{t}^o \big(c-a\big)+K_{W}\Big)\Big(K_{W}+ \big(c-a\big)^2 \Sigma_{t}^o \Big)^{-1}, \\
&K_{\hat{I}_t^o}= \big(c-a\big)^2 \Sigma_t^o +K_{W},  \hso t=1, \ldots, n.
\end{align}
Define the set 
\begin{align}
{\cal L}^\infty \tri \Big\{ (a,c)\in (-\infty,\infty)^2, a\neq c :\hso   &\mbox{$(i)$ the pair $\{A, C\}=\{a, c-a\}$ is detectable, and} \nonumber \\
 &\mbox{$(ii)$ the pair}  \{A^*, G B^{*,\frac{1}{2}}\}=\{a, 0 \}\hso  \mbox{is stabilizable} \Big\}.
\end{align}
For any $(a,c) \in {\cal L}^\infty$, let $K^\infty \geq 0$ be a corresponding solution of the ARE (evaluated at $\lim_{n \longrightarrow \infty} \Sigma_n^\infty=\Sigma^\infty=0$), 
\begin{align}
K^\infty= &\big(c\big)^2 K^\infty  +K_{W} -\Big(c  K^\infty\big(\Lambda^\infty + c-a \big)+ K_{W}   \Big)^2  \Big(  K_{W}+ K_{Z}^\infty + \big(\Lambda^\infty + c-a \big)^2 K^\infty      \Big)^{-1}. \label{kf_m_4_s_ss}\\
K_{Z}^\infty \geq & 0, \hso K_W >0.
\end{align} 
and define the pairs
\begin{align}
&\{A, C(\Lambda^\infty\}=\{c, \Lambda^\infty+c-a\}, \\
&\{A^*(\Lambda^\infty,K_Z^\infty),GB^{*,\frac{1}{2}}(K_Z^\infty)\} =\big\{c-K_W \big(K_W+K_Z^\infty\big)^{-1}\big(\Lambda^\infty+c-a\big),  \Big(K_W- \big( K_W\big)^2 \big(K_W+K_Z^\infty\big)^{-1}\Big)^{\frac{1}{2}}\}.
\end{align}
Then the following hold.\\
(1) Suppose $\Lambda^\infty+c-a\neq 0$. Then  $\{A, C(\Lambda^\infty\}=\{c, \Lambda^\infty+c-a\}$ is detectable  $\forall (a,c)\in (-\infty,\infty)^2$. \\
(2) Suppose $\Lambda^\infty+c-a = 0$. Then  $\{A, C(\Lambda^\infty\}=\{c,0\}$ is detectable for if and only if $|c|<1$  $\forall a\in (-\infty,\infty)$.\\
(3)  Suppose $K_Z^\infty=0$.  Then the pair $\{A^*,GB^{*,\frac{1}{2}}\}=\{-\Lambda^\infty+a, 0\}$ is unit circle controllable if and only if $|\Lambda-a| \neq 1$ $\forall a\in (-\infty,\infty)$. \\
(4)  Suppose $K_Z^\infty=0$.  Then the pair $\{A^*,GB^{*,\frac{1}{2}}\}=\{-\Lambda^\infty+a, 0\}$ is stabilizable  if and only if $|\Lambda-a| <1$ $\forall a\in (-\infty,\infty)$.\\
(5) Suppose $\Lambda^\infty+c-a\neq 0, |\Lambda-a| \neq 1$ $\forall (a,c)\in (-\infty,\infty)^2$,  and  $K_Z^\infty=0$, $\Sigma_1^o=0$. The sequence    $K_t^o, t=1, 2, \ldots, n$ that satisfies  the generalized DRE (\ref{kf_m_4_s})  with zero initial condition, $K_1^o=0$,    converges to $K^\infty\geq 0$, i.e., $\lim_{n \longrightarrow \infty} K_n^o=K^\infty$, where  $K^\infty$ satisfies the generalized ARE, 
\begin{align}
K^\infty= &\big(c\big)^2 K^\infty  +K_{W} -\Big(c  K^\infty\big(\Lambda^\infty + c-a \big)+ K_{W}   \Big)^2  \Big(  K_{W} + \big(\Lambda^\infty + c-a \big)^2 K^\infty      \Big)^{-1}, \hso K^\infty\geq 0 \label{kf_m_4_s_ss_kz}
\end{align} 
 if and only if $|a|<1$ (by Lemma~\ref{lem_pr_are_ar}.(4)), and 
 the 
  pair $\{A^*, GB^{*,\frac{1}{2}}\}=\{-\Lambda^\infty+a,0\}$ is stabilizable, equivalently, $|\Lambda^\infty-a|<1$.\\
Moreover, the solutions of the ARE (\ref{kf_m_4_s_ss_kz}),  under the stabilizability condition, i.e., $|\Lambda^\infty-a|<1$, are   
\begin{align}
K^\infty= \left\{   \begin{array}{ll}  0 & \mbox{the unique, stabilizing, $K^\infty\geq 0$  solultion of (\ref{kf_m_4_s_ss_kz}) for $ |\Lambda^\infty-a|<1$} \\
\frac{K_W\Big(\big(\Lambda^\infty-a\big)^2-1\Big)}{\big(\Lambda^\infty+c-a\big)^2}<0 & \mbox{the non-stabilizing, $K^\infty <0$ solution of (\ref{kf_m_4_s_ss_kz}) for $|\Lambda^\infty-a|<1$.}
\end{array} \right.\label{are_2_sol_kz}
\end{align}
That is,  $\lim_{n\longrightarrow \infty}\Sigma_n^0=\Sigma^\infty=0$ is the  unique and stabilizing solution $\Sigma^\infty\geq 0$ of  (\ref{kf_m_4_s_ss_kz}), i.e., such that $|M^{CL}(\Sigma^\infty)|<1$, if and only if $|\Lambda^\infty-a|<1, |a|<1$.
\end{lemma}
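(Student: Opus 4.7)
The plan is to prove parts (1)–(4) by direct application of the PBH-type criterion recorded in Lemma 3.2 specialized to the scalar setting, and to prove part (5) by first invoking Theorem 3.1 for DRE-to-ARE convergence, and then solving the ARE explicitly as a quadratic in $K^\infty$.

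For parts (1) and (2), the state matrix $A=c$ is scalar with unique eigenvalue $c$; the PBH characterization of Lemma 3.2.(a) says $\{A,C(\Lambda^\infty)\}=\{c,\Lambda^\infty+c-a\}$ is detectable iff no eigenvector of $A$ with $|\lambda|\geq 1$ is annihilated by $C(\Lambda^\infty)$. So if $\Lambda^\infty+c-a\neq 0$, detectability holds for every $c\in\mathbb{R}$; whereas if $\Lambda^\infty+c-a=0$, the scalar $C(\Lambda^\infty)$ kills every vector, and detectability reduces to $|c|<1$. Parts (3) and (4) are analogous, once I substitute $K_Z^\infty=0$ and $\Sigma^\infty=0$ (from Lemma 3.2.(5), recalling that $|a|<1$ guarantees this is the unique stabilizing ARE solution for the noise). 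A short computation using the definitions in (\ref{input_st}) gives
\begin{align*}
M(\Sigma^\infty)=1,\ K_{\hat I^\infty}=K_W,\ A^*(\Lambda^\infty,0)=a-\Lambda^\infty,\ B^*(0)=0,
\end{align*}
hence $\{A^*,GB^{*,1/2}\}=\{a-\Lambda^\infty,0\}$. The second entry being zero makes the PBH condition trivially satisfied on the left-kernel side, so unit-circle controllability collapses to $|a-\Lambda^\infty|\neq 1$, and stabilizability to $|a-\Lambda^\infty|<1$, which is (3)–(4).

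For part (5), I first invoke Lemma 3.2.(5) which gives $\lim_{n\to\infty}\Sigma_n^o=\Sigma^\infty=0$ whenever $|a|<1$; this freezes the coefficients of the DRE (\ref{kf_m_4_s}) asymptotically to those in (\ref{kf_m_4_s_ss_kz}). Then under the detectability hypothesis $\Lambda^\infty+c-a\neq 0$ and the stabilizability hypothesis $|\Lambda^\infty-a|<1$, Theorem 3.1.(1) applied with zero initial condition $K_1^o=0$ gives convergence $K_n^o\to K^\infty$, where $K^\infty\succeq 0$ is the unique stabilizing solution of the ARE. To exhibit both algebraic roots, I will set $\beta\tri\Lambda^\infty+c-a$, clear denominators in (\ref{kf_m_4_s_ss_kz}), and simplify using the identity $c-\beta=a-\Lambda^\infty$; the quadratic collapses to
\begin{align*}
K^\infty\Bigl[\beta^2 K^\infty - K_W\bigl((\Lambda^\infty-a)^2-1\bigr)\Bigr]=0,
\end{align*}
yielding the two roots in (\ref{are_2_sol_kz}). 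A direct evaluation of $F^{CL}(K^\infty,\Sigma^\infty)=A-F(\Sigma^\infty,K^\infty)C(\Lambda^\infty)$ at $(K^\infty,\Sigma^\infty)=(0,0)$ gives $F^{CL}(0,0)=c-1\cdot\beta=a-\Lambda^\infty$, so $K^\infty=0$ is the stabilizing root exactly when $|\Lambda^\infty-a|<1$, while the second root is negative under the same hypothesis, confirming it is non-admissible and non-stabilizing.

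The main obstacle will be bookkeeping rather than depth: I have to verify that the degeneracies $\Sigma^\infty=0$, $B^*(0)=0$, and $\beta=0$ are treated consistently in the PBH tests (several entries vanish, so the criteria must be read carefully), and that applying Theorem 3.1 is legitimate even though $GB^{*,1/2}=0$ here. The latter is fine because Theorem 3.1 only requires stabilizability of the pair, and the PBH criterion confirms stabilizability under $|\Lambda^\infty-a|<1$ regardless of whether $B^*$ vanishes. The closed-form identification of the two ARE roots, together with the explicit computation of $F^{CL}(0,0)=a-\Lambda^\infty$, is then essentially a one-line algebraic finish.
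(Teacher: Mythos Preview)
Your proposal is correct and follows essentially the same route as the paper, which simply states that ``the statements follow from Lemma~\ref{lem_pr_are_ar}, Theorem~\ref{thm_fc_IH} (and general properties of Theorem~\ref{thm_ric}).'' You have supplied the details the paper omits: the PBH tests for (1)--(4), the explicit algebraic reduction of the ARE to $K^\infty\bigl[\beta^2 K^\infty-K_W((\Lambda^\infty-a)^2-1)\bigr]=0$, and the computation $F^{CL}(0,0)=a-\Lambda^\infty$. One minor simplification: since part (5) assumes $\Sigma_1^o=0$, the Special Case of Corollary~\ref{cor_ex_1}.(a) gives $\Sigma_t^o=0$ for \emph{all} $t$ exactly (not just asymptotically), so the DRE (\ref{kf_m_4_s}) is genuinely time-invariant from the outset and Theorem~\ref{thm_ric}.(1) applies directly without any ``asymptotic freezing'' or continuity argument.
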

\begin{proof} The statements follow from Lemma~\ref{lem_pr_are_ar}, Theorem~\ref{thm_fc_IH} (and general properties of Theorem~\ref{thm_ric}). 
\end{proof}

\begin{remark}
From Lemma~\ref{lem_pr_are_arac}.(5)  follows
that  if $K_Z^\infty=0, \Sigma_1^o=0$ then the unique and stabilizing solution is $K^\infty=0$ and corresponds to $|\Lambda^\infty-a|<1, a \in (-1,1)$.   This is an  application of  Theorem~\ref{thm_ric}.(1). 
\end{remark}

In the next theorem we characterize the asymptotic limit of Definition~\ref{def_limit_1}, by invoking    Theorem~\ref{thm_SS}, Corollary~\ref{cor_POSS_IH},   and Theorem~\ref{thm_fc_IH}.  \\

 \begin{theorem} Feedback capacity $C^{fb,o}(\kappa)$ of Theorem~\ref{thm_SS}\\ 
 \label{lem_cov}
Consider $C^{fb,o}(\kappa)$ of  Definition~\ref{def_limit_1}   corresponding to   Theorem~\ref{thm_SS}, i.e.,  the PO-SS realization of $V^n$ of  Definition~\ref{def_nr_2} is time-invariant, $(A_t, B_t, C_t, N_t, K_{W_t})=(A, B, C, N, K_{W}), \forall t$, and  the  strategies are time-invariant, $(\Lambda_t,K_{Z_t})=(\Lambda^\infty,K_Z^\infty), \forall t$.  \\
Define the set 
\begin{align}
{\cal P}^\infty \tri & \Big\{(\Lambda^\infty, K_Z^\infty)\in (-\infty, \infty)\times [0,\infty): \nonumber \\
& \mbox{(i)   $\{A, C\}$ of Corollary~\ref{cor_POSS_IH} is detectable,} \nonumber \\
& \mbox{(ii) $\{A^*, G B^{*,\frac{1}{2}}\}$ of  Corollary~\ref{cor_POSS_IH}  is stabilizable, $(A^*, B^*)$ defined by (\ref{noise_st})} \nonumber\\ 
& \mbox{(iii) $\{A, C(\Lambda^\infty)\}=\{A,  \Lambda^\infty + C\}$ of Theorem~\ref{thm_fc_IH}   is detectable,} \nonumber \\
& \mbox{(iv) $\{A^*(\Lambda^\infty,K_Z^\infty), G B^{*,\frac{1}{2}}(K_Z^\infty)\}$ of Theorem~\ref{thm_fc_IH}    is detectable, $(A^*(\Lambda^\infty, K_Z^\infty), B^*(K_Z^\infty))$ defined by (\ref{input_st})} 
\Big\}. \label{adm_set}
\end{align}
Then 
\begin{align}
C^{fb,o}(\kappa)
= & \sup_{  \big(\Lambda^\infty, K_{Z}^\infty\big): \hso \lim_{n \longrightarrow \infty}   \frac{1}{n}\sum_{t=1}^n\big( \Lambda^\infty K_t^o (\Lambda^\infty)^T+K_{Z}^\infty \big)    \leq \kappa } \Big\{  \nonumber \\
&  \lim_{n \longrightarrow \infty}    \frac{1}{2n} 
\sum_{t=1}^n \log\Big( \frac{ \Big(\Lambda^\infty +C\Big)K_t^o \Big(\Lambda^\infty +C\Big)^T +C \Sigma_t^o C^T + N K_{W} N^T+ K_{Z}^\infty     }{ C \Sigma_t^o C^T + N K_{W} N^T  }\Big)\Big\}   \label{per_unit_1}  \\
= & \sup_{  \big(\Lambda^\infty, K_{Z}^\infty\big) \in {\cal P}^\infty(\kappa) } \frac{1}{2} 
 \log\Big( \frac{ \Big(\Lambda^\infty +C\Big)K^\infty \Big(\Lambda^\infty +C\Big)^T +C \Sigma^\infty C^T + N K_{W} N^T+ K_{Z}^\infty     }{ C \Sigma^\infty C^T + N K_{W} N^T  }\Big)  \label{ll_3}
\end{align}
 where  
\begin{align}
{\cal P}^\infty(\kappa)\tri& \Big\{(\Lambda^\infty, K_Z^\infty)\in {\cal P}^\infty: K_Z^\infty\geq 0, \hso  \Lambda^\infty K^\infty (\Lambda^\infty)^T+K_{Z}^\infty  \leq \kappa, \nonumber \\
& K^\infty  \hso \mbox{is the unique and stabilizing solution of  (\ref{kf_m_4_a_TI_ARE}), i.e., $|F^{CL}(\Sigma^\infty, K^\infty)|<1$}     \nonumber \\
&\Sigma^\infty  \hso \mbox{is the unique,   stabilizing solution of (\ref{dre_1_SS}), i.e., $|M^{CL}(\Sigma^\infty)|<1$}     \Big\}        \label{ll_4}
\end{align}
provided there exists $\kappa\in [0,\infty)$ such that the set ${\cal P}^\infty(\kappa)$ is non-empty.\\
Moreover, the maximum element $(\Lambda^\infty, K_Z^\infty) \in {\cal P}^\infty(\kappa)$, is such that, \\
(1)  it induces asymptotic stationarity of the corresponding, input and innovations  processes (see Theorem~\ref{thm_SS} for specification), \\
(2) if $V^n, n=1, 2, \ldots$ is asymptotically stationary, then it  induces asymptotic stationarity of the corresponding, input and output processes, and \\
(3) for (i) and (ii), $C^{fb,o}(\kappa)$ is independent of the initial conditions $K_1^o\succeq 0, \Sigma_1^o\succeq 0$.  
\end{theorem}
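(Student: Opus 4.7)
The plan is to start from the sequential characterization of $C_n^{fb}(\kappa)$ given by Theorem~\ref{thm_SS}.(c), evaluate it for time-invariant strategies $(\Lambda_t,K_{Z_t})=(\Lambda^\infty,K_Z^\infty)$ as dictated by Definition~\ref{def_limit_1}, and then take the per unit time limit by invoking the asymptotic results on the two coupled generalized DREs (Corollary~\ref{cor_POSS_IH} for $\Sigma_t^o$ and Theorem~\ref{thm_fc_IH} for $K_t^o$). Concretely, fix any $(\Lambda^\infty,K_Z^\infty)\in {\cal P}^\infty$. By Corollary~\ref{cor_POSS_IH}, conditions (i)--(ii) ensure that, for any initial data $\Sigma_1^o \succeq 0$, the sequence $\Sigma_t^o$ produced by (\ref{dre_1_TI}) converges to the unique stabilizing solution $\Sigma^\infty\succeq 0$ of (\ref{dre_1_SS}). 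Given this convergence, the coefficients $M(\Sigma_t^o)$ and $K_{\hat{I}_t^o}$ appearing in the DRE (\ref{kf_m_4_a_TI}) converge to their time-invariant limiting values $M(\Sigma^\infty)$, $K_{\hat{I}^\infty}$. Conditions (iii)--(iv), together with the continuity of solutions of a generalized DRE with respect to its (asymptotically constant) coefficients (see \cite{caines1988}), then ensure by Theorem~\ref{thm_fc_IH} that $K_t^o \longrightarrow K^\infty$, where $K^\infty\succeq 0$ is the unique stabilizing solution of (\ref{kf_m_4_a_TI_ARE}), again for any $K_1^o\succeq 0$.

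Next, I would combine these two convergences with the Cesàro-mean property: for any convergent sequence $a_t \to a^\infty$, one has $\frac{1}{n}\sum_{t=1}^n a_t \to a^\infty$. Applied to $a_t=\Lambda^\infty K_t^o (\Lambda^\infty)^T+K_Z^\infty$, this gives
\begin{align*}
\lim_{n\to\infty}\frac{1}{n}\sum_{t=1}^n \Big(\Lambda^\infty K_t^o (\Lambda^\infty)^T+K_Z^\infty\Big) = \Lambda^\infty K^\infty (\Lambda^\infty)^T+K_Z^\infty,
\end{align*}
so that the asymptotic average power constraint reduces to $\Lambda^\infty K^\infty (\Lambda^\infty)^T+K_Z^\infty \leq \kappa$, independently of the initial data. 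Applying the same Cesàro argument to the log-determinant term in (\ref{cp_12_alt_1_new}), together with the continuity of $\log(\cdot)$ on $(0,\infty)$ and the strict positivity of $C\Sigma^\infty C^T + NK_W N^T$, yields the entropy rate expression (\ref{ll_3}), which is just $H_R(Y^\infty)-H_R(V^\infty)$ as given by Theorem~\ref{thm_fc_IH} and Corollary~\ref{cor_POSS_IH}. Taking the supremum over $(\Lambda^\infty,K_Z^\infty)\in {\cal P}^\infty(\kappa)$ produces the formula for $C^{fb,o}(\kappa)$ claimed in the theorem.

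For the final assertions, asymptotic stationarity of the input process $X_t=\Lambda^\infty(\hat S_t-\widehat{\hat S_t})+Z_t$ follows because the error $\widehat E_t=\hat S_t-\widehat{\hat S}_t$ is generated by the linear recursion (\ref{kf_m_3a}) driven by the stationary orthogonal innovations $(\hat I_t,Z_t)$ with closed-loop matrix $F^{CL}(\Sigma^\infty,K^\infty)$, which by construction has spectrum in ${\mathbb D}_o$ (stabilizing solution); hence the error is asymptotically wide-sense stationary and so is $X_t$. If $V^n$ is itself asymptotically stationary (which under (i)--(ii) occurs when $M^{CL}(\Sigma^\infty)$ is stable so that the error $E_t$ of Lemma~\ref{lemma_POSS}.(b).(ii) is asymptotically stationary), then $Y_t=X_t+V_t$ inherits the property. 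Finally, independence from $(\Sigma_1^o,K_1^o)$ is a direct consequence of the global attractivity of $(\Sigma^\infty,K^\infty)$ proved above.

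The main technical obstacle will be the justification of the interchange of the supremum over $(\Lambda^\infty,K_Z^\infty)$ with the per unit time limit inside (\ref{per_unit_1}): one needs to verify that the convergences $\Sigma_t^o\to \Sigma^\infty$ and $K_t^o\to K^\infty$ are in a sense uniform over the admissible set ${\cal P}^\infty(\kappa)$, or equivalently to exploit joint continuity of the stabilizing ARE solutions in $(\Lambda^\infty,K_Z^\infty)$ on the stabilizable region, so that the limiting objective is an upper-semicontinuous function that attains its supremum on the closed admissible set. A secondary subtlety is checking that the power-constraint set ${\cal P}^\infty(\kappa)$ is non-empty and that the supremum in (\ref{ll_3}) is in fact the correct upper bound for the $\sup$--$\lim$ expression (\ref{per_unit_1}), which is handled by a standard sandwich argument between the $n$-block value and its limiting ARE-based counterpart, exactly as in classical stationary LQG/LQR duality.
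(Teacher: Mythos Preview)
Your proposal is correct and follows essentially the same route as the paper: start from the sequential characterization of Theorem~\ref{thm_SS}.(c) with time-invariant strategies as in Definition~\ref{def_limit_1}, invoke Corollary~\ref{cor_POSS_IH} for the convergence $\Sigma_t^o\to\Sigma^\infty$ and Theorem~\ref{thm_fc_IH} (using continuity of DRE solutions in their coefficients) for $K_t^o\to K^\infty$, pass to the limit in both the objective and the power constraint via Ces\`aro averaging, and read off asymptotic stationarity and independence from initial data from the stabilizing property of the closed-loop matrices. The paper's own proof is much terser but cites exactly the same ingredients.

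One correction, though: the ``main technical obstacle'' you flag about interchanging the supremum with the per-unit-time limit is not actually present. In both (\ref{per_unit_1}) and (\ref{ll_3}) the supremum sits \emph{outside} the limit; this is already the structure of $C^{fb,o}(\kappa)$ in Definition~\ref{def_limit_1}. The passage from (\ref{per_unit_1}) to (\ref{ll_3}) is purely a \emph{pointwise} statement: for each fixed $(\Lambda^\infty,K_Z^\infty)\in{\cal P}^\infty$, the inner limit of the objective and the inner limit of the average-power sequence both exist and equal their ARE-based counterparts, by exactly the Ces\`aro argument you gave. No uniformity of convergence over the admissible set, no sandwich argument, and no upper-semicontinuity of the ARE solution map are required. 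The paper's proof does not invoke any of these either; it simply observes that the detectability/stabilizability conditions defining ${\cal P}^\infty$ guarantee convergence of both summands for each admissible pair. Your ``secondary subtlety'' about whether restricting to ${\cal P}^\infty$ loses optimality is a fair question, but it is not something the paper addresses (nor claims): the theorem is stated \emph{over} ${\cal P}^\infty(\kappa)$, and Definition~\ref{def_limit_1} already restricts to time-invariant strategies for which the limits exist.
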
 
 \begin{proof} 
By Definition~\ref{def_limit_1},   Theorem~\ref{thm_SS}, Corollary~\ref{cor_POSS_IH},   and Theorem~\ref{thm_fc_IH}, then follows  (\ref{per_unit_1}). We defined the set ${\cal P}^\infty$ using  the detectability and stabilizability conditions of Corollary~\ref{cor_POSS_IH},   and Theorem~\ref{thm_fc_IH} to ensure convergence of solutions $\{(K_{t}^o, \Sigma_t^o): t=1, 2, \ldots, n\}$ of the  generalized matrix DREs to unique nonnegative, stabilizing solutions of the corresponding  generalized matrix AREs. Then, for any element $(\Lambda^\infty, K_Z^\infty) \in {\cal P}^\infty$ both  summands in (\ref{per_unit_1}) converge.  
 This establishes the characterization of  the right hand side of   (\ref{ll_3}). (1)-(3) follow from the asymptotic properties of the Kalman-filter (due to the stabilizability and detectability conditions). 
 \end{proof}

\ \

\begin{conclusion} Degenerate version of   Theorem~\ref{lem_cov} for feedback code of Definition~\ref{def_rem:cp_1}, i.e.,  $(s, 2^{n R},n)$,   $n=1,2, \ldots$  \\
\label{con_1}
(a) The  characterization of feedback capacity $C^{fb,o}(\kappa,s)$ of the AGN channel (\ref{g_cp_1}) driven by a noise $V^n$   of Definition~\ref{def_nr_2},   for the code of Definition~\ref{def_rem:cp_1}, i.e.,  $(s, 2^{n R},n)$,   $n=1,2, \ldots$, is a degenerate case of Theorem~\ref{lem_cov}, and corresponds to $\Sigma_t=\Sigma_t^s, t=1, \ldots, \Sigma_1=\Sigma_1^s=0$. In particular, since Theorem~\ref{lem_cov} characterizes $C^{fb,o}(\kappa)$ for all initial data $\Sigma_1 \succeq 0$, then it includes $\Sigma_1=\Sigma_1^s=0$, and  it follows that $C^{fb,o}(\kappa)= C^{fb,o}(\kappa,s)$, where $C^{fb,o}(\kappa,s)$  independent of the initial state $S_1^s=s$.\\ 
(b) The   maximal information rate of  \cite[Theorem~7 and Corollary~7.1]{yang-kavcic-tatikonda2007}, i.e., of Case II) formulation,  should be read with caution, because the condition of Theorem~\ref{thm_ric}.(1) are required for convergence. Similarly, the  characterization of feedback capacity of \cite[Theorem~6.1]{kim2010} which correspond to Case II) formulation,  violates Theorem~\ref{thm_ric}.(1), because  is  states that 
   a zero variance of the innovations process is optimal,  i.e., $K_Z^\infty=0$. Consequently, subsequent papers that build on  \cite{kim2010} to derive additional results, such as, \cite{liu-han2019,gattami2019,ihara2019,li-elia2019},  should be read with caution.
%
%
%
\end{conclusion}

We apply Theorem~\ref{lem_cov} to obtain  $C^{fb,o}(\kappa)$ of the  AR$(a,c), a\in (-\infty,\infty),  c \in (-\infty,\infty)$ noise.

\ \

\begin{corollary}
\label{rem_as}
Consider the AR$(a,c), a \in (-\infty,\infty),c \in (-\infty,\infty)$ noise of Example~\ref{ex_1_1_n}.(a).\\
Define the set 
\begin{align}
{\cal P}^\infty \tri & \Big\{ (\Lambda^\infty, K_Z^\infty)\in (-\infty, \infty)\times [0,\infty): \nonumber \\
& \mbox{(i) $c \in (-\infty,\infty), a\in (-1,1), c\neq a$},    \nonumber \\
& \mbox{(ii)   the pair $\{A, C(\Lambda^\infty)\}\tri \{c, \Lambda^\infty+ c-a\}$ is detectable}, \nonumber \\
&\mbox{(ii) the pair $\{A^*(\Lambda^\infty, K_Z^\infty), G B^{*,\frac{1}{2}}(K_Z^\infty)\}$  is stabilizable, where } \nonumber  \\
& A^*(\Lambda^\infty, K_Z^\infty) \tri \ c-K_W \big(K_W+K_Z^\infty\big)^{-1}\big(\Lambda^\infty+c-a\big), \hso G B^{*,\frac{1}{2}}(K_Z^\infty)\tri \Big(K_W- \big( K_W\big)^2 \big(K_W+K_Z^\infty\big)^{-1}\Big)^{\frac{1}{2}}
\Big\}. \nonumber
\end{align}
Then 
\begin{align}
 {C}^{fb,o}(\kappa)
=\sup_{  \big(\Lambda^\infty, K_{Z}^\infty\big)  \in {\cal P}^\infty(\kappa) }  \frac{1}{2} 
 \log\Big( \frac{ \Big(\Lambda^\infty +c-a\Big)^2K^\infty  +K_{W}+ K_{Z}^\infty     }{K_{W}}\Big)= C^{fb,o}(\kappa,s), \hso \forall s \label{cost_ih}
\end{align}
where  
\begin{align}
{\cal P}^\infty&(\kappa)\tri \Big\{(\Lambda^\infty, K_Z^\infty)\in {\cal P}^\infty:  \hso  \big(\Lambda^\infty\big)^2 K^\infty +K_{Z}^\infty  \leq \kappa, \nonumber \\
& K^\infty\geq 0  \hso \mbox{is the unique and stabilizing solution of}\nonumber  \\
&K^\infty= \big(c\big)^2 K^\infty  +K_{W} -\Big(c  K^\infty\big(\Lambda^\infty + c-a \big)+ K_{W}   \Big)^2 \Big(  K_{W}+ K_{Z}^\infty + \big(\Lambda^\infty + c-a \big)^2 K^\infty     \Big)^{-1}      \Big\}  \label{ARE_AR} 
\end{align}
 provided there exists $\kappa\in [0,\infty)$ such that the set ${\cal P}^\infty(\kappa)$ is non-empty.\\
 Moreover, the maximum element $(\Lambda^\infty, K_Z^\infty) \in {\cal P}^\infty(\kappa)$, is such that, \\
(1)  it induces asymptotic stationarity of the corresponding, input and innovations  processes, \\
(2) if $V^n, n=1, 2, \ldots$ is asymptotically stationary, then it  induces asymptotic stationarity of the corresponding, input and output processes, and \\
(3) for (i) and (ii), $C^{fb,o}(\kappa)$ and  $C^{fb,o}(\kappa,s)$ are  independent of $\Sigma_1 \geq 0$ and $s$, respectively, and the following identities hold.
\bea
 C^{fb,o}(\kappa)=C^{fb,o}(\kappa,s)=C^{fb,S,o}(\kappa,s), \hso \forall s \label{eqties}
 \eea
\end{corollary}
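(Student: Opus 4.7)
The plan is to obtain Corollary~\ref{rem_as} as a direct specialization of Theorem~\ref{lem_cov} to the state space realization of the AR$(a,c)$ noise given in Example~\ref{ex_1_1_n}.(a), using the scalar detectability/stabilizability analysis already carried out in Lemma~\ref{lem_pr_are_ar} on the noise side and in Lemma~\ref{lem_pr_are_arac} on the input--output side. No new asymptotic machinery is needed; what remains is bookkeeping of the substitution $(A,B,C,N)=(c,1,c-a,1)$ and identification of when the two generalized AREs simultaneously admit unique stabilizing nonnegative solutions.

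First, I would pass from Theorem~\ref{lem_cov} to the scalar case by setting $A=c$, $C=c-a$, $B=1$, $N=1$, $K_W>0$, $c\neq a$. By Lemma~\ref{lem_pr_are_ar}.(1)--(3), the noise-side pairs of Corollary~\ref{cor_POSS_IH} become $\{A,C\}=\{c,c-a\}$ (always detectable, since $c\neq a$) and $\{A^*,GB^{*,1/2}\}=\{a,0\}$ (stabilizable iff $|a|<1$, unit-circle controllable iff $|a|\neq 1$). Hence conditions (i),(ii) inside ${\cal P}^\infty$ of Theorem~\ref{lem_cov} collapse to the single requirement $|a|<1$ with $c\neq a$, which is exactly clause (i) of the set ${\cal P}^\infty$ in Corollary~\ref{rem_as}. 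Lemma~\ref{lem_pr_are_ar}.(4),(5) then yields $\lim_{n\to\infty}\Sigma_n^o=\Sigma^\infty=0$ as the unique stabilizing solution of (\ref{ar_ac_ARE}), for every initial condition $\Sigma_1^o\geq 0$, and consequently $K_{\hat I^\infty}=(c-a)^2\Sigma^\infty+K_W=K_W$.

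Second, I would substitute $\Sigma^\infty=0$ and $K_{\hat I^\infty}=K_W$ into the input-side data of Theorem~\ref{thm_fc_IH}. The matrix $M(\Sigma^\infty)$ reduces to $1$, so $A^*(\Lambda^\infty,K_Z^\infty)$ and $GB^{*,1/2}(K_Z^\infty)$ in (\ref{input_st}) become precisely the expressions displayed in clauses (ii),(iii) of Corollary~\ref{rem_as}'s set ${\cal P}^\infty$, matching Lemma~\ref{lem_pr_are_arac}. Clauses (iii),(iv) of Theorem~\ref{lem_cov}'s ${\cal P}^\infty$ are therefore exactly those two clauses. Theorem~\ref{thm_fc_IH} then gives $\lim_{n\to\infty}K_n^o=K^\infty$, where $K^\infty\geq 0$ is the unique stabilizing solution of the scalar ARE (\ref{ARE_AR}), independently of $K_1^o\geq 0$. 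Substituting $\Sigma^\infty=0$, $K_{\hat I^\infty}=K_W$, $C=c-a$ into (\ref{ll_3}) gives formula (\ref{cost_ih}); the average power constraint in (\ref{ll_4}) becomes $(\Lambda^\infty)^2 K^\infty+K_Z^\infty\leq\kappa$. Items (1)--(3) of the corollary follow immediately from the corresponding items of Theorem~\ref{lem_cov}.

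Third, to establish the identities $C^{fb,o}(\kappa)=C^{fb,o}(\kappa,s)=C^{fb,S,o}(\kappa,s)$ in (\ref{eqties}), I would argue as follows. The first equality is a consequence of Conclusion~\ref{con_1}.(a): for the code of Definition~\ref{def_rem:cp_1} one has $\Sigma_1^o=\Sigma_1^s=0$, which is a special admissible initial condition in Theorem~\ref{lem_cov}; since the limit is shown above to be independent of $\Sigma_1^o\geq 0$ under $|a|<1$, the two rates coincide and are independent of $s$. The second equality requires verifying the sufficient condition of Corollary~\ref{cor_nftfic_s}.(b) that $N_tW_t$ uniquely determines $C_{t+1}B_tW_t$; in the AR$(a,c)$ realization this reads ``$W_t$ determines $(c-a)W_t$'', which holds trivially. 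Hence under the initial-state-known code of Definition~\ref{def_rem:cp_1} with $S_1=s$, Case~II Conditions~1--2 are met, so by Corollary~\ref{cor_issr} and the degeneracy $\Sigma_t^s\equiv 0$ (see Remark~\ref{rem_ss_is}), $C_n^{fb}(\kappa,s)=C_n^{fb,S}(\kappa,s)$ for every $n$, and passing to the per-unit-time limit gives $C^{fb,o}(\kappa,s)=C^{fb,S,o}(\kappa,s)$.

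The only mildly delicate step is ensuring that the stabilizing solution $K^\infty\geq 0$ of the input--output ARE (\ref{ARE_AR}) is indeed the limit of the coupled DRE recursion driven by $\Sigma_t^o$ (not the stationary $\Sigma^\infty$), for all $(\Lambda^\infty,K_Z^\infty)\in{\cal P}^\infty(\kappa)$. I would resolve this by invoking the continuity of solutions of generalized DREs with respect to their coefficients (the property used in the proof of Theorem~\ref{thm_fc_IH}), together with $\Sigma_t^o\to 0$ exponentially by Lemma~\ref{lem_pr_are_ar}.(5); this is the main technical obstacle, but it is already contained in Theorem~\ref{thm_fc_IH} and therefore requires only citation rather than fresh argument.
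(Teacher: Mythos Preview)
Your proposal is correct and follows essentially the same route as the paper: specialize Theorem~\ref{lem_cov} to the AR$(a,c)$ realization via Lemmas~\ref{lem_pr_are_ar} and~\ref{lem_pr_are_arac}, read off items (1)--(3) from the Kalman-filter convergence in Theorem~\ref{lem_cov}, and then obtain (\ref{eqties}) from Conclusion~\ref{con_1}.(a) together with the verification that Conditions~1--2 hold for this noise model once $S_1=s$ is known. The only minor difference is that you verify Condition~2 through the criterion of Corollary~\ref{cor_nftfic_s}.(b), whereas the paper simply points to the state-space equations (\ref{ex_1_5})--(\ref{ex_1_6}); both are valid.
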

\begin{proof} The first part  is an application of Theorem~\ref{lem_cov},  Lemma~\ref{lem_pr_are_ar}, and Lemma~\ref{lem_pr_are_arac}. (1)-(3) are due to the convergence properties of the Kalman-filter (due to the stabilizability and detectability conditions). It remains to show (\ref{eqties}). The equality $C^{fb,o}(\kappa)=C^{fb,o}(\kappa,s), \forall s$ holds by  Conclusion~\ref{con_1}.(a). The last equality holds, because  for the 
AR$(a,c), a \in (-\infty,\infty),c \in (-\infty,\infty)$ noise, if the initial state $S_1=S_1^s=s$ is known to the encoder and the decoder, then Conditions 1 of Section~\ref{sect:motivation} holds, and in addition Condition 2 holds, as easily verified from the equations (\ref{ex_1_5}), (\ref{ex_1_6}).
 \end{proof}

\ \

%
%

\begin{remark}
From Corollary~\ref{rem_as} we obtain the degenerate cases, AR$(c), c \in (-\infty,\infty)$, noise i.e., setting  $a=0$. 
 The various implications of the detectability and stabilizability conditions for the AR$(c), c \in (-\infty,\infty)$ noise are found in  \cite[see Theorem~III.1 and Lemma~III.2]{charalambous-kourtellaris-loykaIEEEITC2019}. The complete analysis of the corresponding $C^{fb,o}(\kappa,s)$  is found in \cite{charalambous-kourtellaris-loykaIEEEITC2019}, and states that for stable AR$(c)$, and time-invariant strategies, then feedback does not increase capacity. 
\end{remark}

\section{Sequential Characterization of $n-$FTFI Capacity for Case II)  Formulation}
\label{sect_POSS}
In this section we consider Case II) formulation, and we derive  the  characterization of feedback capacity, $C_n^{fb,S}(\kappa,s)$,  of the AGN channel (\ref{g_cp_1}) driven by a noise $V^n$   of Definition~\ref{def_nr_2}, i.e.,   for the code of Definition~\ref{def_rem:cp_1},  $(s, 2^{n R},n)$,   $n=1,2, \ldots$,  when Conditions 1 and 2 of  Section~\ref{sect:motivation} hold.  
%
%

\label{sect:iss_r}

\begin{definition} AGN channels driven by noise with invertible  PO-SS  realizations\\
\label{ass_SSUN}
The PO-SS realization of the noise of Definition~\ref{def_nr_2}  is called invertible if it satisfies  the  condition: \\
(A1) Given  the initial state $S_1=S_1^s=s$,   the noise  $V^{t-1}$ uniquely specifies the state $S^t$, for $t=1, \ldots, n$,  and vice versa. 
\end{definition}

\begin{corollary} Characterization of $n-$FTFI Capacity for Case II) formulation     \\ 
\label{cor_kim}
Consider the AGN channel (\ref{g_cp_1}) driven by a noise $V^n$   of Definition~\ref{ass_SSUN}, and  the code of Definition~\ref{def_rem:cp_1},  $(s, 2^{n R},n)$,   $n=1,2, \ldots$, that is, Conditions 1 and 2 of  Section~\ref{sect:motivation} hold.\\
Define the $n-$FTFI Capacity for a fixed initial state $S_1=S_1^s=s$, by 
\begin{align}
C_n^{fb,S}(\kappa,s) =  \sup_{{\cal P}_{[0,n]}^s(\kappa)}H^P(Y^n|s)- H(V^n|s) \label{FTFI_CIS_1}
\end{align}
where the set ${\cal P}_{[0,n]}^s(\kappa)$ is defined by 
\begin{align}
{\cal P}_{[0,n]}^s(\kappa) \tri \Big\{{P}_t(dx_t|x^{t-1}, y^{t-1},s), t=1,\ldots,n: \frac{1}{n} {\bf E}_{s}^{{P}}\Big( \sum_{t=1}^n \big(X_t\big)^2\Big) \leq \kappa    \Big\}\label{FTFI_CIS_2}
\end{align}
and where ${\bf E}_{s}^{{P}}$ means $S_1=S_1^s=s$ is fixed, and the joint distribution depends on the elements of ${\cal P}_{[0,n]}^s(\kappa)$.\\
Then the following hold.\\
(a) The $n-$FTFI capacity, for a fixed $S_1=s$  is characterized  by 
\begin{align}
C_n^{fb,S}(\kappa,s) =&  \sup_{\overline{\cal P}_{[0,n]}^{s,M}(\kappa)}\sum_{t=1}^n\Big\{H^{\overline{P}^M}(Y_t|Y^{t-1},s)-H(V_t|V^{t-1},s)\Big\} \label{is_dis_1_a}\\
 =&\sup_{\overline{\cal P}_{[0,n]}^{s,M}(\kappa)}\sum_{t=1}^n\Big\{H^{\overline{P}^M}(Y_t - {\bf E}\big\{Y_t\Big||Y^{t-1},s\big\}|s)-H(V_t-{\bf E}\big\{V_t\Big|V_t^{t-1},s\big\}|s)\Big\}
\end{align}
where the $\overline{\cal P}_{[0,n]}^{s,M}(\kappa)$ is defined by 
\begin{align}
&\overline{\cal P}_{[0,n]}^{s,M}(\kappa) \tri \Big\{\overline{P}_t^M(dx_t|s_{t}, y^{t-1},s), t=1,\ldots,n: \frac{1}{n+1} {\bf E}_{s}^{\overline{P}^M}\Big( \sum_{t=1}^n \big(X_t\big)^2\Big) \leq \kappa    \Big\} \label{is_dis_1}
\end{align}
and where (\ref{g_cp_3}) is respected,  $\overline{P}_t^M(dx_t|s_{t}, y^{t-1},s)$,  is  conditionally Gaussian, 
with linear conditional mean and nonrandom conditional covariance,  given by\footnote{The notation $S_t=S_t^s, t=2, \ldots, n$ means this sequence is generated from (\ref{real_1a}), when the initial state is fixed, $S_1=S_1^s=s$.}  
\begin{align}
&{\bf E}\Big\{X_t\Big|S_t^s,Y^{t-1}, S_1^s=s\Big\}=\left\{ \begin{array}{lll} \Lambda_t \Big(S_{t}^s - {\bf E}\Big\{S_t^s\Big|Y^{t-1}, S_1^s=s\Big\}\Big) & \mbox{for} & t=2, \ldots, n\\
0, & \mbox{for} & t=1,
\end{array} \right.  \label{mean_1_new}  \\
&K_{X_t|S_{t}^s, Y^{t-1},S_1^s=s}\tri cov\big(X_t, X_t\Big|S_t^s, Y^{t-1},S_1^s=s\big)= K_{Z_t} \succeq 0, \hso t=1, \ldots, n.\label{var_1_neq}
\end{align}
and  $H^{\overline{P}}(Y_t|Y^{t-1},s)$ is evaluated with respect to 
the probability distribution ${\bf P}_t^{\overline{P}^M}(dy_t|y^{t-1},s)$, defined by  
\begin{align}
 {\bf P}_t^{\overline{P}^M}(dy_t|y^{t-1},s) = \int {\bf P}_t(dy_t|x_t,s_t)\; {\bf P}_t^{\overline{P}^M}(dx_t|s_t, y^{t-1},s) \; {\bf P}_t^{\overline{P}^M}(ds_t|y^{t-1},s), \hso t=1, \ldots, n.  \label{eq_a_2_thm_new}
 \end{align}
(b) Define the conditional means and conditional covariance for a fixed $S_1=S_1^s=s$, by 
\begin{align}
K_{t}^{s} \tri & cov\big(S_t^s, S_t^s\Big|Y^{t-1},S_1^s=s) = {\bf E}_s^{\overline{P}^M}\Big\{\Big(S_t^s- \hat{S}_{t}^{s}\Big)\Big(S_t^s-\hat{S}_{t}^{s} \Big)^T\Big\}, \label{cond_GK_1_n_new}\\
 \hat{S}_{t}^{s} \tri & {\bf E}_s^{\overline{P}^M}\Big\{S_t^s\Big|Y^{t-1}, S_1^s=s\Big\},  \hso  t=2, \ldots, n, \hso K_1^{s} \tri  cov\big(S_1^s, S_1^s|S_1^s=s) =0, \hso \hat{S}_1^{s} \tri s. \label{cond_GK_2_n_new}
\end{align} 
The optimal channel input distribution of part (a) is induced by a jointly Gaussian process   $X^n$, with a realization given   by   
\begin{align}
&X_t=  \Lambda_{t}\Big(S_{t}^s- \widehat{S}_{t}^s\Big) + Z_t,\hso X_1=Z_1, \hso  t=2, \ldots, n, \label{cp_13_alt} \\
&  Z_t \in N(0, K_{Z_t}) \hso \mbox{independent of} \hso (S_1,X^{t-1},V^{t-1}, Y^{t-1}), \hso t=1, \ldots, n, \label{cp_8_al} \\
&Y_t= \Lambda_{t}\Big(S_{t}^s- \widehat{S}_{t}^s\Big) + Z_t +V_t, \hso t=1, \ldots, n, \label{cp_15_alt}\\
&\hso \: = \Lambda_{t}\Big(S_{t}^s- \widehat{S}_{t}^s\Big) +C_tS_{t}^s +N_t W_t  + Z_t,\\
&\frac{1}{n}  {\bf E}_{s}^{\overline{P}^M} \Big\{\sum_{t=1}^{n} (X_t)^2\Big\}=\frac{1}{n}\sum_{t=1}^n \Big( \Lambda_{t} K_{t}^s \Lambda_t^T +K_{Z_t}\Big) \leq \kappa  \label{cp_9_al}
\end{align} 
where $\Lambda_t$ is nonrandom. \\
The conditional means and conditional covariance $\widehat{S}_t^s$ and $K_t^s$ are given by  the generalized Kalman-filter, as follows equations.\\
(i)   $\widehat{S}_t^s$ satisfies the Kalman-filter recursion
\begin{align}
&\widehat{S}_{t+1}^s=A_t \widehat{S}_{t}^s+ F_t(K_t^s)I_t^s, \hso \widehat{S}_1=s, \\
& F_t(K_t^s) \tri \Big( A_t  K_{t}^s \big(\Lambda_t + C_t \big)^T +  B_t K_{W_t}N_t^T\Big)
\Big\{N_t K_{W_t}N_t^T+ K_{Z_t} + \big(\Lambda_t + C_t \big) K_{t}^s \big(\Lambda_t + C_t \big)^T \Big\}^{-1}, \\
&I_t^s \tri Y_t - C_t \widehat{S}_{t}^s= \big(\Lambda_t+C_t\big)\big(S_{t}^s- \widehat{S}_{t}^s\big) +N_tW_t + Z_t, \hso t=1, \ldots, n, \\
&I_t^s \in  N(0, K_{I_t^s}), \hso t=1, \ldots, n \hso \mbox{an orthogonal innovations process, i.e., $I_t^s$ is independent of}\nonumber \\
&\hst \hst \mbox{ $I_k^s$, for all $t \neq k$, and ${I}_t^s$ is independent of  $Y^{t-1}$},  \label{inn_po_2_nn_new}\\
&K_{Y_t|Y^{t-1},s}=K_{I_t^s} \tri cov\big(I_t,I_t|S_1^s=s\big)=  \Big(\Lambda_t +C_t\Big)K_t^s \Big(\Lambda_t +C_t\Big)^T + N_t K_{W_t}N_t^T + K_{Z_t}. \label{inno_PO_new} 
\end{align}
(ii) The error ${E}_t^s \tri S_t^s- \widehat{S}_t^s$ satisfies the recursion 
\begin{align}
{E}_{t+1}^s = &F_t^{CL}(K_t^s)E_t^s + \Big(B_t - F_t(K_t^s)N_t\Big)W_t- F_t(K_t^s) Z_t, \hso E_1^s= S_1^s-\widehat{S}_1^s=0, \hso t=1,\ldots, n, \\
F_t^{CL}(& K_t^s)\tri A_t- F_t(K_t^s)\Big(\Lambda_t+C_t\Big).
\end{align}
(iii)  $K_t^s={\bf E}\big\{E_t^s \big(E_t^s\big)^T\big\}$ satisfies the generalized DRE 
\begin{align}
K_{t+1}^s= &A_t K_{t}A_t^T  + B_tK_{W_t}B_t^T -\Big( B_tK_{W_t}N_t^T  + A_t  K_{t}^s\big(\Lambda_t + C_t \big)^T\Big)  \Big\{N_t K_{W_t} N_t^T+K_{Z_t}\nonumber \\
&+ \big(\Lambda_t + C_t \big) K_{t}^s\big(\Lambda_t + C_t \big) ^T\Big\}^{-1}\Big( B_t K_{W_t}N_t^T + A_t  K_{t}^s\big(\Lambda_t + C_t \big)\Big)^T,\hso   K_t^s\succeq 0, \hso K_1^s=0, \hso t=1, \ldots, n.   \label{rde_ykt}
\end{align}
(c) The characterization of the $n-$FTFI capacity of part (a)  is  
\begin{align}
 {C}_n^{fb,S}(\kappa,s)
  =& \sup_{  \big(\Lambda_{t}, K_{Z_t}\big), t=1, \ldots, n: \hso \frac{1}{n}  {\bf E}_{s}\big\{\sum_{t=1}^n \big(X_t\big)^2\big\}\leq \kappa }  \sum_{t=1}^n \log \frac{K_{Y_t|Y^{t-1},s}}{K_{V_t|V^{t-1},s}} \label{cp_11_alt_1}\\
 =& \sup_{  \big(\Lambda_{t}, K_{Z_t}\big), t=1, \ldots, n: \hso \frac{1}{n}  {\bf E}_{s}\big\{\sum_{t=1}^n \big(X_t\big)^2\big\}\leq \kappa }  \sum_{t=1}^n\Big\{H(I_t^s)-H(N_t W_t)\Big\}\\ 
=&  \sup_{  \big(\Lambda_{t}, K_{Z_t}\big), t=1, \ldots, n: \hso \frac{1}{n}\sum_{t=1}^n \big(\Lambda_t K_t^s \Lambda_t^T+K_{Z_t}\big)\leq \kappa }\frac{1}{2} 
\sum_{t=1}^n \log\Big( \frac{ \Big(\Lambda_t +C_t\Big)K_t^s \Big(\Lambda_t +C_t\Big)^T + N_t K_{W_t}N_t^T + K_{Z_t}     }{N_t K_{W_t}N_t^T}\Big).
 \label{cp_12_alt_1}
  \end{align}
  and the statements of part (b) hold.
\end{corollary}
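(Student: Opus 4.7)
The plan is to derive Corollary~\ref{cor_kim} as a consequence of Theorem~\ref{thm_SS} (via Corollary~\ref{cor_issr}), specialized to the situation in which assumption (A1) of Definition~\ref{ass_SSUN} forces $\Sigma_t^s = 0$ for every $t$. First I would exploit (A1) to collapse conditioning on the noise history into conditioning on the state: since $(V^{t-1},s)$ and $S^t$ uniquely determine each other, the identities
\begin{align}
{\bf P}_{X_t|X^{t-1},Y^{t-1},S_1=s} = {\bf P}_{X_t|V^{t-1},Y^{t-1},S_1=s} = {\bf P}_{X_t|S^{t},Y^{t-1},S_1=s}
\end{align}
hold, giving ${\cal P}_{[0,n]}^s(\kappa) = \overline{\cal P}_{[0,n]}^{s}(\kappa)$ where the latter is the set of distributions parametrized by $(s^t,y^{t-1},s)$. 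Applying the  converse coding theorem (as in Theorem~\ref{thm_ftfic}) together with the  maximum entropy principle conditioned on $(S_t^s, Y^{t-1},s)$ then establishes that the supremum in (\ref{FTFI_CIS_1}) is attained over conditionally Gaussian distributions with linear conditional mean and nonrandom conditional covariance, i.e., over $\overline{\cal P}_{[0,n]}^{s,M}(\kappa)$, which proves part (a). The reduction from $S^t$ to $S_t$ as the sufficient statistic follows because $S^n$ is Markov (state equation (\ref{real_1a})) and the channel output $Y_t$ depends on $S^t$ only through $S_t$, paralleling (\ref{PO_state_1}).

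For part (b), I would write the optimal mean as a linear function $\Lambda_t^{(1)} S_t^s + \Lambda_t^{(2)} \hat{S}_t^s$ of the two available variables, and argue that any component measurable with respect to $Y^{t-1}$ is absorbed into the feedback link without affecting the innovations entropy; hence the canonical choice $\Lambda_t^{(1)} = \Lambda_t,\ \Lambda_t^{(2)}=-\Lambda_t$ is without loss, yielding $X_t = \Lambda_t(S_t^s - \hat{S}_t^s) + Z_t$ as in (\ref{cp_13_alt}). The Kalman-filter recursions (i)-(iii) then follow by routine application of the standard generalized Kalman filter to the signal/observation pair
\begin{align}
S_{t+1}^s = A_t S_t^s + B_t W_t, \qquad Y_t = (\Lambda_t+C_t)\bigl(S_t^s-\hat{S}_t^s\bigr) + C_t\hat{S}_t^s + N_t W_t + Z_t,
\end{align}
after absorbing the $Y^{t-1}$-measurable term $C_t \hat{S}_t^s$ into the observation, exactly along the lines of Lemma~\ref{lemma_POSS} applied to this new state-space system. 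The noise cross-term $B_t K_{W_t} N_t^T$ enters the DRE (\ref{rde_ykt}) through the correlation between $W_t$ in the state equation and $W_t$ in the observation equation.

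For part (c), I would combine the entropy decomposition  (\ref{inn_a_intr})-(\ref{inn_intr}) with the observation that (A1) forces the state estimation error $\Sigma_t^s$ in Lemma~\ref{lemma_POSS} to satisfy $\Sigma_1^s=0$, from which the DRE (\ref{dre_1}) yields $\Sigma_t^s=0$ for all $t$ (this is made explicit in Remark~\ref{rem_ss_is} and Corollary~\ref{cor_POSS}). Therefore  $\hat{I}_t = N_t W_t$ and $H(V^n|s) = \tfrac12 \sum_{t=1}^n \log(2\pi e\, N_t K_{W_t} N_t^T)$. Substituting $K_{I_t^s} = (\Lambda_t+C_t) K_t^s (\Lambda_t+C_t)^T + N_t K_{W_t} N_t^T + K_{Z_t}$ from (\ref{inno_PO_new}) and the Gaussian entropy formula for both $H(I_t^s)$ and $H(N_t W_t)$ yields the closed-form characterization (\ref{cp_12_alt_1}).

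The main obstacle will be justifying the reduction from $(V^{t-1},s)$-measurability to $S_t$-measurability at the level of the optimal input distribution: although (A1) makes the two $\sigma$-algebras generated at the encoder equal, one must verify that restricting the input law from ${\bf P}_{X_t|S^t,Y^{t-1},s}$ to ${\bf P}_{X_t|S_t,Y^{t-1},s}$ (the Markov restriction) is lossless for the information measure. This is the analog of the sufficient-statistic step carried out in Theorem~\ref{thm_SS}, and it relies on the Markov property $S_{t} \leftrightarrow S^{t-1} \leftrightarrow (\hat{S}_t^s, Y^{t-1})$ combined with the observation that the channel output $Y_t$ depends on $S^t$ only through $S_t$. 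Once this sufficient-statistic step is in place, the remainder of the proof is pure  Kalman-filter algebra on the augmented state-observation pair.
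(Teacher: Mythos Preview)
Your proposal is correct and follows essentially the same approach as the paper. Both arguments use (A1) to pass from $(V^{t-1},s)$-conditioning to $S^t$-conditioning, invoke the Markov property of $S^n$ together with the fact that $Y_t$ depends on $S^t$ only through $S_t$ to reduce to the sufficient statistic $S_t$, apply the maximum entropy principle for Gaussianity, use the power-minimization/innovations-invariance argument (the paper calls this ``repeating step~2 of the derivation of Theorem~\ref{thm_SS}'') to obtain the form $X_t=\Lambda_t(S_t^s-\widehat{S}_t^s)+Z_t$, and conclude with standard generalized Kalman-filter algebra and $\Sigma_t^s=0$ to compute $H(V^n|s)$. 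The paper additionally remarks that the Markov sufficiency step can alternatively be obtained as a slight variation of \cite[Theorem~1]{yang-kavcic-tatikonda2007}, which addresses precisely the obstacle you flagged.
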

\begin{proof}  See Appendix~\ref{sect:app_cor_kim}.
\end{proof}

\ \

\begin{remark} Comments on the per unit time limit of ${C}_n^{fb,S}(\kappa,s)$\\
(b) The asymptotic analysis of $C^{fb,o}(\kappa)$ and  $C^{fb,o}(\kappa,s)$ of  Section~\ref{sect:as_an}, i.e., based on  Definition~\ref{def_limit_1},    applies naturally to  Corollary~\ref{cor_kim}, by considering   $C^{fb,S,o}(\kappa,s)$.
\end{remark}

In the next remark we clarify the relation of Corollary~\ref{cor_kim} and the analysis of  \cite{yang-kavcic-tatikonda2007} and \cite{kim2010}. \\

\begin{remark} Relations of Corollary~\ref{cor_kim} and \cite{yang-kavcic-tatikonda2007,kim2010}\\
(a) The problem analyzed in \cite{yang-kavcic-tatikonda2007} is precisely $C_n^{fb,S}(\kappa,s)$, when the noise is stationary and Gaussian, i.e., it corresponds to Case II) formulation. 
Corollary~\ref{cor_kim} is derived in \cite{yang-kavcic-tatikonda2007} for the degenerate case of a time-invariant realization of the noise $V^n$, i.e., of  Definition~\ref{ass_SSUN}. However, the asymptotic analysis of  \cite[Section~VI]{yang-kavcic-tatikonda2007} should be read with caution, because it  did not account for the necessary and/or sufficient conditions for convergence of the sequence $K_t^s, t=1,2,\ldots$ generated by the  time-invariant version of the  generalized DRE (\ref{rde_ykt})  i.e.,  $\lim_{n \longrightarrow \infty}K_n^s=K^\infty\succeq 0$, where $K^\infty \succeq 0$ is the unique and stabilizing solution of a corresponding generalized ARE.   \\
(b) The problem analyzed \cite{kim2010} that let to \cite[Theorem~6.1, $C_{FB}$]{kim2010}, is the per unit time limit of  $C_n^{fb,S}(\kappa,s)$, when the noise is stationary, two-sided or one-sided (asymptotically stationary) and Gaussian, i.e., it corresponds to Case II) formulation. The characterization of feedback capacity presented in \cite[Theorem~6.1, $C_{FB}$]{kim2010} presupposed the following hold ((i)-(iii) are also assumed in \cite[Section~VI]{yang-kavcic-tatikonda2007}). \\
(i) The feedback code is Definition~\ref{def_rem:cp_1}, i.e.,  $(s, 2^{n R},n)$.  \\
(ii) The noise is time-invariant and stable, and the PO-SS realization of the noise is  invertible, as presented in  Definition~\ref{ass_SSUN}.  \\
(iii) The definition of rate is $C^{fb,S,o}(\kappa,s)$, with supremum and per unit time limit interchanged, and the supremum taken over using time-invariant channel input distributions.  \\
(iv) the innovations covariance of the channel input process is zero, i.e., $K_{Z_t}=K_Z=0, \forall t$.\\ 
Items (i)-(iv)  are confirmed from \cite[Lemma~6.1]{kim2010} (and comments above), which is used to derive \cite[Theorem~6.1, $C_{FB}$]{kim2010}. \\
However, the characterization of feedback capacity in \cite[Theorem~6.1, $C_{FB}$]{kim2010}  should be read with caution, because the stabilizability condition is violated, due  to the claim by the author that $K_Z=0$ optimal.  By   Theorem~\ref{thm_ric}.(1), for the choice $K_Z= 0$,  the only unique and stabilizing solution of the generalized ARE presented in \cite[Theorem~6.1, $C_{FB}$]{kim2010}, is the zero solution, and hence  $C_{FB}=0$.\\
Prior literature \cite{liu-han2019,gattami2019,li-elia2019} should be read with caution, because many of the results are developed using \cite{kim2010}.  \\
 The above technical matters are discussed extensively  in \cite{charalambous-kourtellaris-loykaIEEEITC2019},  for the case of the AR$(c), c\in (-\infty, \infty)$, where it is also  shown that feedback does not increase capacity for $c\in (-1,1)$, i.e., for the stationary AR$(c)$ noise. 
\end{remark}

\section{Conclusion}
New equivalent sequential characterizations of the  Cover and Pombra \cite{cover-pombra1989}  ``$n-$block'' feedback capacity  formulas are derived using time-domain methods,  for  additive Gaussian noise (AGN) channels driven by nonstationary  Gaussian noise. The new feature of the equivalent characterizations are the representation of  the  optimal channel input process by a {\it sufficient statistic and  Gaussian orthogonal innovations process}. The sequential characterizations of the $n-$block feedback capacity formula are expressed  as a functional of   two generalized matrix difference Riccati equations (DRE) of filtering theory of Gaussian systems. The asymptotic analysis of the per unit time limit of the $n-$block'', called  feedback capacity, with limit and maximization operations interchanged, is also analyzed for time-invariant channel input distributions, using the tools from the theory of generalized matrix Riccati equations. 

From the analysis and derivation of the new sequential characterizations of feedback capacity follows that prior  analysis and characterizations of feedback capacity, such as, \cite{kim2010,liu-han2019,gattami2019,li-elia2019}, do not address  the  Cover and Pombra \cite{cover-pombra1989} feedback capacity problem, because the code definition and noise assumptions in \cite{kim2010,liu-han2019,gattami2019,li-elia2019} (even under the restriction of stationary noise)  are fundamentally different from those in \cite{cover-pombra1989}.   The paper  clarifies the several points of confusion,   which are  recurrent in the literature on these issues.

\section{Appendix}
\label{appendix}

\subsection{Proof of Theorem~\ref{thm_FTFI}}
\label{sect:app_thm_FTFI}

(a) Consider an element of $\overline{\cal E}_{[0,n]}(\kappa)$. Then the conditional entropies $H^{\overline{e}}(Y_t|Y^{t-1}), t=1,\ldots, n$ are defined, provided  the conditional distributions of $Y_t$ conditioned  on $Y^{t-1}$, i.e., ${\bf P}_t^{\overline{e}}(dy_t|y^{t-1})$,  for $t=1, \ldots, n$, are determined. By the reconditioning property of conditional distributions, then 
\begin{align}
 {\bf P}_t^{\overline{e}}(dy_t|y^{t-1}) = &\int {\bf P}_t^{\overline{e}}(dy_t|y^{t-1}, w, v^{t-1})\; {\bf P}_t^{\overline{e}}(dw,dv^{t-1}|y^{t-1}), \hso t=0, \ldots, n \\
=& \int {\bf P}_t(dy_t|\overline{e}_t(w,v^{t-1},y^{t-1}), v^{t-1})\;  {\bf P}_t^{\overline{e}}(dw,v^{t-1}|y^{t-1}),  \hst \mbox{by (\ref{kernel_2}), (\ref{enc_1})}. \label{eq_a_1}
 \end{align}
Hence,  (\ref{eq_a_1_thm}) is shown. Similarly, consider an element of $\overline{\cal P}_{[0,n]}(\kappa)$. Then the conditional entropies $H^{\overline{P}}(Y_t|Y^{t-1}), t=1,\ldots, n$ are defined, provided  the conditional distributions of $Y_t$ conditioned  on $Y^{t-1}$, i.e.,  ${\bf P}_t^{\overline{P}}(dy_t|y^{t-1})$ for $t=1, \ldots, n$, are determined. By (\ref{NCM-A.D_CD_C_CD_n}) and (\ref{NCM-A.D_CD_C_CD2_n}), then   (\ref{eq_a_2_thm}) is obtained.  Since $\overline{\cal E}_{[0,n]}(\kappa)\subseteq \overline{\cal P}_{[0,n]}(\kappa)$ it then follows the inequality  (\ref{dp_1_new}). \\
(b)  This part follows by  the maximum entropy principle of Gaussian distributions. That is, under the restriction (\ref{g_cp_3}), then   a conditional Gaussian  element of $\{\overline{P}(dx_t|v^{t-1},y^{t-1}), t=1, \ldots, n\}\in 
\overline{\cal P}_{[0,n]}(\kappa)$,  with linear conditional mean and nonrandom conditional covariance induces a  jointly Gaussian  distribution of the process   $(X^n,Y^n)$, such that the marginal distribution of $Y^n$ is jointly Gaussian. Below, we provide  alternative proof that   uses  the Cover and Pombra characterization of the $n-$FTFI capacity, given by (\ref{cp_6})-(\ref{cp_12}).  Consider (\ref{cp_6}) and   define the process
\begin{align}
\hso Z_1\tri& \overline{Z}_{1}- {\bf E}\Big\{ \overline{Z}_1 \Big\}, \\ 
Z_t\tri& \overline{Z}_{t}- {\bf E}\Big\{ \overline{Z}_t    \Big|X^{t-1}, V^{t-1}, Y^{t-1}\Big\},\hso t=2, \ldots, n,\\
=&\overline{Z}_t- {\bf E}\Big\{\overline{Z}_{t}\Big| V^{t-1}, Y^{t-1}\Big\}, \hst \mbox{since $X^{t-1}$ is uniquely defined by $(V^{t-1}, Y^{t-1})$.}\label{orthogonal_11_n}
\end{align} 
Then $Z_t$ is a Gaussian orthogonal innovations process, i.e., $Z_t$ is  independent of  $(X^{t-1}, V^{t-1}, Y^{t-1})$, for $t=2, \ldots, n$, and ${\bf E}\big\{Z_t\big\}=0,$ for $t=1, \ldots, n$. 
By (\ref{cp_6}), we re-write $X_t, t=1, \ldots, n$ as,
\begin{align}
X_t =&  \sum_{j=1}^{t-1} B_{t,j} V_j +\overline{Z}_t, \hso t=1, \ldots, n,\\
=& \sum_{j=1}^{t-1} B_{t,j} V_j +{\bf E}\Big\{\overline{Z}_{t}\Big| V^{t-1}, Y^{t-1}\Big\}  + Z_t, \hst \mbox{by (\ref{orthogonal_11_n})}\\
\sr{(a)}{=}&  \sum_{j=1}^{t-1} B_{t,j} V_j+ \overline{\Gamma}_t \left( \begin{array}{c}  {\bf V}^{t-1}\\ {\bf Y}^{t-1}\end{array} \right)+ Z_t, \hst \mbox{for some $\overline{\Gamma}_t$ nonrandom}\\
=&  \sum_{j=1}^{t-1} \Gamma_{t,j}^1 V_j+ \sum_{j=1}^{t-1}\Gamma_{t,j}^2   Y_j+ Z_t, \hst \mbox{for some $(\Gamma_{\cdot,\cdot}^1, \Gamma_{\cdot,\cdot}^2)$}\\
=&\Gamma_{t}^1 {\bf V}^{t-1}+ \Gamma_{t}^2   {\bf Y}^{t-1}+ Z_t, \hst \mbox{by definition} \label{eqn_in_1}
\end{align}
where $(a)$ is due to  the by joint Gaussianity of $({Z}^n, X^n, Y^n)$. From (\ref{eqn_in_1}) and  the  independence of $Z_t$ and  $(X^{t-1}, V^{t-1}, Y^{t-1})$, for $t=2, \ldots, n$,  it then  follows (\ref{mean_1}),  and also  (\ref{var_1}).\\
(c)  The statements follow directly from the representation of part (b), while the independence of $Z^n$ and $V^n$ is due to the code definition, i.e.,  Definition~\ref{def_code}.(iv).\\
(d) The statement follows from (a)-(c), and  (\ref{inn_a_intr})-(\ref{inn_intr}).

\subsection{Proof of Proposition~\ref{pr_ex_2}}
\label{sect:app_pr_ex_2}
(a) The covariances of the realization of the ARMA$(a,c)$ noise of Example~\ref{ex_1_1_n}.(b) satisfy the recursions
\begin{align}
K_{S_{t+1}}= c^2 K_{S_t} + K_{W}, \hst  K_{S_t,V_t}= \Big(c-a\Big) K_{S_t}, \hst K_{V_t}=\Big(c-a\Big)^2K_{S_t}+K_W, \hst    \forall t \in {\mathbb Z}.
\end{align}
If  the  recursion $K_{S_{t+1}}= c^2 K_{S_t} + K_{W}$  is initiated at the stationary value $K_{S_1}=d_{11} = \frac{K_W}{1- c^2}$, then  $K_{S_{t+1}}=d_{11}, \forall t=2,3, \ldots,$, and hence  $S_t, \forall t \in {\mathbb Z}$ is stationary, which then implies stationarity of   $V_t, \forall t \in {\mathbb Z}$. Hence, if (\ref{ex_2_6}) holds then  $(V_t,S_t), \forall t \in {\mathbb Z}$ is   stationary. By simple calculations it then follows (\ref{ex_2_7}). Similarly for the one-sided ARMA$(a,c)$.     (b) By the above covariances, for all $K_{S_1}\geq 0$, then  $\lim_{n\longrightarrow \infty} K_{S_n}= K_{S}^\infty$, where  $K_S^\infty= c^2 K_S^{\infty} +K_W$, which then implies $K_S^\infty= d_{11}$. Similarly, $\lim_{n\longrightarrow \infty} K_{S_n,V_n}= K_{S,V}^\infty=d_{12}$, $\lim_{n\longrightarrow \infty} K_{V_n}= K_{V}^\infty=d_{22}$. (c) (\ref{arma_s1}), (\ref{arma_s2}) follow from Lemma~\ref{lemma_POSS}, by replacing the conditioning information $V^{t-1}$ by $(V_0, V^{t-1})$ in   (\ref{cond_GK_1}) and (\ref{cond_GK_1}) . By mean-square-estimation, the initial data are, 
\begin{align}
\hat{S}_{1}=& {\bf E}\Big\{S_1\Big| V_0\Big\}= {\bf E}\Big\{c S_0 +W_0\Big| V_0\Big\}\nonumber \\
=&{\bf E}\Big\{cS_0+W_0\Big\}+cov(cS_0+W_0,V_0)\Big\{cov(V_0,V_0)\Big\}^{-1}\Big(V_0 -{\bf E}\Big\{V_0\Big\}\Big)
=\Big(c d_{12}+K_W\Big)d_{22}^{-1} V_0, \\
\Sigma_1 =& cov(S_1,S_1\Big|V_0)=cov(S_1,S_1)- \Big\{cov(S_1,S_1)\Big\}^2 \Big\{cov(V_0,V_0)\Big\}^{-1}=d_{11} -d_{11}^2 d_{22}^{-1}
\end{align}
The last part is obvious.

\subsection{Proof of Corollary~\ref{cor_nftfic_s}}
\label{sect:app_cor_nftfic_s}
(a) Since we have assumed $S_1=S_1^s=s$ is fixed, and known to the encoder and the decoder, then  Theorem~\ref{thm_FTFI} still holds, by    replacing all conditional distributions,  expectations and entropies, by corresponding expressions with fixed   $S_1=S_1^s=s$. Hence, (\ref{ftfic_is_g}) is replaced by   (\ref{CP_F_new_nn}), and (\ref{Q_1_3_s1}) is replaced by   (\ref{Q_1_3_s1_is}) (since the code is allowed to depend on $S_1=S_1^s=s$). (b) From the PO-SS realization  of  Definition~\ref{def_nr_2} with $S_1=S_1^s=s$ fixed, it follows that  a necessary condition  for Conditions 1 of  Section~\ref{sect:motivation} to hold is (i).  The expression of entropy   (\ref{cond_ent}) is  easily obtained by invoking condition (i),  and   properties of conditional entropy. That is, $H(V_1|S_1^s=s)=H(C_1 S_1^s+ N_1W_1|S_1^s=s)=H(N_1 W_1|S_1=s)= H(N_1 W_1)$ by independence of $W_1$ and $S_1^s$, and $H(V_2|V_1,s)= H(V_2|V_1,S_1^s=s)= H(C_2 S_2^s +N_2 W_2|C_1 S_1^s+N_1W_1,S_1^s=s)=H(C_2 S_2^s +N_2 W_2|N_1W_1,S_1^s=s)=H(C_2 A_1 S_1^s + C_2 B_1 W_1 +N_2 W_2|N_1W_1,S_1^s=s)=H(N_2 W_2|N_1W_1,S_1^s=s)=H(N_2 W_2)$, etc. This completes the proof.

\subsection{Proof of Theorem~\ref{thm_SS}}
\label{sect:app_thm_SS}
(a) Clearly, (\ref{ftfic_is})-(\ref{PO_tr_3}), follow directly from Theorem~\ref{thm_FTFI}, and  the preliminary calculations, prior to the statement of the theorem. However, (\ref{ftfic_is})-(\ref{PO_tr_3}) can also be shown independently of Theorem~\ref{thm_FTFI}, by invoking the maximum entropy property of Gaussian distributions, as follows.  By Lemma~\ref{lemma_POSS}, then   $H(V^n)=\sum_{t=1}^nH(\hat{I}_t)$.  By the maximum entropy principle, then  $H(Y^n)$ is maximized if ${\bf P}_{Y^n}$ is jointly Gaussian, the average power constraint holds, and (\ref{g_cp_3}) is respected. By (\ref{PO_state_1}),  (\ref{PO_tr_1}), (\ref{PO_tr_2}),   if (\ref{PO_tr_1})-(\ref{PO_tr_3}) hold, then $(X^n,Y^n)$ is jointly Gaussian, and hence $H(Y^n)$ is maximized. This shows  (a). \\
(b) 
Step 1. By (\ref{PO_tr_2_b})  and (\ref{PO_tr_2_c}), an alternative  representation of $X^n$ to the one given in Theorem~\ref{thm_FTFI}, and  induced by (\ref{Q_1_3_s1}),   is 
\begin{align}
&X_t = \Gamma_t^1 \hat{S}_t + \Gamma_t^2 Y^{t-1} +Z_t, \hso  t=1, \ldots, n, \label{inp_PO_1}\\
&Z_t \hso \mbox{satisfies (\ref{cp_8_al_n})}. \label{inp_PO_1_a}
\end{align}
for some nonrandom $(\Gamma_\cdot^1, \Gamma_\cdot^2)$.
Upon substituting (\ref{inp_PO_1}) into the channel output $Y^n$  we have 
\begin{align}
Y_t =& \Gamma_t^1 \hat{S}_t + \Gamma_t^2 Y^{t-1} +Z_t + V_t,\hso  t=1, \ldots, n \label{inp_PO_2}\\
=& \Big(\Gamma_t^1 +C_t\Big)  \hat{S}_t + \Gamma_t^2 Y^{t-1} +Z_t  + \hat{I}_t, \hso \mbox{by  (\ref{inn_po_1}).} \label{inp_PO_2_a}
\end{align}
The  right hand side of (\ref{inp_PO_2_a}) is driven by two independent processes, $Z_t, t=1, \ldots, n$ and $\hat{I}_t, t=1, \ldots, n$, which are also mutually independent. Further, the right hand side of (\ref{inp_PO_2_a}) is a linear function of a state process $\hat{S}_t, t=1, \ldots, n$, which satisfies the recursion (\ref{kal_fil_noise}):
\begin{align}
\hat{S}_{t+1}=A_{t} \hat{S}_{t}+ M_{t}(\Sigma_t)  \hat{I}_t, \hso \hat{S}_{1}=\mu_{S_1}, \label{kal_fil_noise_n} 
\end{align}
Note that the right hand side of (\ref{kal_fil_noise_n}) is driven by the orthogonal process  $\hat{I}_t$, which is  independent of $V^{t-1}$ and hence of $\hat{S}_t$, and also independent of $Y^{t-1}$. By (\ref{cp_8_al_n}), $Z_t$ is independent of $Y^{t-1}$ and of  $\hat{S}_t$. By (\ref{inp_PO_2_a}) and (\ref{kal_fil_noise_n}), it follows that $\widehat{\hat{S}}_t, t=1, \ldots, n$ satisfies a generalized Kalman-filter recursion, similar to that of  Lemma~\ref{lemma_POSS}. Hence, the entropy $H(Y^n)$ can be computed using the innovations process of $Y^n$, as in Lemma~\ref{lemma_POSS}. Define the orthogonal Gaussian innovations process ${I}^n$ of $Y^n$ by 
\begin{align}
{I}_t \tri & Y_t - {\bf E}\Big\{Y_t\Big|Y^{t-1}\Big\}, \hst t=1, \ldots, n\label{inp_PO_3_a}\\
 =&\Big(\Gamma_t^1 + C_t\Big)\Big(\hat{S}_t- \widehat{\hat{S}}_t\Big) + \hat{I}_t- {\bf E}\Big\{\hat{I}_t\Big|Y^{t-1}\Big\} +Z_t, \hso \mbox{by  (\ref{inp_PO_2_a}).}    \label{inp_PO_3}\\
 =&\Big(\Gamma_t^1 + C_t\Big)\Big(\hat{S}_t- \widehat{\hat{S}}_t\Big) + \hat{I}_t +Z_t, \hso \mbox{by $\hat{I}_t$ indep. of $Y^{t-1}$, ${\bf E}\Big\{\hat{I}_t\Big\}=0$  }    \label{inp_PO_3_n}
 \end{align}
The entropy of $Y^n$ is computed as follows.
\begin{align}
H(Y^n)=& \sum_{t=1}^n H(Y_t|Y^{t-1})\\
=& \sum_{t=1}^n H(I_t|Y^{t-1}), \hso \mbox{by (\ref{inp_PO_3_a}) and a property of conditional entropy} \\
=& \sum_{t=1}^n H(I_t), \hso \mbox{by orthogonality of $I_t$ and $Y^{t-1}$.} \label{inp_PO_4}
\end{align}
 By (\ref{inp_PO_3_n}) the Gaussian innovations process $I^n$ does not depend on the strategy $\Gamma^2$, and consequently by (\ref{inp_PO_4}) the entropy $H(Y^n)$ does not depend on the strategy $\Gamma^2$. \\
Step 2.  Let $
g_t(Y^{t-1})\tri  \Gamma_t^2 Y^{t-1},  t=1, \ldots, n.$ By (\ref{inp_PO_1}) and (\ref{inp_PO_1_a}), it then follows,
\begin{align}
\frac{1}{n}{\bf E}&\Big\{\sum_{t=1}^n (X_t)^2\Big\}
=\frac{1}{n}{\bf E}\Big\{\sum_{t=1}^n \Big( \Gamma_t^1 \hat{S}_t+g_t(Y^{t-1})+Z_t\Big)^2\Big\}\\ 
=&\frac{1}{n}{\bf E}\Big\{\sum_{t=1}^n \Big( \Gamma_t^1 \hat{S}_t+g_t(Y^{t-1})\Big)^2\Big\}+ \sum_{t=1}^n K_{Z_t}, \hso \mbox{by indep. of $Z_t$ and $(V^{t-1},\hat{S}_t, Y^{t-1})$}.\label{cost_1_g_n} 
\end{align}
By mean-square estimation theory, then the choice of $g(\cdot)$ that minimizes the right hand of (\ref{cost_1_g_n}) is
\begin{align}
 g_t(Y^{t-1})=g_t^*(Y^{t-1})= -\Gamma_t^1 {\bf E}\Big\{\hat{S}_t\Big|Y^{t-1}\Big\}=-\Gamma_t^1\widehat{\hat{S}}_t, \hso t=1, \ldots, n.
\end{align} 
  Hence, $\Gamma_t^2=-\Gamma_t^1, \forall t$. Let  $ \Lambda_t\tri \Gamma_t^1, \forall t$, and substitute into the recursion (\ref{inp_PO_2_a}), to obtain  (\ref{cp_15_alt_n_1}), and into the average power (\ref{cost_1_g_n}), to obtain  (\ref{cp_9_al_n}). Hence, the derivation of (\ref{cp_13_alt_n})-(\ref{cp_9_al_n}) is completed. \\
It then follows that (\ref{kf_m_1})-(\ref{kf_m_4_a}) are the generalized Kalman-filter recursions, of estimating the  new state process $\hat{S}_t, t=1, \ldots, n$ that  satisfies recursion (\ref{kal_fil_noise_n}),  from the channel output process $Y_t$ that satisfies the recursion (\ref{cp_15_alt_n_1}).\\
  (c) By parts (a), (b), and  the entropy of Gaussian RVs, upon  substituting  (\ref{inp_PO_4}), (\ref{entr_noise}),  into (\ref{ftfic_is}), we obtain part (c).
   This completes the proof.

\subsection{Proof of Proposition~\ref{pro_1}}
\label{sect:app_pro_1}
Since the proof of \cite[Theorem~6.1]{kim2010} is based \cite[Lemma~6.1]{kim2010}, where the   channel input $X_t$ is expressed as $X_t=\Lambda \Big(S_t - {\bf E}\Big\{S_t\Big|Y_{-\infty}^{t-1}\Big), t=1, \ldots$, where $\Lambda$ is a nonrandom vectors, then (\ref{eql_kim}) is a necessary for \cite[Theorem~6.1]{kim2010} to hold. Next, we show Conditions 1 and 2  of Section~\ref{sect:motivation} are necessary and sufficient for equality  (\ref{eql_kim}) to hold. To avoid complex notation, we prove the claim for the realization of  Example~\ref{ex_1_1_n}.(a). Suppose the {\it initial state} $S_1$ of the noise is $S_1=S_1^s=s$, and  known to the encoder and the decoder; without loss of generality take  $s=0$, which  by  (\ref{ex_2_3_in}) implies    $V_0=0, W_0=0$ (as often done in \cite{kim2010}).  Then, the following hold.  
\begin{align}
&S_1=0 \hso \Longrightarrow  \hso V_1= W_1,  \hso S_2= W_1=V_1, \hst \mbox{by (\ref{ex_2_4_in}), (\ref{ex_2_5_in})}, \label{int_kim_1} \\
&\mbox{$(S_1=0, V_1)$ uniquely define $S_2=cS_1+W_1 =W_1=V_1$}, \hst \mbox{by (\ref{ex_2_4_in}), (\ref{ex_2_5_in})}, \\
&V_2=\Big(c-a\Big)S_2 + W_2, \hso S_3=c S_2 +W_2=cV_1+W_2,  \hst \mbox{by (\ref{ex_2_4_in}), (\ref{ex_2_5_in})}, \\
&\mbox{$(S_1=0, V_1, V_2)$ uniquely define $(S_2, S_3)$}, \\
&\mbox{repeating, then  $(S_1=0, V_1, \ldots, V_{t-1})$ uniquely define $(S_2, S_3, \ldots, S_t)$, $\forall t=3, 4, \ldots$}. \label{int_kim_2}
\end{align}
From (\ref{int_kim_1})-(\ref{int_kim_2}) it then follows, that for any $S_1=s$, including, $s=0$,  known the the encoder that the equalities hold: 
\begin{align}
{\bf P}_{X_t|X^{t-1}, Y_{-\infty}^{t-1}, S_1}=&{\bf P}_{X_t|V^{t-1}, Y_{-\infty}^{t-1},S_1}, \hst \mbox{by $Y_t=X_t+V_t$}\\
= & {\bf P}_{X_t|S^{t}, Y_{-\infty}^{t-1},S_1}, \hst t =1, \ldots, \label{eq_kim_2}
\end{align}
We can go one step further to identify the information structure of optimal channel input distributions using (\ref{eq_kim_2}), that is, to show ${\bf P}_{X_t|S^{t}, Y_{-\infty}^{t-1},S_1}={\bf P}_{X_t|S_{t}, Y_{-\infty}^{t-1},S_1}, \hst t =1, \ldots$, by repeating to proof of \cite[Theorem~1]{yang-kavcic-tatikonda2007}. However, for the statement of the proposition  this is not necessary. \\
Suppose either  $S_1=s$ is not known  to the encoder, i.e., $V_0=v_0, W_0=w_0$ are not known to the encoder, and  $S_1\neq 0$, while   the optimal channel input is  expressed as a function of the state of the noise, $S^n$: 
\begin{align}
{\bf P}_{X_t|X^{t-1}, Y_{-\infty}^{t-1}}=&{\bf P}_{X_t|V^{t-1}, Y_{-\infty}^{t-1}}= {\bf P}_{X_t|S^{t}, Y_{-\infty}^{t-1}}, \hst t =1, \ldots,\label{eq_kim_3}
\end{align}
Then by (\ref{ex_2_4_in}) and (\ref{ex_2_5_in}), it follows that $V_1= \big(c-a\big)S_1 +W_1, S_2=a S_1 +W_1$, hence knowledge of $V_1$ does not specify $S_2$, and similarly, $V^{t-1}$ does not specify $S^t$, for $t=2,3, \ldots.$ Hence, we arrive at a  contradiction of the last  equality in (\ref{eq_kim_3}). This competes the proof.

\subsection{Proof of Lemma~\ref{lem_entr_in}}
\label{sect:app_lem_entr_in}
(a) This is due to Lemma~\ref{lemma_POSS}.(v).\\
(b) By taking the per unit time limit (\ref{entropy_0}), and  utilizing the hypothesis (\ref{conv_1}), the continuity of the $\log(\cdot)$ and the fact that, for any convergent sequence $a_n, n=1,2, \ldots$, i.e., $\lim_{n\longrightarrow \infty}a_n=a$,  then $\frac{1}{n}\sum_{t=1}^n a_n \longrightarrow a$, as $n \longrightarrow \infty$, follows  (\ref{in_entr_new}).  

\subsection{Proof of Lemma~\ref{lem_pr_are_ar}}
\label{sect:app_lem_pr_are_ar}
From Corollary~\ref{cor_ex_1}.(a) we deduce that $\Sigma_t^o \tri \Sigma_t, t=1, \ldots, n$ satisfies (\ref{ar_ac,DRE}) with initial condition (\ref{ar_ac,DRE_a}). By Definition~\ref{def:det-stab} the  corresponding  generalized algebraic  Riccati equation is (\ref{ar_ac_ARE}), and   pairs $\{A,C\}$ and $\{A^*, G B^{*,\frac{1}{2}}\}$ are given by   (\ref{ar_ac,DRE_c}).  \\
(1) By Definition~\ref{def:det-stab}, for $c \neq a$ the pair  $\{A,C\}=\{c, c-a\}$ is observable, and  hence detectable. \\
(2) By Definition~\ref{def:det-stab}, the pair $\{A^*, G B^{*,\frac{1}{2}}\}=\{a, 0\}$ is unit circle controllable if and only if $|a|\neq 1$. \\
(3) By Definition~\ref{def:det-stab}, the pair $\{A^*, G B^{*,\frac{1}{2}}\}=\{a, 0\}$ is stabilizable if and only if $a \in (-1,1)$. \\
(4) This follows from Theorem~\ref{thm_ric}.(1) and parts (1), (2) and (3). Since (\ref{ar_ac_ARE}) is a quadratic equation we can  verify the two solutions are $\Sigma^\infty=0$ and $\Sigma^\infty=\frac{K_W\big(a^2-1\big)}{\big(c-a\big)^2}$, and the statement of (\ref{are_2_sol}).  \\
(5) For values $c\in (-\infty,\infty)$ and $|a|<1$,  the pair $\{A,C\}=\{c, c-a\}$ is detectable and the  pair $\{A^*, G B^{*,\frac{1}{2}}\}=\{a, 0\}$ is stabilizable, and the statement follows from  Theorem~\ref{thm_ric}.(3).\\
(6) (\ref{en_r_ar_ac}) follows from   Lemma~\ref{lem_entr_in}.(b), by invoking  Corollary~\ref{cor_ex_1}.(a), i.e.,  $K_{\hat{I}_t}= \big(c-a\big)^2 \Sigma_t^o +K_{W}, t=1, \ldots, n$, where $\Sigma_t^o$ is generated by  (\ref{ar_ac,DRE}),  and part (5).

\subsection{Proof of Corollary~\ref{cor_kim}}
\label{sect:app_cor_kim}
First, note that the analog of Theorem~\ref{thm_FTFI}.(a), for the code $(s, 2^{n R},n)$,   $n=1,2, \ldots$ is  (\ref{FTFI_CIS_1}) and (\ref{FTFI_CIS_2}), because  $P_t(dx_t|x^{t-1}, y^{t-1},s)=\overline{P}_t(dx_t|v^{t-1}, y^{t-1},s), t=1, \ldots, n$. Define  $\overline{\cal P}_{[0,n]}^s(\kappa)$  as in  (\ref{FTFI_CIS_2}) with $x^{t-1}$ replaced by $v^{t-1}, t=1, \ldots, n$.\\
(a) Then 
\begin{align}
{P}_t(dx_t|x^{t-1}, y^{t-1},s)=&{P}_t(dx_t|v^{t-1}, y^{t-1},s_0), \hso t=1,\ldots,n, \hso \mbox{by $Y_t=X_t+V_t$}\\
=&\overline{P}_t(dx_t|s^{t}, y^{t-1},s), \hst \mbox{Definition~\ref{ass_SSUN}}. 
\end{align}
The  PO-SS realization, for fixed $S_1^s=s$ is  then 
\begin{align}
&V_t= C_t S_{t}^s +N_t W_t, \hso S_1^s=s, \hso t=1, \ldots, n,  \label{noise_is_1}\\
&S_{t+1}^s= A_t S_t^s + B_t W_t, \hso S_1^s=s.
\end{align}
 Then 
\begin{align}
{\bf P}_t(dy_t\big|x^t, y^{t-1},s) =&{\bf P}_t(dy_t\big|x^t,v^{t-1}, y^{t-1},s) \\
=&{\bf P}_t(dy_t\big|x^t,s^t, y^{t-1},s), \hst \mbox{by Definition~\ref{ass_SSUN}, (A1)}\\
 =&  {\bf P}_t(dy_t\big|x_t, s^{t}), \hst \mbox{by $Y_t=X_t+ V_t$ and (\ref{noise_is_1})}\\
 \sr{(a)}{=}&{\bf P}_t(dy_t|x_t,s_{t}, s),\hst \mbox{by mutually independence of $(W_1, \ldots, W_n, S_1^s)$}. \label{chan_st} 
\end{align}
%
The probability distribution ${\bf P}_t(dy_t|y^{t-1}, s)$ is then given by 
\begin{align}
 {\bf P}_t^{\overline{P}}(dy_t\big|y^{t-1}, s) = &\int {\bf P}_t(dy_t\big|x_t, s_t) {\bf P}_t(dx_t|s_t, y^{t-1},s) \nonumber \\
 &\otimes {\bf P}_t^{\overline{P}}(ds_t\big|y^{t-1},s), \hso t=1, \ldots, n, \hst \mbox{by reconditioning and (\ref{chan_st}).} \label{out_1}
 \end{align}
The pay-off is the sum of conditional entropies $\sum_{t=1}^n H(Y_t|Y^{t-1},s)$, and the constraint is (\ref{FTFI_CIS_2}). By  Definition~\ref{def_nr_2},  the state $S_t^s, t=2, \ldots, n$ is Markov, ${\bf P}_{S_t^s|S^{s,t-1}}={\bf P}_{S_t^s|S_{t-1}^s}, t=2, \ldots,n$. By (\ref{out_1}) and the Markov property of $S_t^s, t=1, \ldots, n$, follows at each time $t$, the input distribution ${\bf P}_t^{\overline{P}}(ds_t|y^{t-1},s)$ depends on ${\bf P}_j(dx_j|s_j, y^{j-1},s), j=1, \ldots, t-1$ and not on $\overline{P}_j(dx_j|s^{j}, y^{j-1},s), j=1, \ldots, t-1$. By (\ref{is_dis_1_a}) and (\ref{is_dis_1}),  then ${\bf P}_t(dx_t|s_t, y^{t-1},s)=\overline{P}_t^M(dx_t|s_t, y^{t-1},s), t=1, \ldots, n$. It is  noted that (\ref{is_dis_1_a}) and (\ref{is_dis_1}) also follow from a slight variation of the derivation given in   \cite[Theorem~1]{yang-kavcic-tatikonda2007}. By the maximum entropy principle of Gaussian distributions it then follows that the distribution  $\overline{P}_t^M(dx_t|s_{t}, y^{t-1},s)$,  is  conditionally Gaussian, 
with linear conditional mean and nonrandom conditional covariance. 
Then (\ref{mean_1_new})  follows by repeating  step 2 of the derivation of  
Theorem~\ref{thm_SS}. This completes the derivation of all statements of part (a).\\
(b), (c). The statements follow from part (a),  using the generalized Kalman-filter, as in  Theorem~\ref{thm_SS}.

\bibliographystyle{IEEEtran}
\bibliography{Bibliography_capacity}

\end{document}